\numberwithin{equation}{section} %% Comment out for sequentially-numbered
\numberwithin{figure}{section} %% Comment out for sequentially-numbered
\theoremstyle{plain}
\newtheorem{thm}{Theorem}[subsection]
\newtheorem{prop}[thm]{Proposition}
\newtheorem{cor}[thm]{Corollary}
\theoremstyle{definition}
\newtheorem{expl}[thm]{Example}
\newtheorem{rem}[thm]{Remark}
\newenvironment{nouppercase}{%
  \renewcommand{\uppercasenonmath}[1]{}}{}
\begin{document}
\begin{nouppercase}

\title{Differential Rings from Special
K\"ahler Geometry}
\author{Jie Zhou }
\address{Department of Mathematics, Harvard University, One Oxford Street, Cambridge, MA 02138, USA}
\email{jiezhou@math.harvard.edu}

%\date{\today}

\begin{abstract}
We study triples of graded rings defined over the deformation
spaces for certain one-parameter families of Calabi-Yau threefolds.
These rings are analogues of the rings of modular forms, quasi-modular forms and almost-holomorphic modular forms. We also discuss
some of their applications in solving the holomorphic anomaly
equations.
\end{abstract}
\maketitle

%%%%%%%%%%%%%%%%%%table of contents%%%%%%

\setcounter{tocdepth}{2}{\small \tableofcontents{}}{\small \par}

%%%%%%%%%%%%%%%%%%sec 1%%%%%%%%%%%%%%%%%%%
\section{\textup{Introduction}}

Modular forms are very interesting objects in mathematics and
physics due to their nice behaviors under the transformations in the
corresponding modular groups. Given a genus zero subgroup $\Gamma$
of finite index of the full modular group
$\Gamma(1)=\textrm{PSL}(2,\mathbb{Z})$, the generators of the ring
of modular forms for $\Gamma$ could be obtained starting from some
$\theta$ or $\eta$ functions or Eisenstein series. Alternatively,
one could parameterize the modular curve
$X_{\Gamma}=\Gamma\backslash \mathcal{H}^{*}$ by a Hauptmodul
$\alpha$ (generator of the rational functional field of the modular
curve), and consider the periods which are solutions to the
Picard-Fuchs equation attached to the elliptic curve family
$\pi_{\Gamma}: \mathcal{E}_{\Gamma}\rightarrow X_{\Gamma}$ 
constructed from the modular group $\Gamma$. Knowing the relation
between the Hauptmodul $\alpha$ and the normalized period $\tau$ of
the elliptic curve family, one could then obtain the graded
differential ring of quasi-modular forms $\widetilde{M}_{*}(\Gamma)$
by taking successive derivatives of the periods with respect to
$\tau$. This ring $\widetilde{M}_{*}(\Gamma)$ includes the ring of
modular forms $M_{*}(\Gamma)$ and contains further elements which
are not modular but quasi-modular \cite{Kaneko:1995}. The quasi-modular forms in $\widetilde{M}_{*}(\Gamma)$ could be completed to modular forms by
adding some non-holomorphic parts. Then
one gets the graded differential ring of almost-holomorphic modular
forms $\widehat{M}_{*}(\Gamma)$. See \cite{Kaneko:1995, Zagier:2008}
and references therein for details about the construction.

Similarly, as we shall show in this work, for certain one-parameter Calabi-Yau threefold family
$\pi:\mathcal{X}\rightarrow \mathcal{M}$, we can define a triple of
graded rings
$(\mathcal{R},\widetilde{\mathcal{R}},\widehat{\mathcal{R}})$. The
analogue of the modular variable $\tau$ is constructed using the
special K\"ahler geometry \cite{Strominger:1990pd} on the base
$\mathcal{M}$, and the periods are solved from the Picard-Fuchs
equation of the Calabi-Yau family. This triple share similar
structures and operations with the triple
$(M_{*}(\Gamma),\widetilde{M}_{*}(\Gamma), \widehat{M}(\Gamma))$
defined for the elliptic curve family $\pi_{\Gamma}:
\mathcal{E}_{\Gamma}\rightarrow X_{\Gamma}$. In particular, they have
gradings which play the role of modular weights.\\

These rings appeared in various forms in the studies of BCOV
holomorphic anomaly equations \cite{Bershadsky:1993ta,
Bershadsky:1993cx} and topological string partition functions for
Calabi-Yau threefold families \cite{Bershadsky:1993cx,
Yamaguchi:2004bt, Alim:2007qj, Alim:2013} (see also
\cite{Huang:2006si, Aganagic:2006wq, Grimm:2007tm, Hosono:2008ve,
Movasati:2011zz} for related works). In particular, the ring
$\widehat{\mathcal{R}}$ is the ring constructed by Yamaguchi-Yau in
\cite{Yamaguchi:2004bt} (see also \cite{Alim:2007qj}). In the recent
work \cite{Alim:2013}, it was shown that for some special
one-parameter Calabi-Yau threefold families, the normalized
topological string partition functions are polynomials in the
generators of the ring $\widehat{\mathcal{R}}$ and have weight zero.

For some particular one-parameter families of non-compact Calabi-Yau
threefolds with an elliptic curve sitting inside each fiber
\cite{Lian:1994zv, Klemm:1996, Lerche:1996ni, Chiang:1999tz,
Hori:2000kt}, the base $\mathcal{M}$ of the Calabi-Yau threefold
family $\pi:\mathcal{X}\rightarrow \mathcal{M}$ could be identified
with a modular curve $X_{\Gamma}$ \cite{Alim:2013} (see also
\cite{Klemm:1999gm, Aganagic:2006wq, Haghighat:2008gw}). The triple
of rings
$(\mathcal{R},\widetilde{\mathcal{R}},\widehat{\mathcal{R}})$ we
shall define are closely related to the known graded rings
$(M_{*}(\Gamma),\widetilde{M}_{*}(\Gamma), \widehat{M}(\Gamma))$.
This then allows one to express the coefficients and derivatives in
the holomorphic anomaly equations purely in the language of modular
form theory. Using the polynomial recursion technique \cite{Bershadsky:1993cx, Yamaguchi:2004bt, Alim:2007qj}, one can then
prove \cite{Alim:2013} by induction that the topological partition
functions, as solutions to the holomorphic anomaly equations, are
almost-holomorphic modular forms of weight $0$. Moreover, for each
of these particular non-compact Calabi-Yau threefold families, the
boundary conditions, used in e.g., \cite{Bershadsky:1993cx,
Bershadsky:1993ta, Huang:2006hq, Huang:2006si} to solve the
topological string partition functions, become regularity conditions
\cite{Alim:2013} at different cusps on the modular curve for the
holomorphic limits of the topological string partition functions
which are quasi-modular forms. Then, by using the Fricke involution
(also called Atkin--Lehner involution) which is an automorphism of
the corresponding modular curve, one could translate the regularity
conditions at different cusps to some conditions on the
$q_{\tau}=\exp 2\pi i \tau$ expansions of the quasi-modular forms.
In this way, solving the holomorphic anomaly equations becomes a
purely mathematical problem in which one needs to solve for certain
quasi-modular forms from the recursive equations and the regularity
conditions. In \cite{Alim:2013}, the first few topological string
partition functions for these special Calabi-Yau threefold families
were obtained genus by genus, and were expressed in terms of
generators of the ring of almost-holomorphic
modular forms.\\

Experiences from dealing with the non-compact Calabi-Yau threefolds
suggest that studying the arithmetic properties of the rings
$\mathcal{R},\widetilde{\mathcal{R}},\widehat{\mathcal{R}}$ might be
useful in solving the holomorphic
anomaly equations for more general Calabi-Yau families. This is the motivation of this work.\\

This paper is based on \cite{Alim:2013} and follows closely the
lines of thoughts in \cite{Ceresole:1992su, Ceresole:1993qq, Hosono:2008ve}. In particular,  in \cite{Hosono:2008ve}
some differential rings were constructed using special K\"ahler
geometry, and the parallelism to the quasi-modular forms and
almost-holomorphic modular forms was made. The main difference
between our present work and \cite{Hosono:2008ve} is that in the
latter work the rings were constructed from the rings of propagators
\cite{Bershadsky:1993cx, Yamaguchi:2004bt, Alim:2007qj}, while we
construct the rings from the periods and Yukawa couplings solved
from the Picard-Fuchs equations in such a way that the analogue
between these rings and the rings of modular objects is more
explicit. Many of the discussions in the current work are inspired
by the examples discussed in \cite{Hosono:2008ve}.

The structure of this paper is as follows. In section 2, we shall review
the preliminaries about modular curves, modular forms, Picard-Fuchs
equations, special K\"ahler geometry and BCOV holomorphic anomaly
equations, and set up the notations. In section 3, first we
reproduce the details in constructing the graded rings
$M_{*}(\Gamma),\widetilde{M}_{*}(\Gamma),\widehat{M}_{*}(\Gamma)$
for certain elliptic curve families. Then we construct by analogue
the graded rings $ \mathcal{R}, \widetilde{\mathcal{R}}$ for certain
one-parameter Calabi-Yau threefold families
$\pi:\mathcal{X}\rightarrow \mathcal{M}$. In section 4, we consider
these rings using the properties of the special K\"ahler geometry on
the deformation space $\mathcal{M}$. We make use of the canonical
coordinates and holomorphic limits to lift the ring
$\widetilde{\mathcal{R}}$ to a non-holomorphic ring
$\widehat{\mathcal{R}}$, by relating them to the ring constructed by
Yamaguchi-Yau \cite{Yamaguchi:2004bt}. The \textbf{main results} of
this paper are summarized in section 4.5. Some of their applications
in solving the BCOV holomorphic anomaly equations are also discussed
in section 4. We conclude with some discussions and questions in
section 5.

%%%%%sec Acknowledgements%%%%%%%%

\medskip{}

\textbf{Acknowledgements:} The author would like to thank his
advisor Shing-Tung Yau for many inspiring discussions and constant
support and encouragements. He also thanks Murad Alim, Emanuel
Scheidegger and Shing-Tung Yau for valuable collaborations and
discussions on related projects. He thanks Murad Alim in particular
for teaching him the polynomial recursion technique in solving the
BCOV holomorphic anomaly equations, and many other topics in string
theory. Thanks also go to Chen-Yu Chi, Yaim Cooper, Teng Fei,
Shinobu Hosono, Si Li, Hossein Movasati, Jan Stienstra and Don Zagier for helpful
discussions and correspondences. Part of the work is done while the
author was visiting the Department of Mathematics in National Taiwan
University and the Fields Institute in Toronto, he thanks them for
hospitality and financial support.

%%%%%%%%% sect 2 %%%%%%%%%%%%%%%%%%%%%%%%%
\section{\textup{Preliminaries and background}}

To make the paper self-contained, in this section we give a review
of some basic facts about modular groups and modular curves, modular
forms, Picard-Fuchs equations, special K\"ahler geometry and BCOV
holomorphic anomaly equations.

\subsection{Modular groups and modular curves}

The generators and relations for the group
$\textrm{SL}(2,\mathbb{Z})$ are given by the following:
\begin{equation*}
 T =
  \begin{pmatrix}
    1 & 1\\ 0 & 1
  \end{pmatrix}\,,\quad
  S=
  \begin{pmatrix}
    0& -1 \\ 1 & 0\\
  \end{pmatrix}\,,\quad S^{2}=-I\,,\quad (ST)^{3}=-I\,.
\end{equation*}
The groups we are interested in in this paper are the Hecke
subgroups of
$\Gamma(1)=\textrm{PSL}(2,\mathbb{Z})=\textrm{SL}(2,\mathbb{Z})
/\{\pm I\}$:
\begin{equation}
\Gamma_{0}(N)=\left\{ \left.
\begin{pmatrix}
a & b  \\
c & d
\end{pmatrix}
\right\vert\, c\equiv 0\,~ \textrm{mod} \,~ N\right\}< \Gamma(1)\,,
\end{equation}
with $N=2,3,4$. A further subgroup that we will consider is the
unique normal subgroup in $\Gamma(1)$ of index 2. it is generated by
$T^{2},T^{-1}S$ and is often denoted by $\Gamma_0(1)^*$. By abuse of
notation, we write it as $\Gamma_{0}(N)$ with $N=1^*$ when listing
it together with the groups $\Gamma_0(N)$.

The group $\textrm{SL}(2,\mathbb{Z})$ acts on the upper half plane
$\mathcal{H} = \{ \tau \in \mathcal{H} |\, \text{Im} \tau > 0 \}$ by
fractional linear transformations:
\begin{equation}
\tau \mapsto \gamma\tau=\frac{a\tau+b}{c\tau+d}\quad \text{for}
\quad \gamma=\begin{pmatrix} a&b\\c&d\end{pmatrix} \in
\textrm{SL}(2,\mathbb{Z})\,.
\end{equation}
The quotient space $Y_{0}(N)= \Gamma_{0}(N)\backslash \mathcal{H}$
is a non-compact orbifold with certain punctures corresponding to
the cusps and orbifold points corresponding to the elliptic points
of the group $\Gamma_{0}(N)$. By filling the punctures, one then
gets a compact orbifold
$X_{0}(N)=\overline{Y_{0}(N)}=\Gamma_{0}(N)\backslash
\mathcal{H}^{*}$ where $\mathcal{H}^* = \mathcal{H} \cup \{i\infty\}
\cup \mathbb{Q}$. The orbifold $X_0(N)$ can be equipped with the
structure of a Riemann surface. The signature for the group
$\Gamma_{0}(N)$ and the two orbifolds $Y_{0}(N),X_{0}(N)$ could be
represented by $\{p,\mu;\nu_{2},\nu_{3},\nu_{\infty}\}$, where $p$
is the genus of $X_{0}(N)$, $\mu$ is the index of $\Gamma_{0}(N)$ in
$\Gamma(1)$, and $\nu_{i}$ are the numbers of
$\Gamma_{0}(N)$-equivalent elliptic fixed points or parabolic fixed
points of order $i$. The signatures for the groups $\Gamma_{0}(N)$,
$N=1^*,2,3,4$ are listed in the following table (see
e.g.~\cite{Rankin:1977ab}):
\begin{equation*}\label{signature}
  \begin{array}[h]{|c|c|c|c|c|c|}
    \hline
    N & \nu_2 & \nu_3 & \nu_\infty & \mu & p\\
    \hline
    1^* & 0 & 1 & 2 & 2 & 0 \\
    2 & 1 & 0 & 2 & 3 & 0 \\
    3 & 0 & 1 & 2 & 4 & 0\\
    4 & 0 & 0 & 3 & 6 & 0\\
   \hline
  \end{array}
\end{equation*}
The space $X_{0}(N)$ is called a modular curve. When
$N=1^{*},2,3,4$, it has three singular points corresponding to the
above $\Gamma_{0}(N)$--equivalent fixed points. There is an action
called Fricke involution
\begin{equation}\label{Fricke}
W_{N}: \tau\mapsto -{1\over N\tau}
\end{equation}
on the modular curve $X_{0}(N)$. It exchanges the two cusp
classes\footnote{We use the notation $[\tau]$ to denote the
equivalence class of $\tau\in \mathcal{H}^{*}$ under the group
action of $\Gamma$ on $\mathcal{H}^{*}$.} $[i\infty]$ and $[0]$,
while fixing the fixed point.

More details about the basic theory could be found in e.g.
\cite{Rankin:1977ab,Diamond:2005,Zagier:2008}.

\subsection{Modular forms}

We proceed by recalling some basic concepts in modular form theory
following~\cite{Kaneko:1995} (see also \cite{Zagier:2008} for more
details). In the following, we shall use the notation $\Gamma$ for a
general subgroup of finite index in $\Gamma(1)$. In particular, we
can take $\Gamma$ to be the modular groups $\Gamma_{0}(N)$ described
above.

A modular form of weight $k$ with respect to the group $\Gamma$ is
defined to be a holomorphic function $f$ on $\mathcal{H}$ satisfying
\begin{equation}
f({a\tau+b\over c\tau+d} )=(c\tau+d)^{k}f(\tau),\,\forall~ \tau\in
\mathcal{H},\, \begin{pmatrix}
a & b  \\
c & d
\end{pmatrix}\in\Gamma\,.
\end{equation}
and growing at most polynomially in ${1\over \mathrm{Im}\tau}$ as
${1\over \mathrm{Im}\tau}\rightarrow 0$. One can also define modular
forms with characters, see for example \cite{Diamond:2005} for
details. The space of holomorphic modular forms for $\Gamma$ forms a
graded ring and is denote by $M_{*}(\Gamma)$.

A quasi-modular form of weight $k$ with respect to the group
$\Gamma$ is a holomorphic function satisfying the same growth
conditions but with the transformation properties replace by: there
exist holomorphic functions $f_{i}, i=0,1,2,3,\dots, k-1$ such that
\begin{equation}
f( {a\tau+b\over c\tau+d}
)=(c\tau+d)^{k}f(\tau)+\sum_{i=0}^{k-1}c^{i}(c\tau+d)^{k-i}f_{i}(\tau)\,,\quad
\forall \tau\in \mathcal{H},\, \begin{pmatrix}
a & b  \\
c & d
\end{pmatrix}\in \Gamma\,.
\end{equation}
The space of quasi-modular forms for $\Gamma$ forms a graded
differential ring and is denote by $\widetilde{M}_{*}(\Gamma)$.

An almost-holomorphic modular form is a function
$f(\tau,\bar{\tau})$ on $\mathcal{H}$ which satisfies the same
growth conditions as above and transforms as a modular form of
weight $k$:
\begin{equation}
f( {a\tau+b\over c\tau+d},\overline{{a\tau+b\over c\tau+d}}
)=(c\tau+d)^{k}f(\tau,\bar{\tau}),\quad \forall~ \tau\in
\mathcal{H},\quad
\begin{pmatrix}
a & b  \\
c & d
\end{pmatrix}\in \Gamma\,.
\end{equation}

It can be written in the form \cite{Kaneko:1995}
\begin{equation}\label{decompostionofalmosthol}
f(\tau,\bar{\tau})=\sum_{i=0}^{m=[k/2]}f_{m}(\tau)Y^{m},\, Y={1\over
\mathrm{Im}\tau}\,
\end{equation}
where $f_{m}(\tau)$ are holomorphic functions on $\mathcal{H}$ for
$m=0,1,2,\cdots [k/2]$. The space of almost-holomorphic modular
forms for $\Gamma$ forms a graded differential ring and is denote by
$\widehat{M}_{*}(\Gamma)$.

As shown in \cite{Kaneko:1995}, one has the ring isomorphism
$\widetilde{M}_{*}(\Gamma)=M_{*}(\Gamma)\otimes \mathbb{C}[E_{2}]$,
where $E_{2}$ is the Eisenstein series defined by
\begin{equation*}
E_{2}(\tau)=1-24\sum_{k=1}^{\infty}\sigma_{1}(k)q^{k},\, q=e^{2 \pi
i\tau},\, \sigma_{1}(k)=\sum_{d:d|k}d.
\end{equation*}
Moreover, there is a ring isomorphism
$\widehat{M}_{*}(\Gamma)\rightarrow \widetilde{M}_{*}(\Gamma)$
defined by $f(\tau,\bar{\tau})\mapsto f_{0}(\tau)$, where $f_{0}$ is
the function in (\ref{decompostionofalmosthol}). If one regards $Y$
as a formal variable, then this is the ``constant term map" obtained
by taking the limit $ Y={1\over \mathrm{Im}\tau}\rightarrow 0$
(which could be induced from $\bar{\tau}\rightarrow \infty$, by
thinking of $\bar{\tau}$ as a complex coordinate independent of
$\tau$). The inverse map takes a quasi-modular form to an
almost-holomorphic modular form, we shall call this map ``modular
completion" in this paper.

\begin{expl}
Take the group $\Gamma$ to be the full modular group
$\Gamma(1)=\textrm{PSL}(2,\mathbb{Z})$. Then
$M_{*}(\Gamma(1))=\mathbb{C}[E_{4},E_{6}]$, where $E_{4},E_{6}$ are
the familiar Eisenstein series defined by
\begin{eqnarray*}
E_{4}(\tau)&=&1+240\sum_{k=1}^{\infty}\sigma_{3}(k)q^{k},\, q=e^{2
\pi
i\tau},\, \sigma_{3}(k)=\sum_{d:d|k}d^{3}\,,\\
E_{6}(\tau)&=&1-504\sum_{k=1}^{\infty}\sigma_{3}(k)q^{k},\, q=e^{2
\pi i\tau},\, \sigma_{5}(k)=\sum_{d:d|k}d^{5}\,.
\end{eqnarray*}
The Eisenstein series $E_{2}$ is a quasi-modular form for
$\Gamma(1)$ since it transforms according to
\begin{equation*}
E_{2}({a\tau+b\over c\tau+d})=(c\tau+d)^{2}E_{2}(\tau)+{12\over 2\pi
i}c(c\tau+d),\quad \forall ~\tau\in \mathcal{H},\quad \forall
\begin{pmatrix}
a & b  \\
c & d
\end{pmatrix} \in \Gamma(1)\,.
\end{equation*}
Recall that
\begin{equation*}
{1\over \textrm{Im}{a\tau+b\over c\tau+d}}=(c\tau+d)^{2}{1\over
\mathrm{Im}\tau}-2ic(c\tau+d),\quad \forall~ \tau\in
\mathcal{H},\quad \forall
\begin{pmatrix}
a & b  \\
c & d
\end{pmatrix} \in \Gamma(1)\,.
\end{equation*}
we know the modular completion of the quasi-modular form
$E_{2}(\tau)$ is
\begin{equation*}
\hat{E_{2}}(\tau,\bar{\tau})=E_{2}(\tau)-{3\over
\pi\mathrm{Im}\tau}\,.
\end{equation*}
Then $M_{*}(\Gamma(1))=\mathbb{C}[E_{4},E_{6}],
\widetilde{M}_{*}(\Gamma(1))=\mathbb{C}[E_{2},E_{4},E_{6}],
\widehat{M}_{*}(\Gamma(1))=\mathbb{C}[\hat{E_{2}},E_{4},E_{6}]$. The
latter two carry differential structures given by
\begin{eqnarray*}
DE_{2}&=&{1\over 12}(E_{2}^{2}-E_{4}),\, DE_{4}={1\over
3}(E_{2}E_{4}-E_{6}),\,  DE_{6}={1\over
2}(E_{2}E_{6}-E_{4}^{2})\,,\\
\hat{D}\hat{E}_{2}&=&{1\over 12}(\hat{E}_{2}^{2}-E_{4}),\,
\hat{D}E_{4}={1\over 3}(\hat{E}_{2}E_{4}-E_{6}),\,
\hat{D}E_{6}={1\over 2}(\hat{E}_{2}E_{6}-E_{4}^{2})\,,
\end{eqnarray*}
where $D={1\over 2\pi i}{\partial\over \partial \tau}:
\widetilde{M}_{k}(\Gamma(1))\rightarrow
\widetilde{M}_{k+2}(\Gamma(1))$ and $\hat{D}=D+{k\over 12}\cdot
{-3\over \pi\mathrm{Im}\tau}:\widehat{M}_{k}(\Gamma(1))\rightarrow
\widehat{M}_{k+2}(\Gamma(1))$.
\end{expl}

\subsection{Picard-Fuchs equations for families of Calabi-Yau manifolds }

Throughout this work, we do not aim to give a complete discussion of
all Picard-Fuchs equations for Calabi-Yau families. Instead, the
examples we shall study are the Picard-Fuchs equations for some
special Calabi-Yau families given in (\ref{PFforonefold}),
(\ref{PFfornoncompactCY}) and (\ref{PFforcompactCY}) below. These
equations are considered in e.g., \cite{Lian:1994zv, Lian:1995,
Lian:1996, Klemm:1996, Lerche:1996ni, Chiang:1999tz} in which the
arithmetic properties like the integrality of the mirror maps and
modular relations are studied.  More general Picard-Fuchs equations
attached to Calabi-Yau families are considered in e.g.,
\cite{Lian:1995, Lian:1996, Verrill:2000, van:2006, Yang:2007,
Chen:2008}.

Consider a family $\pi: \mathcal{X}\rightarrow \mathcal{M}$ of
Calabi-Yau $n$--folds $\mathcal{X}=\{\mathcal{X}_{z}\}$ over a
variety $\mathcal{M}$ parametrized by the complex coordinate system
$z=\{z^{i}\}_{i=1,2,\cdots \dim \mathcal{M}}$. For a generic $z\in
\mathcal{M}$, the fiber $\mathcal{X}_{z}$ is a smooth Calabi-Yau
$n$--fold. We also assume that $\dim
\mathcal{M}=h^{1}(\mathcal{X}_{z},T\mathcal{X}_{z})$ for a smooth
$\mathcal{X}_{z}$, where $T\mathcal{X}_{z}$ is the holomorphic
tangent bundle of $\mathcal{X}_{z}$. The periods are defined to be
the integrals $\int_{C}\Omega_{z}$, where $C\in
H_{n}(\mathcal{X}_{z},\mathbb{Z})$ and $\Omega=\{\Omega_{z}\}$ is a
holomorphic section of the Hodge line bundle
$\mathcal{L}=\mathcal{R}^{0}\pi_{*}\Omega^{n}_{\mathcal{M}|{\mathcal{X}}}$
on $\mathcal{M}$. They satisfy a differential equation called the
Picard-Fuchs equation induced from the Gauss-Manin connection on the
Hodge bundle
$\mathcal{H}=\mathcal{R}^{n}\pi_{*}\underline{\mathbb{C}}\otimes
\mathcal{O}_{\mathcal{M}}$. In the following, we shall use the
notation $X$ to denote a generic fiber $\mathcal{X}_{z}$ in the
family without specifying the point $z$ and use $\Pi$ to denote a
period. We will also call the base $\mathcal{M}$ the
deformation space (of complex structures of $X$).\\

The families of Calabi-Yau one--folds that we shall focus on in this
work are the elliptic curve families $\pi_{\Gamma_{0}(N)}:
\mathcal{E}_{\Gamma_{0}(N)}\rightarrow
X_{0}(N)=\Gamma_{0}(N)\backslash \mathcal{H}^{*}$ with
$N=1^{*},2,3,4$, where $\mathcal{E}_{\Gamma_{0}(N)}$ is the surface \cite{Kodaira:1963, Shioda:1972}
defined by
\begin{eqnarray}
&\mathcal{E}_{\Gamma_{0}(N)}&:=(\Gamma_{0}(N)\rtimes \mathbb{Z}^{2})\backslash
\left(\mathcal{H}^{*}\times \mathbb{C}\right)\,, \\
&\left( \gamma=\begin{pmatrix}
a & b  \\
c & d
\end{pmatrix},(m,n)\right)&:(\tau,z)\mapsto \left({a\tau+b\over a\tau+d},{z+m\tau+n\over
c\tau+d}\right), \quad\forall ~\gamma\in \Gamma_{0}(N)\,.\nonumber
\end{eqnarray}
The explicit equations, $j$--invariants and Picard-Fuchs operators
of these families could be found in e.g., \cite{Lian:1994zv,
Klemm:1996}. In the following we shall only display the Picard-Fuchs
operators
\begin{equation}\label{PFforonefold}
\mathcal{L}_{\textrm{elliptic}}=\theta^{2}-\alpha
(\theta+1/r)(\theta+1-1/r),\quad \theta=\alpha{\partial\over
\partial \alpha}\,,
\end{equation}
where $r=6,4,3,2$ for $N=1^{*},2,3,4$, respectively. The parameter
$\alpha$ is the complex coordinate on the deformation space
$\mathcal{M}$ in which the Picard-Fuchs equation takes the above
particular form. Thinking of the base space $\mathcal{M}$ as the
genus zero modular curve $X_{0}(N)$, it is then a modular function
(called Hauptmodul) for the modular group $\Gamma_{0}(N)$. Each of
these Picard-Fuchs equations has three regular singularities located
at $\alpha=0,1,\infty$ on the corresponding modular curve. The two
points $\alpha=0,1$ are the cusp classes $[i\infty],[0]$
respectively, and are exchanged by the
Fricke involution $W_{N}$ given in (\ref{Fricke}).\\

We shall also consider the Picard-Fuchs equations for the mirror
families of the $K_{\mathbb{P}^{2}}$ and $K_{dP_{n}},n=5,6,7,8$,
where $dP_{n}$ is the del Pezzo surface obtained from blowing up
$\mathbb{P}^{2}$ at $n$ points. We take the Kahler structures of the
non-compact Calabi-Yau threefolds $K_{\mathbb{P}^{2}}, K_{dP_{n}},
n=5,6,7,8$ to be the ones induced from the hyperplane classes on the
$\mathbb{P}^{2}$. The explicit equations and Picard-Fuchs equations
of these families are given in \cite{Lerche:1996ni, Chiang:1999tz}.
The Picard-Fuchs equations have the following form:
\begin{equation}\label{PFfornoncompactCY}
\mathcal{L}_{\textrm{CY}}=\mathcal{L}_{\textrm{elliptic}}\circ
\theta=\left(\theta^{2}-\alpha
(\theta+1/r)(\theta+1-1/r)\right)\circ \theta\,,
\end{equation}
where $\mathcal{L}_{\textrm{elliptic}}$ and $\theta$ are the same as
those described in (\ref{PFforonefold}).

To make things more precise, we shall discuss the mirror Calabi-Yau
threefold family $\pi:\mathcal{X}\rightarrow \mathcal{M}$ of the
$K_{\mathbb{P}^{2}}$ family, constructed in \cite{Chiang:1999tz,
Hori:2000kt}. We refer the interested readers to
\cite{Lerche:1996ni, Alim:2013} and references therein for the
detailed discussion on other families. For each $\alpha$ on
$\mathcal{M}$, the fiber $\mathcal{X}_{\alpha}$ of the non-compact
Calabi-Yau threefold family is a conic fibration given by
\begin{equation*}
\mathcal{X}_{\alpha}: uv-H(x,y;\alpha)=0,\quad (u,v,x,y)\in
\mathbb{C}^{2}\times (\mathbb{C}^{*})^{2}\,,
\end{equation*}
where $H(x,y,\alpha)=y^{2}-(x+1)y-\alpha x^{3}$. The degeneration
locus of this conic fibration is the elliptic curve
$\mathcal{E}_{\alpha}: H(x,y;\alpha)=0$. Then as $\alpha$ moves in
$\mathcal{M}$, one gets an elliptic curve family
$\pi:\mathcal{E}\rightarrow \mathcal{M}$. It turns out that this
elliptic curve family is 3-isogenous to the Hesse cubic curve family
$\pi_{\Gamma_{0}(3)}: \mathcal{E}_{\Gamma_{0}(3)}\rightarrow
X_{0}(3)$, see e.g., \cite{Husemoller:2004} for details. Then we see
$\mathcal{M}\cong X_{0}(3)$. Moreover, the Picard-Fuchs equation for
the Calabi-Yau threefold family is \cite{Lerche:1996ni,
Chiang:1999tz}
\begin{equation*}
\mathcal{L}_{\textrm{CY}}=\mathcal{L}_{\textrm{elliptic}}\circ
\theta=\left(\theta^{2}-\alpha (\theta+1/3)(\theta+2/3)\right)\circ
\theta\,.
\end{equation*}
Near the point $\alpha=0$, one can find a basis of solutions to
$\mathcal{L}_{\textrm{CY}}$ given by
\begin{equation*}
X^{0}=1,\quad t\sim \log\alpha+\mathcal{O}(\alpha^{0}),\quad
F_{t}\,,
\end{equation*}
so that $\theta t,\frac{1}{2\pi i}\kappa^{-1}\theta F_{t}$ are the
periods $\omega_{0},\omega_{1}=\tau \omega_{0}$ of the elliptic
curve $\mathcal{E}_{\alpha}$, respectively. Then
\begin{equation*}\label{tauFtt}
\tau={\omega_{1}\over\omega_{0}}={{1\over 2\pi i}\kappa^{-1}\theta
F_{t}\over \theta t}={1\over 2\pi i}\kappa^{-1}F_{tt}\,.
\end{equation*}
where $\kappa=-{1\over 3}$ is the classical triple intersection of
$K_{\mathbb{P}^{2}}$.
 This quantity will prove to be the key of understanding the
arithmetic
properties of special K\"ahler geometry later.\\

The compact Calabi-Yau threefold family that we shall study is the
quintic mirror family \cite{Candelas:1990rm} whose Picard-Fuchs
equation is given by
\begin{equation}\label{PFforcompactCY}
\mathcal{L}=\theta^{4}-\alpha (\theta+1/5) (\theta+2/5) (\theta+3/5)
(\theta+4/5)\,,
\end{equation}
where $\alpha$ is related to the parameter $z$ in
\cite{Candelas:1990rm} by $\alpha=5^{5}z$.\\

For the Picard-Fuchs equations in (\ref{PFfornoncompactCY}),
(\ref{PFforcompactCY}), they all have three regular singularities
located at $\alpha=0,1,\infty$ on the base $\mathcal{M}$. The point
$\alpha=0$ on the deformation space $\mathcal{M}$ plays a special
role and is called the large complex structure limit of the
Calabi-Yau threefold family. The solutions to these equations could
be obtained using the Frobenius method and are related to
hypergeometric functions.

\subsection{Special K\"ahler geometry on deformation spaces}

The discussions on special K\"ahler geometry and holomorphic anomaly
equations in this section apply to multi-parameter Calabi-Yau
families.\\

Consider a family of Calabi-Yau threefolds $X$ given by
$\pi:\mathcal{X}\rightarrow \mathcal{M}$ as above. The base (called
deformation space above) is equipped with the Weil-Petersson metric
whose K\"ahler potential $K$ is determined from
\begin{equation}\label{Kahlerpotential}
e^{-K(z,\bar{z})}=i\int_{\mathcal{X}_{z}} \Omega_{z}\wedge
\overline{\Omega}_{z}\,,
\end{equation}
where $\Omega=\{\Omega_{z}\}$ is a section of the Hodge line bundle
$\mathcal{L}$.
 The metric
$G_{i\bar{\jmath}}=\bar{\partial}_{\bar{\jmath}}\partial_{i}K$ is
the Hodge metric induced from the Hermitian metric
$h(\Omega,\Omega)=i^{3}\int \Omega\wedge \overline{\Omega}$ on the
Hodge line bundle $\mathcal{L}$. This metric is called special
K\"ahler metric \cite{Strominger:1990pd, Freed:1999sm}. Among its
other properties, it satisfies the following ``special geometry
relation''
\begin{equation}\label{specialgeometryrelation}
-R_{i\bar{j}~l}^{~~k}=\partial_{\bar{\jmath}}\Gamma_{il}^{k}=\delta^{k}_{l}G_{i\bar{\jmath}}+\delta^{k}_{i}G_{l\bar{\jmath}}
-C_{ilm}\bar{C}_{\bar{\jmath}}^{mk},\quad
i,\bar{\jmath},k,l=1,2,\cdots \dim \mathcal{M}\,,
\end{equation}
where
\begin{equation}\label{yukawa}
C_{ijk}(z)=-\int_{\mathcal{X}_{z}}\Omega_{z}\wedge
\partial_{i}\partial_{j}\partial_{k}\Omega_{z}
\end{equation}
is the so-called \textbf{Yukawa coupling} and
$\bar{C}_{\bar{\jmath}}^{mk}=e^{2K}G^{k\bar{k}}G^{m\bar{m}}\overline{C}_{\bar{j}\bar{k}\bar{m}}$.
There is a natural covariant derivative $D_{i}$ acting on sections
of the Hodge bundle
$\mathcal{H}=\mathcal{R}^{3}\pi_{*}\underline{\mathbb{C}}\otimes
\mathcal{O}_{\mathcal{M}}$ which is the sum of the Chern connection
associated to the Weil-Petersson metric and the connection on
$\mathcal{L}$ induced by the Hermitian metric $h$. See
\cite{Strominger:1990pd, Freed:1999sm, Hosono:2008ve} for details on
this.

The Yukawa couplings will play a key role in the construction of the
rings $(\mathcal{R},\widetilde{\mathcal{R}},\widehat{\mathcal{R}})$
in this paper.

\subsection{Holomorphic anomaly equations}
The genus $g$ topological string partition function
$\mathcal{F}^{(g)}$ as defined in ~\cite{Bershadsky:1993ta,
Bershadsky:1993cx} is a section of the line bundle
$\mathcal{L}^{2-2g}$ over $\mathcal M$, it is shown to satisfy the
following holomorphic anomaly equation:
\begin{eqnarray}
\bar{\partial}_{\bar{\imath}} \partial_{j}\mathcal{F}^{(1)} &=&
\frac{1}{2} C_{jkl} \overline{C}^{kl}_{\bar{\imath}}+
(1-\frac{\chi}{24}) G_{j
\bar{\imath}}\,,\label{haeforgenusone}\\
\bar{\partial}_{\bar{\imath}} \mathcal{F}^{(g)} &=&{1\over 2}
\overline{C}_{\bar{\imath}}^{jk} \left( \sum_{r=1}^{g-1}
D_j\mathcal{F}^{(r)} D_k\mathcal{F}^{(g-r)} +
D_jD_k\mathcal{F}^{(g-1)} \right), \quad g\geq 2\,,
\end{eqnarray}
where $\chi$ is the Euler characteristic of the mirror manifold of
the Calabi-Yau threefold $X$. Any higher genus $\mathcal{F}^{(g)}$
can be determined recursively from these up to addition by a
holomorphic function.

A solution of these recursion equations is given in terms of Feynman
rules \cite{Bershadsky:1993cx}. The propagators $S^{ij}$, $S^i$, $S$
for these Feynman rules are defined by the following:
\begin{equation}
\partial_{\bar{\imath}} S^{ij}= \overline{C}_{\bar{\imath}}^{ij}, \qquad
\partial_{\bar{\imath}} S^j = G_{i\bar{\imath}} S^{ij}, \qquad
\partial_{\bar{\imath}} S = G_{i \bar{\imath}} S^i.
\label{prop}
\end{equation}
By definition, the propagators $S$, $S^i$ and $S^{ij}$ are sections
of the bundles $\mathcal{L}^{-2}\otimes \text{Sym}^m T\mathcal{M}$
with $m=0,1,2$, respectively. The vertices of the Feynman rules are
given by the functions $\mathcal{F}^{(g)}_{i_1\cdots
i_n}=D_{i_{1}}\cdots D_{i_{n}}\mathcal{F}^{(g)}$.

In~\cite{Yamaguchi:2004bt, Alim:2007qj} it was proved, using the
special geometry relation (\ref{specialgeometryrelation}) and the
holomorphic anomaly equation for genus one (\ref{haeforgenusone}),
that the holomorphic anomaly equations for $g\geq 2$ can be put into
the following form
\begin{equation}\label{polynomialsol}
\bar{\partial}_{\bar{i}}\mathcal{F}^{(g)}=\bar{\partial}_{\bar{i}}\mathcal{P}^{(g)}\,,
\end{equation}
where $\mathcal{P}^{(g)}$ is a polynomial of the generators
$S^{ij},S^{i},S,K_{i}$ with the coefficients being holomorphic
quantities. The non-holomorphicity of the topological string
partition functions only comes from these generators and the
anti-holomorphic derivative on the left-hand side of the holomorphic
anomaly equations can be replaced by derivatives with respect to
these generators. Moreover, the derivatives of the generators are
given by \cite{Alim:2007qj}
\begin{eqnarray}\label{polyring}
D_i S^{jk} &=&\delta_i^j S^k+\delta_{i}^{k}S^{j}
-C_{imn} S^{mj} S^{nk}  + h_i^{jk} \, , \nonumber\\
D_i S^j &=& -C_{imn}S^{m}S^{jn}+2\delta^j_i S +h_{i}^{jk}K_{k}+h_i^j\,,\nonumber\\
D_i S&=& -\frac{1}{2} C_{imn} S^m S^n+{1\over 2}h_{i}^{mn}K_{m}K_{n}+h_i^{m}K_{m}+h_{i} \, ,\nonumber\\
D_i K_j &=& -K_i K_j +C_{ijm} S^{mn}K_{n} -C_{ijm}S^{m}+ h_{ij} \, ,
\end{eqnarray}
where $h_i^{jk},h_i^j,h_{i},h_{ij}$ are holomorphic functions. These
holomorphic functions can not be uniquely determined since the above
equations are derived by integrating some equations
\cite{Alim:2007qj, Alim:2008kp}, hence (\ref{polyring}) does not
actually give a ring due to the existence of these holomorphic
functions. To make it a real ring, one needs to include of all of
the derivatives of these holomorphic ambiguities
\cite{Hosono:2008ve}. In \cite{Yamaguchi:2004bt, Alim:2013}, the
holomorphic functions are packaged together by making use of the
special K\"ahler geometry and are essentially parametrized by the
Yukawa couplings. Then one gets a differential ring
$\widehat{\mathcal{R}}$ with finite generators. It also has a nice
grading serving as the ``modular weights".

Note that (\ref{polynomialsol}) only determines the quantity
$\mathcal{F}^{(g)}$ up to addition by an holomorphic quantity $f^{(g)}$ called the genus $g$ holomorphic ambiguity
\begin{equation}
\mathcal{F}^{(g)}=\mathcal{P}^{(g)}(S^{ij},S^{i},S,K_{i})+f^{(g)}\,.
\end{equation}
Boundary conditions are needed to fix the holomorphic ambiguity
$f^{(g)}$, these includes the boundary condition at the large complex
structure limit and the gap condition at the conifold loci on
$\mathcal{M}$ \cite{Bershadsky:1993ta,
Bershadsky:1993cx,Ghoshal:1995wm,Huang:2006si,Huang:2006hq}.
Recently, in \cite{Alim:2013} it was observed that for the
one-parameter non-compact Calabi-Yau threefold family given in
(\ref{PFfornoncompactCY}), the deformation space $\mathcal{M}$ could
be identified with certain modular curve $X_{0}(N)$, the large
complex structure limit and conifold point are the two cusp classes
$[i\infty]$ and $[0]$. Moreover, the ring $\widehat{\mathcal{R}}$ of
propagators is essentially equivalent to the ring of
almost-holomorphic modular forms $\widehat{M}_{*}(\Gamma_{0}(N))$.
It follows by polynomial recursion \cite{Bershadsky:1993cx,
Yamaguchi:2004bt, Alim:2007qj} that the partition functions
$\mathcal{F}^{(g)}$ are almost-holomorphic modular functions (i.e.,
modular weights are zero). The boundary conditions then become
regularity conditions of these modular objects at the cusps. This
then allows one to get the topological string partition functions as
almost-holomorphic modular functions by recursively solving the
holomorphic anomaly equations with the boundary conditions genus by
genus.

%%%%%sec 3%%%%%%%%

\section{\textup{Differential rings from Picard-Fuchs
equations}}\label{ringsfromPF}

In this section, first we shall construct
$M_{*}(\Gamma_{0}(N)),\widetilde{M}_{*}(\Gamma_{0}(N))$ for the
family $\pi_{\Gamma_{0}(N)}:\mathcal{E}_{\Gamma_{0}(N)}\rightarrow
X_{0}(N)=\Gamma_{0}(N)\backslash \mathcal{H}^{*}$ using its
Picard-Fuchs equation. The procedure then motivates us to construct
similar rings $\mathcal{R},\widetilde{\mathcal{R}}$ for 
one-parameter Calabi-Yau threefold families $\pi:
\mathcal{X}\rightarrow \mathcal{M}$.

\subsection{Picard-Fuchs equations for elliptic curve families}

For the elliptic curve families
$\pi_{\Gamma_{0}(N)}:\mathcal{E}_{\Gamma_{0}(N)}\rightarrow
X_{0}(N), N=1^{*},2,3,4$ whose Picard-Fuchs equations are given by
(\ref{PFforonefold}), the rings of quasi-modular forms can be
constructed explicitly using periods of these elliptic curve
families, see e.g., \cite{Zagier:2008, Maier:2009, Alim:2013} and
references therein. More precisely, for the Picard-Fuchs equation
(\ref{PFforonefold}), we define
\begin{equation}
\beta:=1-\alpha\,,
\end{equation}
and choose a basis of solutions to be
\begin{equation}\label{periodsofcurve}
\omega_{0}=~_{2}F_{1}(1/r,1-1/r,1;\alpha),\quad \omega_{1}={i\over
\sqrt{N}}~_{2}F_{1}(1/r,1-1/r,1;\beta)\,,
\end{equation}
then one has
\begin{equation}
\tau={\omega_{1}\over
\omega_{0}}={i\over\sqrt{N}}{~_{2}F_{1}(1/r,1+1-r,1;\beta)\over
~_{2}F_{1}(1/r,1+1-r,1;\alpha)}\,,
\end{equation}
where $r=6,4,3,2$ for $N=1^{*},2,3,4$ respectively. One then defines
a triple following \cite{Borwein:1991, Berndt:1995, Maier:2009},
\begin{equation}
A=\omega_{0},\quad B=(1-\alpha)^{1\over r}A,\quad C=\alpha^{1\over
r}A\,.
\end{equation}
\begin{rem}
These functions $A,B,C$ are possibly multi-valued on the modular
curve $X_{0}(N)$ and have divisors being ${1\over r}(\alpha=\infty),
{1\over r}(\alpha=1), {1\over r}(\alpha=0)$, respectively. This fact
is not used in this paper, but is useful \cite{Alim:2013} in
analyzing the singularities of the topological string partition
functions as solutions to the holomorphic anomaly equations.
\end{rem}
Note that the quantities $A,B,C$ satisfy the equation
$A^{r}=B^{r}+C^{r}$. We now define further the quantity
\begin{equation}
E=\partial_{\tau}\log {C^{r}B^{r}},\quad \partial_{\tau}:={1\over
2\pi i}{\partial\over
\partial \tau}\,.
\end{equation}
It turns out that the ring generated by $A,B,C,E$ is closed under
the derivative $\partial_{\tau}$.
\begin{thm}\label{quasiringforcurvefamily}
For each of the elliptic curve families
$\pi_{\Gamma_{0}(N)}:\mathcal{E}_{\Gamma_{0}(N)}\rightarrow
X_{0}(N), N=1^{*},2,3,4$ with $r=6,4,3,2$ respectively, the
following identities hold:
\begin{eqnarray}
\partial_{\tau}A&=&{1\over 2r}A(E+{C^{r}-B^{r}\over A^{r}}A^{2})\,,\nonumber\\
\partial_{\tau}B&=&{1\over 2r}B(E-A^{2})\,,\nonumber\\
\partial_{\tau}C&=&{1\over 2r}C(E+A^{2})\,,\nonumber\\
\partial_{\tau}E&=&{1\over 2r}(E^{2}-A^{4})\,.
\end{eqnarray}
\end{thm}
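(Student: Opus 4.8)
The plan is to exploit the hypergeometric differential equation satisfied by $\omega_0$ together with the well-known identities relating the two hypergeometric solutions $\omega_0(\alpha)$, $\omega_0(\beta)$ to derive the action of $\partial_\tau$ on $\alpha$ and on $A=\omega_0$. First I would compute $\partial_\tau \alpha$. Since $\tau = \omega_1/\omega_0$, the Wronskian of the Picard-Fuchs operator $\mathcal{L}_{\textrm{elliptic}}$ in (\ref{PFforonefold}) gives $\omega_0 \theta\omega_1 - \omega_1\theta\omega_0 = c\,\alpha/(1-\alpha)$ for some constant $c$ depending on $N,r$ (computed from the indicial data at $\alpha=0$), hence
\begin{equation*}
\partial_\tau \alpha = \frac{1}{2\pi i}\,\frac{\partial_\alpha\alpha}{\partial_\alpha\tau} = \frac{1}{2\pi i}\cdot\frac{\alpha}{\theta\tau} = \frac{1}{2\pi i}\cdot\frac{\alpha\,\omega_0^2}{\omega_0\theta\omega_1-\omega_1\theta\omega_0} = \frac{1}{2\pi i\,c}\,(1-\alpha)\,\omega_0^2\,.
\end{equation*}
Pinning down the constant so that it matches $A,B,C$ is the first bookkeeping step; I expect $2\pi i\, c = r$ after normalizing the periods as in (\ref{periodsofcurve}), giving $\partial_\tau\alpha = \tfrac1r(1-\alpha)A^2 = \tfrac1r\,\beta A^2$. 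Note $\beta A^r = B^r$ and $\alpha A^r = C^r$, so $\beta A^2 = B^r/A^{r-2}$ and likewise for $\alpha$; these will let me rewrite everything in terms of $A,B,C$.

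Next I would differentiate $A=\omega_0$. Writing $\theta\omega_0 = \alpha\,\partial_\alpha\omega_0$ and using $\partial_\tau A = (\partial_\alpha A)(\partial_\tau\alpha) = \tfrac1r\beta A^2\cdot\tfrac{1}{\alpha}\theta A$, I need an expression for $\theta A/A$. The standard trick is that $\theta\log\omega_0$, for the hypergeometric solution, is an algebraic function of $\alpha$: indeed $\tfrac{\theta\omega_0}{\omega_0}$ satisfies a Riccati equation coming from $\mathcal{L}_{\textrm{elliptic}}\omega_0=0$, and combined with the logarithmic derivative relation one finds that $\partial_\tau\log A$ is a polynomial in $\alpha$ times $A^2$. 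Comparing with the asserted formula, I expect to verify
\begin{equation*}
\partial_\tau\log A = \frac{1}{2r}\Bigl(E + \frac{C^r-B^r}{A^r}A^2\Bigr)\,,
\end{equation*}
which, using $E = \partial_\tau\log(C^rB^r) = r\,\partial_\tau\log(\alpha\beta) + 2r\,\partial_\tau\log A$, is equivalent to the single scalar identity $r\,\partial_\tau\log(\alpha\beta) = -(C^r-B^r)A^{2-r}$, i.e. $\partial_\tau\log(\alpha\beta) = (\beta-\alpha)A^2/r = (\beta-\alpha)\,\partial_\tau\alpha/(\alpha\beta\,\text{-ish})$ — and this last is immediate from $\partial_\tau\alpha = \tfrac1r\beta A^2$ since $\partial_\tau\log\alpha + \partial_\tau\log\beta = \partial_\tau\alpha\,(1/\alpha - 1/\beta)$. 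So the whole content of the $A$-equation reduces to the formula for $\partial_\tau\alpha$ established in step one.

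The $B$ and $C$ equations then follow formally: $B = \beta^{1/r}A$ gives $\partial_\tau\log B = \tfrac1r\partial_\tau\log\beta + \partial_\tau\log A$, and $\partial_\tau\log\beta = -\partial_\tau\alpha/\beta = -\tfrac1r A^2$; substituting the $A$-formula and simplifying via $E + (C^r-B^r)A^{2-r}/1 = \dots$ should collapse to $\partial_\tau B = \tfrac1{2r}B(E-A^2)$, and symmetrically (with $\partial_\tau\log\alpha = +\tfrac1r A^2\cdot(\beta/\alpha)\cdot\alpha = $ wait — $\partial_\tau\log\alpha = \partial_\tau\alpha/\alpha = \tfrac1r(\beta/\alpha)A^2$, and $(\beta/\alpha)A^2 = (B^r/A^r)(A^r/C^r)A^2\cdot\alpha/\alpha$; more simply $\beta A^2/\alpha$, and one checks $E - r\partial_\tau\log\alpha\cdot$ combinatorics yields $+A^2$) one gets $\partial_\tau C = \tfrac1{2r}C(E+A^2)$. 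Finally $\partial_\tau E$: from $E = \partial_\tau\log(C^rB^r)$ I have $\partial_\tau E = r\,\partial_\tau^2\log(CB) = r\,\partial_\tau\bigl(\partial_\tau\log C + \partial_\tau\log B\bigr)$, and by the $B,C$ equations $\partial_\tau\log C + \partial_\tau\log B = \tfrac1{r}E$, so $\partial_\tau E = \partial_\tau E$ — that is circular, so instead I differentiate $\partial_\tau\log B = \tfrac1{2r}(E-A^2)$ directly: $\partial_\tau^2\log B = \tfrac1{2r}(\partial_\tau E - 2A\,\partial_\tau A)$, while the left side also equals $\partial_\tau(\tfrac1{2r}(E-A^2))$; using $\partial_\tau A = \tfrac1{2r}A(E + (C^r-B^r)A^{2-r})$ and the algebraic relation $A^r = B^r+C^r$ differentiated ($r A^{r-1}\partial_\tau A = rB^{r-1}\partial_\tau B + rC^{r-1}\partial_\tau C$, which must be consistent — this consistency check is essentially where $A^r=B^r+C^r$ earns its keep) one solves for $\partial_\tau E$ and obtains $\tfrac1{2r}(E^2-A^4)$.

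The main obstacle I anticipate is step one: getting the Wronskian constant exactly right with the specific normalization $\omega_1 = \tfrac{i}{\sqrt N}\,{}_2F_1(1/r,1-1/r,1;\beta)$, including the factor of $2\pi i$, so that the coefficient $\tfrac{1}{2r}$ comes out cleanly in all four identities simultaneously. This requires the connection formula expressing ${}_2F_1(\cdot,\cdot,1;\beta)$ near $\alpha=0$ in terms of ${}_2F_1(\cdot,\cdot,1;\alpha)$ and its logarithmic companion (Gauss' relation, with Gamma-function factors), and checking that $\sqrt N$ is precisely the factor that makes $q_\tau = e^{2\pi i\tau}$ have the correct leading behavior — equivalently that $\tau$ is the genuine period ratio normalized as the modular parameter. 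Once that constant is fixed, everything else is the routine algebra sketched above, driven entirely by the Fermat-type relation $A^r=B^r+C^r$ and the definition of $E$.
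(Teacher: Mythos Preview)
Your strategy---establish $\partial_\tau\alpha$ via the Wronskian of $\mathcal{L}_{\textrm{elliptic}}$, then derive the four identities from it---is exactly the paper's (see Proposition~\ref{DalphabyPF} and the discussion leading to (\ref{DBC}), (\ref{DA}), (\ref{DE})). But your execution has two problems, one cosmetic and one substantive.

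First, bookkeeping. Dividing $\mathcal{L}_{\textrm{elliptic}}$ by $1-\alpha$ and using $c_1+c_2=1$ gives $\theta^2-\tfrac{\alpha}{\beta}\theta-\tfrac{\alpha c_1c_2}{\beta}$, so the Wronskian satisfies $\theta\log W=\alpha/\beta=-\theta\log\beta$, hence $W=c/\beta$, not $c\alpha/\beta$. The constant is fixed by the leading behavior at $\alpha=0$ to be $c=1/(2\pi i)$ (it does not involve $r$), and the correct relation is $\partial_\tau\alpha=\alpha\beta A^2$ with \emph{no} factor of $1/r$. The $1/(2r)$ in the theorem enters only through the $r$th roots in $B,C$: since $C^rB^r=\alpha\beta A^{2r}$ and $C^r/B^r=\alpha/\beta$, one has $E=\partial_\tau\log(\alpha\beta)+2r\,\partial_\tau\log A$ (not $r\,\partial_\tau\log(\alpha\beta)$ as you wrote) and $A^2=\partial_\tau\log(\alpha/\beta)$; adding and subtracting gives the $B$- and $C$-equations immediately, and the $A$-equation follows from $A^r=B^r+C^r$, just as you argue. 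The paper in fact does $B,C$ first and $A$ last for this reason---no Riccati analysis of $\theta A/A$ is needed.

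Second, a genuine gap: your route to $\partial_\tau E$ cannot close. The $A,B,C$-equations together with $A^r=B^r+C^r$ encode only the Wronskian (first-order) information from $\mathcal{L}_{\textrm{elliptic}}$, so any attempt to extract $\partial_\tau E$ from them alone is circular---if you differentiate $A^r=B^r+C^r$ and insert the three formulas you get $0=0$, and differentiating $\partial_\tau\log B=\tfrac{1}{2r}(E-A^2)$ gives back the same expression on both sides. The $E$-equation genuinely requires the full second-order Picard--Fuchs equation for $A=\omega_0$. The paper writes $\beta\,\theta^2A-\alpha\,\theta A-c_1c_2\alpha A=0$, converts $\theta$ to $\partial_\tau$ via $\partial_\tau\alpha=\alpha\beta A^2$, and uses $E=2r\,\partial_\tau\log A-(\alpha-\beta)A^2$ to turn this second-order equation for $A$ into a first-order equation for $E$; the remaining algebra (with $c_1c_2=\tfrac{1}{r}(1-\tfrac{1}{r})$ and $\alpha+\beta=1$) collapses to $\partial_\tau E=\tfrac{1}{2r}(E^2-A^4)$.
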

\begin{proof}
 It follows from (\ref{DBC}), (\ref{DA}),
 (\ref{DE}) below.
\end{proof}

The ring generated by $A,B,C,E$ has a obvious grading denoted by $k$
below: the gradings assigned to $A,B,C,E$ are $1,1,1,2$, taking the
derivative $\partial_{\tau}$ will increase the grading by $2$.
Similar to the full modular group case, one gets the following

\begin{thm}\label{almostholringforcurvefamily}
For each of the elliptic curve families
$\pi_{\Gamma_{0}(N)}:\mathcal{E}_{\Gamma_{0}(N)}\rightarrow
X_{0}(N), N=1^{*},2,3,4$ with $r=6,4,3,2$ respectively, define
$\hat{E}=E+{r\over 6}{-3\over \mathrm{Im}\tau}$ and
$\hat{\partial_{\tau}}=\partial_{\tau}+{k\over 12}{-3\over
\mathrm{Im}\tau}$, then the following identities hold:
\begin{eqnarray}
\hat{\partial_{\tau}}A&=&{1\over 2r}A(\hat{E}+{C^{r}-B^{r}\over A^{r}}A^{2})\,,\nonumber\\
\hat{\partial_{\tau}}B&=&{1\over 2r}B(\hat{E}-A^{2})\,,\nonumber\\
\hat{\partial_{\tau}}C&=&{1\over 2r}C(\hat{E}+A^{2})\,,\nonumber\\
\hat{\partial_{\tau}}\hat{E}&=&{1\over 2r}(\hat{E}^{2}-A^{4})\,.
\end{eqnarray}
\end{thm}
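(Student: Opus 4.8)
The plan is to deduce the non-holomorphic system from the holomorphic one in Theorem \ref{quasiringforcurvefamily} by a formal substitution argument, exactly as in the passage from the $D$-system to the $\hat{D}$-system for $\Gamma(1)$ recalled in the Example above. First I would observe that $A,B,C$ are genuine modular objects (they carry only a modular weight, no anomalous term), so their modular completions coincide with themselves: $\hat A = A$, $\hat B = B$, $\hat C = C$. Only $E$, which is built from $\partial_\tau \log(C^r B^r)$ and hence is the analogue of (a multiple of) $E_2$, acquires a correction, namely $\hat E = E + \tfrac{r}{6}\cdot\tfrac{-3}{\mathrm{Im}\,\tau}$. Introduce the shorthand $\Lambda := \tfrac{-3}{\mathrm{Im}\,\tau}$, so $\hat E = E + \tfrac{r}{6}\Lambda$ and $\hat\partial_\tau = \partial_\tau + \tfrac{k}{12}\Lambda$ on a form of weight $k$.

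The key computational input is the $\tau$-derivative of $\Lambda$ itself: from the standard identity $\partial_\tau \big(\tfrac{1}{\mathrm{Im}\,\tau}\big) = \tfrac{1}{2\pi i}\cdot \tfrac{1}{2i}\cdot \big(\tfrac{1}{\mathrm{Im}\,\tau}\big)^2 \cdot(-1)$, or more cleanly from $\partial_\tau \big(\tfrac{1}{2i\,\mathrm{Im}\,\tau}\big) = -\big(\tfrac{1}{2i\,\mathrm{Im}\,\tau}\big)^2$, one gets that $\Lambda$ satisfies $\partial_\tau \Lambda = \tfrac{1}{12}\Lambda^2 \cdot c$ for the appropriate constant; I would fix the constant so that it is consistent with $\hat\partial_\tau \hat E = \tfrac{1}{2r}(\hat E^2 - A^4)$ reducing to $\partial_\tau E = \tfrac{1}{2r}(E^2-A^4)$ as $\Lambda \to 0$. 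Concretely, one checks
\begin{equation*}
\hat\partial_\tau \hat E = \partial_\tau E + \tfrac{r}{6}\partial_\tau\Lambda + \tfrac{k+2}{12}\Lambda\big(E + \tfrac{r}{6}\Lambda\big),
\end{equation*}
where $\hat E$ has weight $k=2$, and substituting $\partial_\tau E = \tfrac{1}{2r}(E^2-A^4)$ one expands $\tfrac{1}{2r}(\hat E^2 - A^4) = \tfrac{1}{2r}(E^2 - A^4) + \tfrac{1}{2r}\big(\tfrac{r}{3}\Lambda E + \tfrac{r^2}{36}\Lambda^2\big)$; matching the $\Lambda E$ and $\Lambda^2$ terms pins down $\partial_\tau \Lambda = \tfrac{1}{12}\Lambda^2$ (equivalently $\partial_{\bar\tau}$-type relation), which is precisely the known derivative of $\tfrac{-3}{\mathrm{Im}\,\tau}$.

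With that one relation in hand, the three equations for $A,B,C$ are immediate: since $\hat A=A$ etc., the left side $\hat\partial_\tau A = \partial_\tau A + \tfrac{k}{12}\Lambda A$ with $k=1$, and on the right side replacing $E$ by $\hat E = E + \tfrac{r}{6}\Lambda$ produces an extra $\tfrac{1}{2r}A\cdot\tfrac{r}{6}\Lambda = \tfrac{1}{12}\Lambda A$, which exactly accounts for the shift $\tfrac{k}{12}\Lambda A$ with $k=1$; the same bookkeeping works for $B$ and $C$. So the whole theorem reduces to (i) the holomorphic identities of Theorem \ref{quasiringforcurvefamily}, which we may assume, and (ii) the single elementary fact $\partial_\tau\big(\tfrac{-3}{\mathrm{Im}\,\tau}\big) = \tfrac{1}{12}\big(\tfrac{-3}{\mathrm{Im}\,\tau}\big)^2$. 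I expect no real obstacle here; the only thing to be careful about is tracking the weights $k$ consistently ($A,B,C$ have weight $1$, $\hat E$ has weight $2$) so that the anomalous pieces cancel on the nose, and making sure the normalization constant in $\partial_\tau\Lambda$ is the one forced by the definition $\partial_\tau = \tfrac{1}{2\pi i}\partial_\tau$.
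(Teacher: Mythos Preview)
Your approach is essentially the paper's: both reduce the theorem to Theorem~\ref{quasiringforcurvefamily} by noting that the first three identities hold automatically once the weight shift $\tfrac{k}{12}\Lambda$ on the left is matched against the replacement $E\mapsto \hat E$ on the right, and then use the fourth identity to pin down (the paper) or verify (you) the constant in $\hat E-E$. Two small bookkeeping slips to correct: the operator acting on $\hat E$ of weight $k=2$ contributes $\tfrac{k}{12}\Lambda\hat E=\tfrac{1}{6}\Lambda\hat E$, not $\tfrac{k+2}{12}\Lambda\hat E$; and with that fixed, matching the $\Lambda^2$ terms gives $\partial_\tau\Lambda=-\tfrac{1}{12}\Lambda^2$ (with a minus sign), which is indeed what one obtains directly from $\partial_\tau\big(\tfrac{-3}{\pi\,\mathrm{Im}\tau}\big)$.
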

\begin{proof}
Assume that the desired non-holomorphic quantity $\hat{E}$ is given
by $\hat{E}=E+ \Delta E$ with $\Delta E=\lambda{-3\over
\mathrm{Im}\tau}$ for some constant $\lambda$. Then it is easy to
see with $\hat{\partial_{\tau}}=\partial_{\tau}+{k\over 2r}\Delta
E$, the first three identities follow from Theorem
\ref{quasiringforcurvefamily}. Solving the constant $\lambda$ from
the last identity, we then get $\lambda={r\over 6}$. Thus the
conclusion follows.
\end{proof}
\begin{rem}
Theorems \ref{quasiringforcurvefamily} is known in the literature,
see e.g., \cite{Maier:2011} and references therein. In fact, for
each of these cases, one can find the $\theta$ or $\eta$ expressions
of the quantities $A,B,C,E$ and prove the formulas by checking the
$\theta$ or $\eta$ expressions. The relations between these
generators and the Eisenstein series $E_{2},E_{4},E_{6}$ are also
known. One could then, for example, use the Eisenstein series
expression of $E$ to obtain the almost-holomorphic modular form
$\hat{E}=E+\Delta E$, with
\begin{equation}
\Delta E={2\over N+1}{-3\over \mathrm{Im}\tau}, \,N=1^{*},2,3,\quad
\Delta E={1\over 3}{-3\over \mathrm{Im}\tau},\, N=4\,.
\end{equation}
These agree with the above choices $\Delta E={r\over 6}{-3\over
\textrm{Im}\tau}$. See e.g., \cite{Alim:2013} for a collection of
these results.

But this method could not be generalized to Calabi-Yau threefold
families.
\end{rem}
\begin{rem}\label{reducedring}
Strictly speaking, the ring generated by $A,B,C,E$ above do not form
a differential ring due to the negative powers in the equations.
However, it is easy to see by choosing suitable powers of these
generators, one can indeed get a ring. For example, in the
$r=6,4,3,2$ cases, one can choose $A^{4},B^{6}-C^{6},E;
A^{2},B^{4},E; A,B^{3}$ and $E;A,B^{2},E$, respectively.

The ring generated by $A,B,C,E$ is not the ring of quasi-modular
forms for $\Gamma_{0}(N)$. For example, in the case $N=3$, the ring
of quasi-modular forms with the nontrivial multiplier system
$\chi_{-3}$ is
$\widetilde{M}_{*}(\Gamma_{0}(3),\chi_{-3})=\mathbb{C}[A,B^{3},E]\cong
\mathbb{C}[A,F=C^{3}-B^{3},E] $ and the differential equations
are\footnote{The author would like to thank Prof. Don Zagier for
discussions on these.}
\begin{eqnarray*}
\partial_{\tau}A&=&{1\over 2}(EA+F)\,,\\
\partial_{\tau}F&=&{1\over 2}(EF+A^{5})\,,\\
\partial_{\tau}E&=&{1\over 6}(E^{2}-A^{4})\,.
\end{eqnarray*}
One also has
$\widetilde{M}_{*}(\Gamma_{0}(2))=\mathbb{C}[A^{2},B^{4},E]$,
$\widetilde{M}_{*}(\Gamma_{0}(4),\chi_{-4})=\mathbb{C}[A,B^{2},E]$,
etc.

However, in the whole discussion of this work we shall not the
transformations in the corresponding modular group $\Gamma_{0}(N)$,
but use only the differential equations they satisfy. Moreover, in
the applications to the study of the holomorphic anomaly equations
and the topological string partition functions, eventually only
elements in the above rings of quasi-modular forms ($r=6$ case is
exceptional) will be involved \cite{Alim:2013}.

For the reasons mentioned above, by abuse of langauge, we shall call
the rings $\mathbb{C}[A^{\pm 1},B^{\pm 1},C^{\pm
1}],\mathbb{C}[A^{\pm 1},B^{\pm 1},C^{\pm 1},E]$ and
$\mathbb{C}[A^{\pm 1},B^{\pm 1},C^{\pm 1},\hat{E}]$ the ring of
modular forms, quasi-modular forms, almost-holomorphic modular forms
for $\Gamma_{0}(N)$, and denote them by $M_{*}(\Gamma_{0}(N))$,
$\widetilde{M}_{*}(\Gamma_{0}(N))$, $\widehat{M}_{*}(\Gamma_{0}(N))$
respectively. We shall also call the gradings ``modular weights"
which could be negative.
\end{rem}

Since later we shall need to generalize the construction to some
Calabi-Yau threefold families using the Picard-Fuchs equations, we
shall reproduce below the details in constructing the graded
differential rings
$(\widetilde{M}_{*}(\Gamma_{0}(N)),\partial_{\tau})$ using
properties of the equations in (\ref{PFforonefold}).

We start from the following observation.
\begin{prop}\label{DalphabyPF}
For each of the elliptic curve families
$\pi_{\Gamma_{0}(N)}:\mathcal{E}_{\Gamma_{0}(N)}\rightarrow
X_{0}(N), N=1^{*},2,3,4$, one has $\partial_{\tau}\alpha=\alpha\beta
A^{2}$, where as before $\beta:1-\alpha, \partial_{\tau}={1\over
2\pi i}{\partial\over \partial \tau}$.
\end{prop}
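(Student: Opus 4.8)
The plan is to compute $\partial_\tau\alpha$ by the chain rule, writing $\partial_\tau\alpha = (\partial_\tau\alpha)(\partial_\alpha\tau)^{-1}\cdot(\partial_\alpha\alpha)$, i.e.\ really to compute $d\tau/d\alpha$ directly from the definition $\tau = \omega_1/\omega_0$ and invert. The key input is a Wronskian identity for the Picard-Fuchs operator $\mathcal{L}_{\textrm{elliptic}} = \theta^2 - \alpha(\theta+1/r)(\theta+1-1/r)$. First I would rewrite $\mathcal{L}_{\textrm{elliptic}}$ in the standard form $\partial_\alpha^2 + p(\alpha)\partial_\alpha + q(\alpha)$; a short computation gives $p(\alpha) = \dfrac{1-(1+1/r+1-1/r)\alpha}{\alpha(1-\alpha)} = \dfrac{1-2\alpha}{\alpha(1-\alpha)}$ after accounting for the $\theta = \alpha\partial_\alpha$ normalization (the precise numerator being $1-(a+b+1)\alpha$ with $a=1/r$, $b=1-1/r$, so $a+b+1 = 2$). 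Then Abel's theorem gives the Wronskian $W = \omega_0\,\partial_\alpha\omega_1 - \omega_1\,\partial_\alpha\omega_0 = c\cdot\exp\!\big(-\!\int p\,d\alpha\big) = c\cdot\dfrac{1}{\alpha(1-\alpha)}$ for some constant $c$, i.e.\ $W = \dfrac{c}{\alpha\beta}$.

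Next I would observe that $\partial_\alpha\tau = \partial_\alpha(\omega_1/\omega_0) = \dfrac{\omega_0\partial_\alpha\omega_1 - \omega_1\partial_\alpha\omega_0}{\omega_0^2} = \dfrac{W}{\omega_0^2} = \dfrac{c}{\alpha\beta\,\omega_0^2} = \dfrac{c}{\alpha\beta A^2}$, since $A = \omega_0$ by definition. Inverting, $\partial_\tau\alpha = \dfrac{1}{\partial_\alpha\tau} = \dfrac{\alpha\beta A^2}{c}$, so it remains only to pin down $c = 1$ (with the factor $1/(2\pi i)$ in $\partial_\tau$ absorbed correctly). This I would do by examining the behavior near the cusp $\alpha = 0$: the hypergeometric series gives $\omega_0 = A = 1 + O(\alpha)$, while $\omega_1 = \dfrac{i}{\sqrt N}\,{}_2F_1(1/r,1-1/r,1;\beta)$ with $\beta = 1-\alpha$, and the standard connection formula for ${}_2F_1$ near $\beta=0$ shows $\omega_1 \sim \dfrac{1}{2\pi i}\log\alpha + \text{const}$ (this is exactly the statement that $\alpha$ is, up to normalization and exponentiation, $q_\tau = e^{2\pi i\tau}$, so that $\tau = \dfrac{1}{2\pi i}\log\alpha + O(1)$ and hence $q_\tau \sim \text{const}\cdot\alpha$). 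Differentiating, $\partial_\tau\alpha = q_\tau\,\partial_{q_\tau}\alpha \sim \alpha$ as $\alpha\to 0$, which forces $c=1$ and matches $\alpha\beta A^2 \to \alpha$ there.

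The main obstacle, and the only genuinely delicate point, is the normalization constant $c$: one must track the $\dfrac{1}{2\pi i}$ in the definition $\partial_\tau = \dfrac{1}{2\pi i}\partial_\tau$, the factor $\dfrac{i}{\sqrt N}$ in the definition of $\omega_1$, and the constant appearing in the logarithmic solution of the hypergeometric equation at $\beta = 0$, and check that they conspire to give exactly $c=1$ uniformly in $N = 1^*,2,3,4$ (equivalently $r = 6,4,3,2$). The cleanest way to settle this is the cusp analysis above, using that the local exponents of $\mathcal{L}_{\textrm{elliptic}}$ at $\alpha=0$ are both $0$ (a logarithmic point) and the known fact that $\alpha$ serves as the Hauptmodul with $\alpha \sim \text{const}\cdot q_\tau$; alternatively one can verify $c=1$ in a single case by comparing with the classical $\eta$- or $\theta$-expansions, since $\mathrm{Im}\,\tau\to\infty$ behavior is uniform across the four families. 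Everything else — the form of $p(\alpha)$, Abel's theorem, the quotient-rule computation of $\partial_\alpha\tau$ — is routine.
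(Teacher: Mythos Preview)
Your proposal is correct and essentially the same as the paper's proof: both compute the Wronskian $W=\omega_0\partial_\alpha\omega_1-\omega_1\partial_\alpha\omega_0$ (equivalently $\omega_0^2\,\theta\tau$), solve the resulting first-order ODE to get $\theta\tau=c/(\beta\omega_0^2)$, and fix the constant by examining the leading behavior as $\alpha\to 0$. The only cosmetic difference is that you invoke Abel's theorem directly after passing to the standard form $\partial_\alpha^2+p\partial_\alpha+q$, whereas the paper derives the same first-order ODE by conjugating $\mathcal{L}_{\mathrm{elliptic}}$ by $\omega_0$ and applying the resulting operator $\tilde{\mathcal{L}}$ to $1$ and $\tau$; this conjugation viewpoint is what the paper later reuses for the Calabi--Yau threefold generalization.
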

\begin{proof}
For ease of notation, first we write the Picard-Fuchs operator in
(\ref{PFforonefold}) as
\begin{equation}\label{Pfforonefoldgeneral}
\mathcal{L}=\theta^{2}-\alpha (\theta+c_{1})(\theta
+c_{2}),~~\textrm{with}~~c_{1}={1\over r},\,c_{2}=1-{1\over r}\,,
\end{equation}
and define
\begin{equation*}
\tilde{\mathcal{L}}:=\left((\theta+\theta \log\omega_{0})^{2}-\alpha
(\theta+\theta \log \omega_{0}+c_{1})(\theta +\theta \log
\omega_{0}+c_{2})\right),
\end{equation*}
Then we have
\begin{equation*}
\omega_{0}~\tilde{\mathcal{L}}\,{\Pi\over
\omega_{0}}=\mathcal{L}\,\Pi=0~\, \textrm{for a period}~\Pi\,.
\end{equation*}
In particular, we have $\tilde{\mathcal{L}}\, 1=0,\,
\tilde{\mathcal{L}}\, \tau=0\,.$ Subtracting $\tilde{\mathcal{L}}\,
1$ from $\tilde{\mathcal{L}}\, \tau$, one then gets
\begin{equation*}\label{theta2tau}
\beta\theta^{2}\tau+\left(2\beta\theta \log
\omega_{0}-\alpha(c_{1}+c_{2})\right)\theta\tau=0\,.
\end{equation*}
That is,
\begin{equation*}
\theta\log (\omega_{0}^{2}\,\theta \tau )-(c_{1}+c_{2}){\alpha\over
\beta}= \theta\log (\omega_{0}^{2}\,\theta \tau
)+(c_{1}+c_{2})\theta \log  \beta =0\,.
\end{equation*}
Solving this first order equation for $\omega_{0}^{2}\,\theta \tau$,
we then get
\begin{equation*}\label{thetatau}
\theta \tau = {c\over \beta^{c_{1}+c_{2}}\omega_{0}^{2}}= {c\over
\beta\omega_{0}^{2}}\,.
\end{equation*}
for some constant $c$. By looking at the leading terms in $\alpha$
of both sides as $\alpha\rightarrow 0$, we can then find that
$c={1\over 2\pi i}$. Hence $\partial_{\tau}\alpha =\alpha\beta
A^{2}$ as claimed.
\end{proof}

In what follows, we shall call the modular function $\alpha$ the
algebraic modulus for the modular curve, while $\tau$ the
transcendental modulus for the modular curve. The above formula then
gives a differential equation relating the algebraic and
transcendental moduli.

Recall the definitions of $B,C$ which implies that
${\alpha\over\beta}={C^{r}\over B^{r}}$, we then get
\begin{cor}
For each of the elliptic curve families
$\pi_{\Gamma_{0}(N)}:\mathcal{E}_{\Gamma_{0}(N)}\rightarrow
X_{0}(N), N=1^{*},2,3,4$, the following is true:
\begin{equation}
 A^{2}={\partial_{\tau}\alpha\over
\alpha\beta}=\partial_{\tau}\log {\alpha\over
\beta}=\partial_{\tau}\log {C^{r}\over B^{r}}\,.
\end{equation}
\end{cor}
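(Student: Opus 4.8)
The plan is to derive the corollary directly from Proposition~\ref{DalphabyPF} together with the definitions of $B$ and $C$ introduced just before Theorem~\ref{quasiringforcurvefamily}. Since $B=(1-\alpha)^{1/r}A$ and $C=\alpha^{1/r}A$, with $A=\omega_0$, we have $C^r/B^r=\alpha/(1-\alpha)=\alpha/\beta$, so the three expressions on the right-hand side of the claimed identity are really two: $\partial_\tau\log(\alpha/\beta)$ and its rewriting as $\partial_\tau\log(C^r/B^r)$, which are equal by this algebraic identity alone. Thus the only analytic input needed is the first equality $A^2=\partial_\tau\alpha/(\alpha\beta)$.

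First I would recall that Proposition~\ref{DalphabyPF} gives $\partial_\tau\alpha=\alpha\beta A^2$. Dividing both sides by $\alpha\beta$ (which is legitimate away from the singular points $\alpha=0,1$, and the resulting identity of meromorphic functions then holds everywhere by continuation) yields immediately
\begin{equation*}
A^2=\frac{\partial_\tau\alpha}{\alpha\beta}\,.
\end{equation*}
Next I would observe that, since $\beta=1-\alpha$, we have $\partial_\tau\beta=-\partial_\tau\alpha$, hence
\begin{equation*}
\partial_\tau\log\frac{\alpha}{\beta}=\frac{\partial_\tau\alpha}{\alpha}-\frac{\partial_\tau\beta}{\beta}=\frac{\partial_\tau\alpha}{\alpha}+\frac{\partial_\tau\alpha}{\beta}=\partial_\tau\alpha\cdot\frac{\beta+\alpha}{\alpha\beta}=\frac{\partial_\tau\alpha}{\alpha\beta}\,,
\end{equation*}
using $\alpha+\beta=1$. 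Comparing the two displays gives $A^2=\partial_\tau\log(\alpha/\beta)$. Finally, $\log(C^r/B^r)=\log(\alpha/\beta)$ by the definitions of $B$ and $C$, so $\partial_\tau\log(C^r/B^r)=\partial_\tau\log(\alpha/\beta)=A^2$, completing the chain of equalities.

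There is essentially no obstacle here: the corollary is a one-line consequence of Proposition~\ref{DalphabyPF} once one notes $C^r/B^r=\alpha/\beta$ and uses $\alpha+\beta=1$ to collapse $\frac{1}{\alpha}+\frac{1}{\beta}$ to $\frac{1}{\alpha\beta}$. The only point worth a word of care is that the division by $\alpha\beta$ and the logarithmic derivatives are taken in the field of meromorphic functions on (a cover of) the modular curve, so the identities are understood as identities of such functions, valid on the open locus where $\alpha\beta\neq 0$ and extended by analytic continuation; this is already implicit in the statement and proof of Proposition~\ref{DalphabyPF}.
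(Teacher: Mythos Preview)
Your proposal is correct and takes essentially the same approach as the paper: the corollary is deduced immediately from Proposition~\ref{DalphabyPF} together with the observation that the definitions of $B$ and $C$ give $C^{r}/B^{r}=\alpha/\beta$. You have simply written out in more detail the intermediate step $\partial_\tau\log(\alpha/\beta)=\partial_\tau\alpha/(\alpha\beta)$ via $\alpha+\beta=1$, which the paper leaves implicit.
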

Using the definition $E=D\log {C^{r} B^{r}}$, we have
\begin{equation}\label{DBC}
\partial_{\tau}B={1\over2r}B (E-A^{2}),\quad \partial_{\tau}C={1\over 2r}C(E+A^{2})\,.
\end{equation}
From $A^{r}=B^{r}+C^{r}$, we can easily get
\begin{equation}\label{DA}
\partial_{\tau}A={1\over 2r}A(E+{C^{r}-B^{r}\over A^{r}}A^{2})\,.
\end{equation}
Using the Picard-Fuchs equation satisfied by $A$:
\begin{equation*}
\beta (\theta A)^{2}-(c_{1}+c_{2})\alpha\theta A-c_{1}c_{2}\alpha
A=0\,,
\end{equation*}
we obtain
\begin{equation*}
\partial_{\tau}^{2}\log A=(\partial_{\tau}\log A)^{2}+(c_{1}+c_{2}-1)\alpha\beta A^{2}\partial_{\tau}\log
A+c_{1}c_{2}\alpha\beta A^{4}\,,
\end{equation*}
This second order differential equation of $A$ will become a first
order differential equation of $E$ since (\ref{DA}) says that
$E=2r\partial_{\tau}\log A-(\alpha-\beta)A^{2}$, one then gets
\begin{equation}\label{DE}
\partial_{\tau}E={1\over 2r}E^{2}+ (2c_{1}c_{2}r\alpha\beta -{1\over
2r}(\alpha-\beta)^{2}-2\alpha\beta)A^{4} = {1\over 2r}E^{2}-{1\over
2r}(\alpha+\beta)^{2}A^{4}={1\over 2r}(E^{2}-A^{4})\,.
\end{equation}

\subsubsection*{``Special coordinate'' on the deformation space of
elliptic curves}
In the previous discussions, we obtained the graded differential
rings started from Prop. \ref{DalphabyPF} which could be thought
of as a differential equation satisfied by the normalized period of
the elliptic curves. As we shall see, for a one-parameter Calabi-Yau
threefold family $\pi:\mathcal{X}\rightarrow\mathcal{M}$, the
``normalized period'', called special coordinate $t$ below, does not
satisfy an analogous relation. However, there is a parameter $\tau$
on $\mathcal{M}$ satisfying a similar identity (see Prop.
\ref{analogueofDalpha} below). To get the identity, one computes the
Yukawa couplings in different coordinates.

In the following, we shall explain how to derive Prop.
\ref{DalphabyPF} by computing the Yukawa couplings for the purpose
of later generalization.
\begin{prop}\label{Yukawaintau}
The Yukawa coupling in the transcendental modulus $\tau$, defined by
$C_{\tau}=\int \Omega\wedge
\partial_{\tau}\Omega$, satisfies $C_{\tau}=1$.
\end{prop}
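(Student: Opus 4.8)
The plan is to start from the defining formula for the Yukawa coupling of a Calabi-Yau one-fold,
$$C_{\tau}=\int_{\mathcal{E}_{\tau}}\Omega\wedge\partial_{\tau}\Omega,$$
and to exploit the fact that for an elliptic curve the Hodge bundle is rank two, so $\partial_\tau\Omega$ must be a linear combination of $\Omega$ and a second period vector. Concretely, I would write the holomorphic form on the fiber as $\Omega = \omega_0\,\Omega_{\mathrm{alg}}$ where $\Omega_{\mathrm{alg}}$ is the algebraic differential (say $dx/y$) normalized so that its periods over the standard cycles are $\omega_0$ and $\omega_1=\tau\omega_0$; equivalently, one can think of the normalized holomorphic form $\widehat\Omega=\Omega/\omega_0$ whose period vector is $(1,\tau)$. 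Then I would differentiate in $\tau$ and use the Riemann bilinear relation $\int_{\mathcal{E}}\widehat\Omega\wedge\widehat\Omega = 0$ together with $\int_{\mathcal{E}}\widehat\Omega\wedge\overline{\widehat\Omega}\neq 0$ to extract the coefficient. The cleanest route: in a symplectic basis of $H^1$, $\widehat\Omega$ has period vector $\Pi=(1,\tau)^{T}$ and $\partial_\tau\widehat\Omega$ has period vector $(0,1)^{T}$, so
$$\int_{\mathcal{E}}\widehat\Omega\wedge\partial_\tau\widehat\Omega = \Pi^{T}\Sigma\,\partial_\tau\Pi = \begin{pmatrix}1&\tau\end{pmatrix}\begin{pmatrix}0&1\\-1&0\end{pmatrix}\begin{pmatrix}0\\1\end{pmatrix}=1,$$
where $\Sigma$ is the intersection pairing. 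This already gives $C_\tau=1$ for the normalized form.

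Next I would reconcile this normalization with the one used elsewhere in the paper. The section $\Omega$ of the Hodge line bundle used to define $C_\tau$ here is evidently taken to be the normalized one (period $(1,\tau)$), which is the natural choice once one has passed to the transcendental modulus $\tau$; under a rescaling $\Omega\mapsto f(\tau)\Omega$ the Yukawa coupling in a fixed coordinate transforms by $f^2$, and the point of Proposition \ref{Yukawaintau} is precisely that in the $\tau$-coordinate with the $(1,\tau)$ normalization the coupling is the constant $1$. So I would state explicitly at the start of the proof that $\Omega$ denotes this normalized section, i.e. $\Omega=\widehat\Omega$ with periods $\omega_0=1$, $\omega_1=\tau$, and then the computation above finishes the argument. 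Alternatively, and more in the spirit of the surrounding text, one can verify $C_\tau=1$ by the chain rule from the algebraic side: $C_\tau = C_\alpha\,(\partial\alpha/\partial\tau)^{?}$ — but the degree bookkeeping is cleaner in the bilinear-relation approach, so I would lead with that and perhaps remark on the consistency with Proposition \ref{DalphabyPF} as a sanity check (indeed, combining $C_\tau=1$, the known form $C_\alpha \propto \alpha^{-1}\beta^{-1}\omega_0^{-2}$ of the Yukawa coupling in the algebraic modulus, and the covariance of $C$ under coordinate change reproduces $\partial_\tau\alpha=\alpha\beta A^2$).

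The main obstacle is not the computation — which is a one-line Riemann bilinear relation — but making precise \emph{which} holomorphic section $\Omega$ is meant, since $C_\tau$ depends on that choice, and fixing the transformation/weight conventions so that the statement reads as an honest normalization condition rather than a tautology. Once it is agreed that $\Omega$ is the section with period vector $(1,\tau)$ in the chosen symplectic frame — equivalently, the section for which the "first period" $\omega_0$ has been scaled to $1$, consistent with writing $\tau=\omega_1/\omega_0$ — the identity $C_\tau=1$ is immediate. I would therefore spend the bulk of the written proof pinning down the frame and the normalization, and only a line on the pairing computation itself.
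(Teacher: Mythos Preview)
Your argument is correct but follows a different route from the paper. The paper proves $C_\tau=1$ by an explicit differential-geometric computation on the universal family: it realizes the fiber at $\tau_*$ as $T_{\tau_*}=\mathbb{C}/(\mathbb{Z}\oplus\mathbb{Z}\tau_*)$ with $\Omega_{\tau_*}=dz_{\tau_*}$, writes down the diffeomorphism $z_\tau=\frac{\tau-\bar\tau_*}{\tau_*-\bar\tau_*}z_{\tau_*}+\frac{\tau_*-\tau}{\tau_*-\bar\tau_*}\bar z_{\tau_*}$ trivializing the family near $\tau_*$, differentiates to obtain $\partial_\tau\Omega_\tau|_{\tau_*}=\frac{1}{\tau_*-\bar\tau_*}(dz_{\tau_*}-d\bar z_{\tau_*})$, and then integrates $\Omega\wedge\partial_\tau\Omega$ directly over the torus. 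Your approach instead stays on the level of periods and invokes the Riemann bilinear relation to evaluate the cup product via the symplectic pairing on $H_1$. Both are valid; your version is shorter and makes the dependence on the normalization of $\Omega$ manifest (and your emphasis on pinning down the section with period vector $(1,\tau)$ is well placed, since $dz_\tau$ is exactly that section). The paper's version, on the other hand, exhibits the Kodaira--Spencer class concretely and sets up the explicit diffeomorphism that is reused later in the discussion of the Weil--Petersson metric (Example \ref{WPforellipticcurve}), so it earns its keep elsewhere in the text.
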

That is, there is no ``quantum correction'' \cite{Lian:1994zv,
Lian:1995, Lian:1996, Hosono:2008ve} added to the classical
intersection number $\kappa=1$.
\begin{proof}
Take the local parameter of the (punctured) deformation space near
the point $\alpha=0$ to be $\tau$. At the base point $\tau_{*}$, the
fiber of the family $\pi_{\Gamma_{0}(N)}:
\mathcal{E}_{\Gamma_{0}(N)}\rightarrow X_{0}(N)$ is the torus
$T_{\tau_{*}}=\mathbb{C}/\Lambda_{\tau_{*}}$, where
$\Lambda_{\tau_{*}}=\mathbb{Z}\oplus \mathbb{Z}\tau_{*}$. We take
the holomorphic top form on $T_{\tau_{*}}$ to be $dz_{\tau_{*}}$
descended from $\mathbb{C}$. For any $\tau$ near the base point
$\tau_{*}$, the diffeomorphism sending $T_{\tau_{*}}$ to $T_{\tau}$
is given by $z_{\tau}={\tau-\bar{\tau}_{*}\over
\tau_{*}-\bar{\tau}_{*}}z_{\tau_{*}}+{\tau_{*}-\tau\over
\tau_{*}-\bar{\tau}_{*}}\bar{z}_{\tau_{*}}\,.$ From this, one can
then see that $\partial_{\tau}\Omega_{\tau}|_{\tau_{*}}={1\over
\tau_{*}-\bar{\tau}_{*}}(dz_{\tau_{*}}-d\bar{z}_{\tau_{*}})\,.$ It
follows then that $C_{\tau_{*}}=\int_{T_{\tau_{*}}}{i\over
2\mathrm{Im}\tau_{*}}dz_{\tau_{*}}\wedge d\bar{z}_{\tau_{*}}=1\,.$
\end{proof}
One can also compute the Yukawa coupling in the algebraic modulus
$\alpha$ as follows.
\begin{prop}\label{Yukawainalpha}
The Yukawa coupling in the algebraic modulus $\alpha$, defined by
$C_{\alpha}=\int \Omega\wedge \partial_{\alpha}\Omega\,,$ satisfies
$C_{\alpha}={1\over \alpha\beta}\,.$
\end{prop}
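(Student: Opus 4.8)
The plan is to get $C_{\alpha}$ from the already computed $C_{\tau}=1$ (Proposition \ref{Yukawaintau}) by a change of modulus, using Proposition \ref{DalphabyPF} to trade $d\tau$ for $d\alpha$. Fix the holomorphic top form $\Omega_{\alpha}$ in the algebraic frame so that its normalizing period is $\omega_{0}=A$ --- this is precisely what makes $\tau=\omega_{1}/\omega_{0}$ --- and let $\Omega_{\tau}$ be the holomorphic form with periods $(1,\tau)$, so that $\Omega_{\alpha}=\omega_{0}\,\Omega_{\tau}=A\,\Omega_{\tau}$. Expanding $C_{\alpha}=\int\Omega_{\alpha}\wedge\partial_{\alpha}\Omega_{\alpha}$ with the chain rule $\partial_{\alpha}=(\partial_{\alpha}\tau)\,\partial_{\tau}$, the term proportional to $\partial_{\alpha}A$ is a multiple of $\int\Omega_{\tau}\wedge\Omega_{\tau}=0$, so
\begin{equation*}
C_{\alpha}=A^{2}\,(\partial_{\alpha}\tau)\int\Omega_{\tau}\wedge\partial_{\tau}\Omega_{\tau}=A^{2}\,(\partial_{\alpha}\tau)\,C_{\tau}=A^{2}\,\partial_{\alpha}\tau\,.
\end{equation*}
Proposition \ref{DalphabyPF} gives $\partial_{\tau}\alpha=\alpha\beta A^{2}$; inverting this relation yields $\partial_{\alpha}\tau=1/(\alpha\beta A^{2})$, hence $C_{\alpha}=1/(\alpha\beta)$.

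As an independent check --- and a route that avoids Proposition \ref{Yukawaintau} --- I would observe that, in terms of the periods in (\ref{periodsofcurve}), $C_{\alpha}=\omega_{0}\,\partial_{\alpha}\omega_{1}-\omega_{1}\,\partial_{\alpha}\omega_{0}$ is (up to the sign fixed by the choice of symplectic basis) the Wronskian of the two independent solutions of the Picard--Fuchs equation $\mathcal{L}_{\mathrm{elliptic}}\,\omega=0$. Writing the operator in (\ref{PFforonefold}) in the normalized form $\partial_{\alpha}^{2}+P_{1}(\alpha)\,\partial_{\alpha}+P_{0}(\alpha)$, one finds $P_{1}(\alpha)=\dfrac{1-2\alpha}{\alpha(1-\alpha)}=\dfrac{d}{d\alpha}\log\!\bigl(\alpha(1-\alpha)\bigr)$, so Abel's identity (Liouville's formula) forces $C_{\alpha}=\mathrm{const}\cdot\dfrac{1}{\alpha\beta}$; the constant is then pinned to $1$ by comparing the leading terms of $\omega_{0},\omega_{1}$ as $\alpha\to 0$.

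The one point requiring care is the bookkeeping of normalizations: one must be consistent about which multiple of $\Omega_{\tau}$ the algebraic-frame form $\Omega_{\alpha}$ is, and about the $\tfrac{1}{2\pi i}$ convention hidden in $\partial_{\tau}$ in Proposition \ref{DalphabyPF} as opposed to the plain $\partial/\partial\tau$ used in the local model behind Proposition \ref{Yukawaintau}. The structural statement --- that the Yukawa coupling in the algebraic modulus has simple poles exactly at the two cusps $\alpha=0$ and $\alpha=1$ and nowhere else --- is already forced by Abel's identity; fixing the overall constant is then a routine leading-order comparison.
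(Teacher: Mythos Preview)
Your proposal is correct, and both routes you give are valid. The paper, however, takes your \emph{second} route (the Wronskian/Abel argument) as its primary proof, not your first. Concretely, the paper computes $\theta(\alpha C_{\alpha})$ directly using the Picard--Fuchs relation $\theta^{2}\Omega=(c_{1}+c_{2})\tfrac{\alpha}{\beta}\theta\Omega+c_{1}c_{2}\tfrac{\alpha}{\beta}\Omega$ and finds the first-order ODE $\theta(\alpha C_{\alpha})=\tfrac{\alpha}{\beta}(\alpha C_{\alpha})$; integrating gives $\alpha C_{\alpha}=c/\beta$, and the constant is fixed at $\alpha\to 0$. This is exactly Abel's identity in disguise, so your ``independent check'' is in fact the paper's proof.

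Your main route --- pulling $C_{\alpha}$ back from $C_{\tau}=1$ via Proposition~\ref{DalphabyPF} --- is the \emph{reverse} of what the paper does next: immediately after Proposition~\ref{Yukawainalpha} the paper observes that combining $C_{\tau}=1$ with $C_{\alpha}=1/(\alpha\beta)$ re-derives Proposition~\ref{DalphabyPF}. Since the paper has already given an independent proof of Proposition~\ref{DalphabyPF}, your direction is not circular, but it does invert the intended logic (the whole point of this subsection is to exhibit the Yukawa-coupling derivation of \ref{DalphabyPF} as the template that generalizes to threefolds). Your caution about the $\tfrac{1}{2\pi i}$ convention is well placed: the $\partial_{\tau}$ in the proof of Proposition~\ref{Yukawaintau} is the ordinary derivative, while in Proposition~\ref{DalphabyPF} it is the normalized one, and the overall constant in $C_{\alpha}$ depends on tracking this consistently.
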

\begin{proof}
Recall that the Picard-Fuchs equation (\ref{Pfforonefoldgeneral})
tells that when integrated over cycles, one has
$\theta^{2}\Omega=(c_{1}+c_{2}){\alpha\over \beta}\theta
\Omega+c_{1}c_{2}{\alpha\over \beta}\Omega$. Now we have
\begin{eqnarray*}
\theta(\alpha C_{\alpha}) &=&\theta\int \Omega\wedge \theta\Omega
=\int \theta\Omega\wedge \theta\Omega+\int \Omega\wedge \theta^{2}\Omega\\
&=&0+\int \Omega\wedge \left((c_{1}+c_{2}){\alpha\over \beta}\theta \Omega+c_{1}c_{2}{\alpha\over \beta}\Omega\right)\\
&=&0+ (c_{1}+c_{2}){\alpha\over \beta}\int \Omega\wedge\theta \Omega+0\\
&=&(c_{1}+c_{2}){\alpha\over \beta}(\alpha C_{\alpha})={\alpha\over \beta}(\alpha C_{\alpha})\\
\end{eqnarray*}
Solving $\alpha C_{\alpha}$ from this equation, we get $\alpha
C_{\alpha}={c\over \beta}$ from some constant $c$. We then fix this
$c$ by looking at the behavior of both sides near $\alpha=0$. This
gives $c=1$. Hence the conclusion follows.
\end{proof}

By computing the Yukawa coupling in two different coordinates $\tau$
and $\alpha$ in Prop. \ref{Yukawaintau} and Prop.
\ref{Yukawainalpha}, we can then derive the equation
$\partial_{\tau}\alpha=(C_{\alpha})^{-1}\omega_{0}^{2}$ given in
Prop. \ref{DalphabyPF} between the transcendental modulus $\tau$ and
the algebraic modulus $\alpha$, from the following relation
\begin{equation}
C_{\tau}={1\over 2\pi i}{1\over \omega_{0}^{2}}{\partial \alpha\over
\partial \tau}C_{\alpha}\,.
\end{equation}

\subsection{Picard-Fuchs equations for Calabi-Yau three--fold families}

In this section, motivated by the discussions in elliptic curve
families, we shall work out similar rings
$\mathcal{R},\widetilde{\mathcal{R}}$ living on the deformation
spaces of Calabi-Yau three--folds. As before we limit ourselves to
the case $\dim \mathcal{M}=1$. We shall start by computing the
Yukawa couplings in different coordinates, then we derive an
equation analogous to Prop. \ref{DalphabyPF} between the complex
coordinate $\alpha$ and a suitably chosen coordinate $\tau$ on the
deformation space $\mathcal{M}$. After that we construct a ring out
of special K\"ahler geometry quantities (connections, Yukawa
couplings, etc.).

In the following we shall first consider slightly more general
Picard-Fuchs equations before we specialize to the Picard-Fuchs
(\ref{PFfornoncompactCY}) (\ref{PFforcompactCY}) mentioned above.

Suppose the Picard-Fuchs operator for the family $\pi:
\mathcal{X}\rightarrow \mathcal{M}$ of Calabi-Yau threefolds $X$ is of the form
\begin{equation}\label{PFforcompactCYgeneral}
\mathcal{L}=\theta^{4}-\alpha
\prod_{i=1}^{4}(\theta+c_{i})=(1-\alpha)\theta^{4}-\alpha(\sigma_{1}\theta^{3}+\sigma_{2}\theta^{2}+\sigma_{3}\theta+\sigma_{4})\,,
\end{equation}
where $\theta=\alpha{\partial\over \partial\alpha}$ and
$\sigma_{i}$s are the symmetric polynomials of the constants
$c_{1},c_{2},c_{3},c_{4}$. For the quintic mirror family case, one
has $(c_{1},c_{2},c_{3},c_{4})=(1/5,2/5,3/5,4/5)$ and thus
$\sigma_{1}=2$. As before, we shall denote $\beta:=1-\alpha$.

The large complex structure limit is given by $\alpha=0$. Near this
point, the solutions to the Picard-Fuchs equation $\mathcal{L}\,
\Pi=0$ could be obtained by the Frobenius method and have the
following form
\begin{equation}\label{soltoPFforcompactCY}
(X^{0},X^{1},P_{1},P_{0})=X^{0}(1,t,F_{t},2F-tF_{t})\,
\end{equation}
where $X^{0}(\alpha)\sim 1+\mathcal{O}(\alpha),\, t\sim \log
\alpha+\cdots$ near $\alpha=0$, and $F_{t}=\partial_{t}F(t)$. The
function $F(t)$, called prepotential, has the form from mirror
symmetry
\begin{equation}\label{prepotential}
F(t)={\kappa\over
3!}t^{3}+at^{2}+bt+c+\sum_{d=1}^{\infty}N_{0,d}^{\textrm{GW}}e^{dt}\,,
\end{equation}
where $\kappa$ is the classical triple intersection number of the
mirror manifold $\check{X}$ of $X$, $a,b,c$ are some numbers which
are not important in our discussions, and $N^{\textrm{GW}}_{0,d}$ is
the genus zero degree $d$ Gromov-Witten invariant of $\check{X}$.
This particular structure among the periods comes from the special
K\"ahler geometry on $\mathcal{M}$ and mirror symmetry.

The Yukawa coupling in the $t$ coordinate is then given by
$C_{ttt}=F_{ttt}=\kappa+\mathcal{O}(q^{1})$ with $q=e^{t}$. In the
complex coordinate $\alpha$, we have
\begin{prop}
The Yukawa coupling, defined by $C_{\alpha\alpha\alpha}=-\int
\Omega\wedge \partial_{\alpha}^{3}\Omega$, is given by
$C_{\alpha\alpha\alpha}={\kappa\over \alpha^{3}\beta}$.
\end{prop}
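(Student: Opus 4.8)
The plan is to mirror the argument of Proposition~\ref{Yukawainalpha} (the elliptic-curve case), replacing the second-order Picard-Fuchs operator by the fourth-order one in (\ref{PFforcompactCYgeneral}). First I would set $Y:=-\int \Omega\wedge \partial_\alpha^3\Omega = C_{\alpha\alpha\alpha}$ and, to keep the algebra clean, work with the logarithmic derivative $\theta=\alpha\partial_\alpha$ rather than $\partial_\alpha$ directly; equivalently I would track the quantity $\alpha^3 C_{\alpha\alpha\alpha}$, whose behaviour near $\alpha=0$ is governed by the classical intersection number $\kappa$. The key structural input is the standard fact (Griffiths transversality / the pairing being flat for the Gauss-Manin connection) that $\int \Omega\wedge \theta^j\Omega = 0$ for $j=0,1,2$ and $\int\Omega\wedge\theta^3\Omega \ne 0$; these follow because $\int\Omega\wedge\bar\partial$-closed forms of the wrong Hodge type vanish, exactly as the relation $\int\theta\Omega\wedge\theta\Omega=0$ was used in the elliptic case.

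The main computation is then to differentiate $\int\Omega\wedge\theta^3\Omega$ once more by $\theta$ and use the Picard-Fuchs equation $\theta^4\Omega = \tfrac{\alpha}{\beta}(\sigma_1\theta^3+\sigma_2\theta^2+\sigma_3\theta+\sigma_4)\Omega$ (valid when integrated against cycles) to reduce $\theta^4\Omega$. Concretely:
\begin{equation*}
\theta\!\left(\int\Omega\wedge\theta^3\Omega\right)=\int\theta\Omega\wedge\theta^3\Omega+\int\Omega\wedge\theta^4\Omega .
\end{equation*}
The first term on the right is $-\int\Omega\wedge\theta^4\Omega$ up to the vanishing lower terms (integrate by parts: $\theta\int\Omega\wedge\theta^2\Omega=0$ gives $\int\theta\Omega\wedge\theta^2\Omega=-\int\Omega\wedge\theta^3\Omega$, and similarly one more step shows $\int\theta\Omega\wedge\theta^3\Omega+\int\Omega\wedge\theta^4\Omega=\theta(\int\theta\Omega\wedge\theta^2\Omega)+\tfrac12\theta^2(\ldots)$-type identities collapse), so that after substituting the Picard-Fuchs relation all $\theta^2,\theta^1,\theta^0$ contributions die by transversality and only the $\sigma_1\theta^3$ term survives. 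This yields a first-order ODE of the shape
\begin{equation*}
\theta\!\left(\int\Omega\wedge\theta^3\Omega\right)=\frac{\sigma_1}{2}\,\frac{\alpha}{\beta}\int\Omega\wedge\theta^3\Omega ,
\end{equation*}
possibly with a different rational prefactor depending on exactly how the integration-by-parts bookkeeping distributes, but in any case of the form $\theta\log(\text{Yukawa})=(\text{const})\,\alpha/\beta + (\text{const})$.

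Solving that separable equation gives $\int\Omega\wedge\theta^3\Omega = c\,\beta^{-\sigma_1/2}$ (or the appropriate power), and since for the quintic $\sigma_1=2$ this is exactly $c/\beta$; then $C_{\alpha\alpha\alpha}=\alpha^{-3}\theta^3(\text{stuff})$ unwinds to $c\,\kappa/(\alpha^3\beta)$, and the constant $c$ is pinned down to $1$ by comparing leading terms as $\alpha\to 0$ with $C_{ttt}=\kappa+\mathcal O(q)$ together with $\theta t\to 1$ there, so $C_{\alpha\alpha\alpha}\sim\kappa/\alpha^3$. The one genuinely delicate point — the part I expect to be the main obstacle — is the careful integration-by-parts accounting: one must verify that all the intermediate pairings $\int\theta^i\Omega\wedge\theta^j\Omega$ with $i+j\le 5$, $i,j\ge 1$ reduce correctly, using the antisymmetry of the wedge pairing on $H^3$ and the transversality vanishing, so that exactly the coefficient $\sigma_1$ (and nothing involving $\sigma_2,\sigma_3,\sigma_4$) appears. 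This is the analogue of the clean cancellation ``$\,\ldots = (c_1+c_2)\tfrac{\alpha}{\beta}(\alpha C_\alpha)\,$'' in the proof of Proposition~\ref{Yukawainalpha}; for the fourth-order operator it requires pushing the same idea one Hodge-degree further, which is where sign and combinatorial errors are easy to make.
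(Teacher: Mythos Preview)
Your approach is essentially the same as the paper's: differentiate $\int\Omega\wedge\theta^{3}\Omega$, use Griffiths transversality and the antisymmetry $\int\theta^{2}\Omega\wedge\theta^{2}\Omega=0$ to reduce the integration-by-parts terms, substitute the Picard--Fuchs relation so that only the $\sigma_{1}\tfrac{\alpha}{\beta}\theta^{3}\Omega$ piece survives, and fix the integration constant from the $\alpha\to 0$ asymptotics via $C_{ttt}\to\kappa$. Your tentative coefficient $\tfrac{\sigma_{1}}{2}$ is exactly right: the bookkeeping gives $\int\theta\Omega\wedge\theta^{3}\Omega=-\theta\!\int\Omega\wedge\theta^{3}\Omega$ (not $-\int\Omega\wedge\theta^{4}\Omega$ as you first wrote), whence $2\,\theta(\alpha^{3}C_{\alpha\alpha\alpha})=\sigma_{1}\tfrac{\alpha}{\beta}\,\alpha^{3}C_{\alpha\alpha\alpha}$, and with $\sigma_{1}=2$ this integrates to $\alpha^{3}C_{\alpha\alpha\alpha}=\kappa/\beta$.
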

\begin{proof}
First due to Griffiths transversality, we have
$\alpha^{3}C_{\alpha\alpha\alpha}=-\int \Omega\wedge
\theta^{3}\Omega.$ By integration by parts and Griffiths
transversality, it follows that
\begin{eqnarray*}
\theta(\alpha^{3}C_{\alpha\alpha\alpha})&=&-\int \theta \Omega\wedge \theta^{3}\Omega-\int \Omega\wedge \theta^{4}\Omega\\
&=&-\left(\theta\int \theta \Omega\wedge \theta^{2}\Omega-\int \theta^{2} \Omega\wedge \theta^{2}\Omega\right)-\int \Omega\wedge (\sigma_{1}{\alpha\over \beta}\theta^{3}\Omega+\cdots)\\
&=&-\theta\left(\theta \int \Omega\wedge \theta^{3}\Omega- \int \Omega\wedge \theta^{3}\Omega\right)-0+\sigma_{1}{\alpha\over\beta}(\alpha^{3}C_{\alpha\alpha\alpha})\\
&=&-\theta
(\alpha^{3}C_{\alpha\alpha\alpha})+\sigma_{1}{\alpha\over\beta}(\alpha^{3}C_{\alpha\alpha\alpha})\,.
\end{eqnarray*}
Solving $\alpha^{3}C_{\alpha\alpha\alpha}$ from this equation, we
then get
$C_{\alpha\alpha\alpha}={c\over \alpha^{3}\beta}$.
Using the fact that
\begin{equation}\label{Cttt}
{1\over (X^{0})^{2}}C_{\alpha\alpha\alpha}({\partial\alpha\over
\partial t})^{3}=C_{ttt}=\kappa+\mathcal{O}(q)\,,
\end{equation} we know $c=\kappa$.
Hence the assertion follows.
\end{proof}

Now that we have computed the Yukawa coupling in the special
coordinate $t$ and complex coordinate $\alpha$, we shall find the
analogue of Prop. \ref{DalphabyPF} by defining the following
coordinate
\begin{equation}\label{defoftau}
\tau={1\over 2\pi i}\kappa^{-1}F_{tt}\,.
\end{equation}
This definition was motivated \cite{Aganagic:2006wq} to establish
the modularity for non-compact Calabi-Yau threefolds, as we shall
explain later. According to this definition of $\tau$, near
$\alpha=0$ we have $\tau(\alpha)\sim{1\over 2\pi i} \log
\alpha+\mathcal{O}(\alpha^{0})$. Therefore, from
\begin{equation}
{1\over (X^{0})^{2}}C_{\alpha\alpha\alpha} ({\partial \alpha\over
\partial t})^{3}=C_{ttt}=2\pi i \kappa {\partial \tau\over
\partial t}=2\pi i \kappa {\partial \tau\over \partial \alpha}{\partial
\alpha\over \partial t}\,.
\end{equation}
we obtain the following assertion.
\begin{prop}\label{analogueofDalpha}
\begin{equation}
D\alpha=\alpha \cdot
\kappa(\alpha^{3}C_{\alpha\alpha\alpha})^{-1}\cdot  (X^{0}\theta
t)^{2}=\alpha\beta   (X^{0}\theta t)^{2},\quad D:={1\over 2\pi
i}{\partial\over \partial \tau}\,.
\end{equation}
\end{prop}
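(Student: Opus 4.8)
The plan is to mimic the proof of Proposition \ref{DalphabyPF}, which was derived in the elliptic curve case by computing the Yukawa coupling in two coordinate systems (the transcendental modulus $\tau$ and the algebraic modulus $\alpha$) and comparing. Here the two relevant coordinates are the special coordinate $t$ (playing the role of the normalized period) and the algebraic modulus $\alpha$, together with the auxiliary coordinate $\tau$ defined in (\ref{defoftau}). The three-point Yukawa coupling transforms as a section of $\mathcal{L}^{-2}\otimes \mathrm{Sym}^{3}T^{*}\mathcal{M}$, so under the change of variables $\alpha \mapsto t$ one picks up a factor $(X^{0})^{-2}$ from the rescaling of $\Omega$ (since $\Omega$ in the $t$-frame is $\Omega/X^{0}$) and a factor $(\partial\alpha/\partial t)^{3}$ from the chain rule. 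This is exactly the content of (\ref{Cttt}).

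The key steps, in order, are as follows. First, invoke the already-proven fact that $C_{\alpha\alpha\alpha} = \kappa/(\alpha^{3}\beta)$, so that $\alpha^{3}C_{\alpha\alpha\alpha} = \kappa/\beta$. Second, use the definition $\tau = \frac{1}{2\pi i}\kappa^{-1}F_{tt}$ to compute $\partial\tau/\partial t = \frac{1}{2\pi i}\kappa^{-1}F_{ttt} = \frac{1}{2\pi i}\kappa^{-1}C_{ttt}$, hence $C_{ttt} = 2\pi i\,\kappa\,\partial_{t}\tau$. Third, combine with the transformation law $\frac{1}{(X^{0})^{2}}C_{\alpha\alpha\alpha}(\partial\alpha/\partial t)^{3} = C_{ttt}$ displayed just before the statement, to obtain
\begin{equation*}
\frac{1}{(X^{0})^{2}}\,\frac{\kappa}{\alpha^{3}\beta}\left(\frac{\partial\alpha}{\partial t}\right)^{3} = 2\pi i\,\kappa\,\frac{\partial\tau}{\partial\alpha}\,\frac{\partial\alpha}{\partial t}\,.
\end{equation*}
Fourth, cancel one factor of $\kappa\,\partial_{t}\alpha$ from both sides (it is nonvanishing near $\alpha=0$), divide through, and rearrange to isolate $\partial\alpha/\partial\tau$; recalling $\theta = \alpha\,\partial_{\alpha}$ so that $X^{0}\theta t = \alpha\, X^{0}\,\partial_{\alpha}t$, one reads off $D\alpha = \frac{1}{2\pi i}\partial_{\tau}\alpha = \alpha\beta\,(X^{0}\theta t)^{2}$, which is the claim. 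The intermediate form $D\alpha = \alpha\cdot\kappa(\alpha^{3}C_{\alpha\alpha\alpha})^{-1}\cdot(X^{0}\theta t)^{2}$ records the computation before substituting the explicit value of $C_{\alpha\alpha\alpha}$.

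I do not expect a genuine obstacle here: the proposition is essentially a bookkeeping exercise in the transformation properties of the Yukawa coupling, and all the analytic input (Griffiths transversality, the explicit form of $C_{\alpha\alpha\alpha}$, the behavior of $X^{0}$ and $t$ near $\alpha=0$) has already been established. The only point requiring a little care is the correct power of $X^{0}$ in the transformation rule and the correct placement of the $2\pi i$ and $\kappa$ factors coming from the definition of $\tau$; a quick consistency check via the leading-order behavior near $\alpha=0$ (where $\tau\sim\frac{1}{2\pi i}\log\alpha$, $X^{0}\sim 1$, $\theta t\sim 1$, $\beta\sim 1$, so both sides behave like $\alpha$) confirms there is no spurious constant. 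One should also note that, unlike the elliptic curve case, $X^{0}\theta t$ is genuinely a nontrivial function (it is not a period ratio that simplifies), which is precisely why the special coordinate $t$ itself fails to satisfy an analogue of Proposition \ref{DalphabyPF} and one must pass to $\tau$ instead.
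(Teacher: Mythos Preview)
Your proof is correct and follows essentially the same route as the paper: both start from the transformation law $\frac{1}{(X^{0})^{2}}C_{\alpha\alpha\alpha}(\partial\alpha/\partial t)^{3}=C_{ttt}$, substitute $C_{ttt}=2\pi i\,\kappa\,\partial_{t}\tau$ from the definition of $\tau$, apply the chain rule $\partial_{t}\tau=\partial_{\alpha}\tau\cdot\partial_{t}\alpha$, and then cancel and rearrange using $\theta t=\alpha\,\partial_{\alpha}t$. Your additional leading-order consistency check near $\alpha=0$ is a nice sanity check not spelled out in the paper but entirely in its spirit.
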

Note that the only places in which we have used the special K\"ahler
geometry are in the definition (\ref{defoftau}) of $\tau$ in terms
of $F_{tt}$ and the limit of (\ref{Cttt}) as $\alpha$ goes to $0$.
But we could have defined $\tau$ as the quantity satisfying the
equation ${1\over (X^{0})^{2}}C_{\alpha\alpha\alpha} ({\partial
\alpha\over
\partial t})^{3}=2\pi i  \kappa{\partial \tau\over
\partial t}$ and the condition $\lim_{\alpha\rightarrow 0}2\pi i  {\partial \tau\over
\partial t}=1$
 without referring to the prepotential $F(t)$ and Yukawa
coupling $C_{ttt}$, thus only the periods and no special K\"ahler geometry are used.\\

We shall now take Prop. \ref{analogueofDalpha} as the starting point
to construct the analogue of the ring of quasi-modular forms.
Motivated by the discussions of elliptic curve families, we define
the following triple
\begin{equation}
A=X^{0}\theta t, \quad B=(1-\alpha)^{1\over r}A,\quad
C=\alpha^{1\over r}A\,,
\end{equation}
where $r$ is some undetermined constant and does not show up in the
final form of the ring $\widehat{\mathcal{R}}$ we shall consider
later.
 Similarly we define
\begin{equation}\label{dfofE}
E=D\log{C^{r}B^{r}}=D\log \alpha\beta A^{2r}=(\alpha-\beta)A^{2}+D
\log A^{2r}\,.
\end{equation}
Now thanks to Prop. \ref{analogueofDalpha}, we get
\begin{equation}\label{A2}
A^{2}={D\alpha\over \alpha\beta}=D\log {\alpha\over \beta}=D\log
{C^{r}\over B^{r}}\,.
\end{equation}
We also have the following relations among these generators
following from the definitions of $A,B,C,E$ and (\ref{A2}),
\begin{eqnarray}
DB={1\over 2r}B(E-A^{2}),\quad DC={1\over 2r}C(E+A^{2})\,.
\end{eqnarray}

To get a closed ring, we need to prove $A$ satisfies a differential
equation with coefficients being holomorphic functions of
$\alpha,\beta$. Define
\begin{equation}
A'=X^{0},\quad A''=\theta t\,,
\end{equation}
 it turns out after adding $D^{i}A', D^{i}A'', i=1,2,3$, the ring
will close under the derivative $D$. Note that the generator $E$ is
already contained according to (\ref{dfofE}).
\begin{prop} \label{algebraofquasimodularforms}
The ring $\widetilde{\mathcal{R}}$ generated by $D^{i}A', i=0,1,2,3;
D^{j}A'', j=0,1,2$ and $B,C,B^{-1},C^{-1}$, is closed under the
derivative $D$.
\end{prop}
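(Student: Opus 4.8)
The plan is to exploit that $D$ is a $\mathbb{C}$-linear derivation, so that $D(\widetilde{\mathcal{R}})\subseteq\widetilde{\mathcal{R}}$ will follow as soon as $D$ sends each listed generator back into $\widetilde{\mathcal{R}}$. For the ``chain'' generators this is built in: $D(D^{i}A')=D^{i+1}A'\in\widetilde{\mathcal{R}}$ for $i=0,1,2$ and $D(D^{j}A'')=D^{j+1}A''\in\widetilde{\mathcal{R}}$ for $j=0,1$. For $B,C$ and their inverses I would invoke the identities $DB=\frac{1}{2r}B(E-A^{2})$ and $DC=\frac{1}{2r}C(E+A^{2})$ obtained just before the statement (whence $D(B^{-1})=-\frac{1}{2r}B^{-1}(E-A^{2})$ and $D(C^{-1})=-\frac{1}{2r}C^{-1}(E+A^{2})$), reducing everything to the fact that $A^{2}=(A')^{2}(A'')^{2}$ and $E$ lie in $\widetilde{\mathcal{R}}$; the latter is the remark ``$E$ is already contained'' in the text, since by (\ref{dfofE}) one has $E=(\alpha-\beta)A^{2}+2r\,DA/A$ with $A=A'A''$ and $DA=(DA')A''+A'DA''$. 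Hence the genuine content is to show $D^{4}A'\in\widetilde{\mathcal{R}}$ and $D^{3}A''\in\widetilde{\mathcal{R}}$.

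Both reductions I would carry out by translating between the derivation $D$ and $\theta=\alpha\partial_{\alpha}$. From Prop.~\ref{analogueofDalpha} we have $D\alpha=\alpha\beta A^{2}$, hence, as operators, $D=(D\alpha)\,\partial_{\alpha}=\beta A^{2}\,\theta$ and $\theta=(\beta A^{2})^{-1}D$, where $(\beta A^{2})^{-1}=\beta^{-1}A^{-2}=(A'A'')^{r-2}B^{-r}$ (using $B^{r}=\beta A^{r}$) lies in $\widetilde{\mathcal{R}}$ once $r$ is taken to be an integer $\geq 2$ --- which is harmless, as $r$ does not appear in the final ring. Because $A'=X^{0}$ is a period, the Picard-Fuchs operator (\ref{PFforcompactCYgeneral}) gives $\theta^{4}A'=\frac{\alpha}{\beta}(\sigma_{1}\theta^{3}+\sigma_{2}\theta^{2}+\sigma_{3}\theta+\sigma_{4})A'$, with $\alpha/\beta=C^{r}B^{-r}\in\widetilde{\mathcal{R}}$; substituting $\theta=(\beta A^{2})^{-1}D$ on both sides, expanding, and isolating the leading term $(\beta A^{2})^{-4}D^{4}A'$ presents $D^{4}A'$ as $(\beta A^{2})^{4}$ times an $\widetilde{\mathcal{R}}$-combination of $D^{m}A'$ with $m\leq 3$, so $D^{4}A'\in\widetilde{\mathcal{R}}$. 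For $A''=\theta t$ I would use that $1$ and $t$ are both ratios $\Pi/X^{0}$ of periods: writing $X^{0}\,\widetilde{\mathcal{L}}(\Pi/X^{0})=\mathcal{L}\,\Pi$ as in the elliptic case, the operator $\widetilde{\mathcal{L}}$ annihilates $1$, so its zeroth-order coefficient vanishes and $\widetilde{\mathcal{L}}=\mathcal{M}\circ\theta$ for a third-order operator $\mathcal{M}$; then $\mathcal{M}(\theta t)=\widetilde{\mathcal{L}}(t)=0$ is a third-order $\theta$-equation for $A''$, and the same $\theta\leftrightarrow D$ translation yields $D^{3}A''\in\widetilde{\mathcal{R}}$. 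The whole computation is the Calabi-Yau analogue of the elliptic one recorded in (\ref{DBC})--(\ref{DE}).

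The step I expect to be the real obstacle is the bookkeeping in this $\theta\leftrightarrow D$ dictionary: one must check that every denominator it produces already belongs to $\widetilde{\mathcal{R}}$. Besides the powers of $\beta A^{2}$, these include $\beta^{-1}=(A/B)^{r}$, the fractional powers $\alpha^{1/r}=\alpha(A'A''C^{-1})^{r-1}$ and $\beta^{1/r}=\beta(A'A''B^{-1})^{r-1}$, and the factors $(A')^{-1}=\beta^{1/r}A''B^{-1}$ that enter the coefficients $\theta^{j}\log X^{0}$ of $\widetilde{\mathcal{L}}$. All of these reduce to elements of $\widetilde{\mathcal{R}}$ over the coefficient ring $\mathbb{C}[\alpha,\beta]$ implicit in the statement (``coefficients being holomorphic functions of $\alpha,\beta$''), using only the relation $A^{r}=B^{r}+C^{r}$ together with $r\geq 2$; once these membership checks are in place, collecting terms gives the asserted closure under $D$.
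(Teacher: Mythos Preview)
Your proposal is correct and follows essentially the same route as the paper. Both arguments reduce the question to showing $D^{4}A'\in\widetilde{\mathcal{R}}$ and $D^{3}A''\in\widetilde{\mathcal{R}}$, invoke $\mathcal{L}X^{0}=0$ for the first and the conjugated equation $\widetilde{\mathcal{L}}\,t=0$ (equivalently, your factorization $\widetilde{\mathcal{L}}=\mathcal{M}\circ\theta$) for the second, and then translate via $\theta=(\beta A^{2})^{-1}D$; your treatment is simply more explicit about the $B,C$ generators, about why $E$ already lies in the ring, and about the denominator bookkeeping, which the paper handles tersely and partly defers to the remark following the proof (where it acknowledges that negative powers of generators appear and are absorbed by a later choice of generators, consistent with the eventual definition $\mathcal{R}=\mathbb{C}[(X^{0})^{\pm1},(\theta t)^{\pm1},B^{\pm1},C^{\pm1}]$ in Section~4.5).
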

\begin{proof}
The Picard-Fuchs equation tells that if one defines
\begin{equation*}
\tilde{\mathcal{L}}=(\theta+\theta \log X^{0})^{4}-\alpha
\prod_{i=1}^{4}(\theta+\theta \log X^{0}+c_{i})\,,
\end{equation*}
then $X^{0}\tilde{\mathcal{L}}\, {\Pi\over
X^{0}}=\mathcal{L}\,\Pi=0\,~\textrm{for a period}~\Pi\,.$ In
particular, one has $\mathcal{L}\, {X^{0}}=0$ and
$\tilde{\mathcal{L}}\, {X^{1}\over X^{0}}=\tilde{\mathcal{L}}\,t=0$.
The first equation $\mathcal{L}\, {X^{0}}=0$ tells that
$\theta^{4}X^{0}$ could be expressed as a polynomial of
$\theta^{i}X^{0},i=0,1,2,3$ with coefficients being rational
functions of $\alpha,\beta$. Using the relation $\theta=\beta^{-1}
(X^{0}\theta t)^{-2}D$ following from Prop.
\ref{analogueofDalpha}, we know that $D^{4}X^{0}$ is a polynomial
in $D^{i}X^{0}, 0, i=1,2,3; \,D^{j}\theta t,j=0,1,2,3$ and
$B,C,B^{-1},C^{-1}$. Similarly, by considering the second equation
$\tilde{\mathcal{L}}\,t=0$, one sees that $\theta^{3}\theta t$ and
thus $D^{3}\theta t$ is also contained in the ring as claimed.
\end{proof}
\begin{rem}
Note that when taking the derivative $D$, negative powers of
generators will appear. But as mentioned in Remark
\ref{reducedring}, to avoid them one only needs to choose a suitable
set of generators carefully. In fact, in the final form of the
graded ring $\widehat{\mathcal{R}}$ we shall consider below, we are
going to make a specific choice of generators so that no negative
powers will appear in the derivatives of the generators.
\end{rem}
From Prop. \ref{analogueofDalpha} one can easily see that in fact
the subring generated by $D^{i}A', i=0,1,2,3; D^{j}A'',
j=0,1,2;\alpha^{\pm}, \beta^{\pm}$ is also closed under $D$. We
shall denote this differential subring by
$(\widetilde{\mathcal{R}}^{sub},D)$ in which the constant $r$ does
not show up.

\subsubsection*{Picard-Fuchs equations for non-compact Calabi-Yau
threefold families}\label{quasiringfornoncompactCY}

Now we consider the non-compact Calabi-Yau three--fold families
(\ref{PFfornoncompactCY}) whose Picard-Fuchs equations reduce to
some third order differential equations of the form
$\mathcal{L}_{\textrm{elliptic}}\circ \theta $. For each of these
families, as explained in \cite{Alim:2013}, one can identify the
base $\mathcal{M}$ with a certain modular curve $X_{0}(N)$.

Then one has $X^{0}=1$ and thus $A=\theta t=\omega_{0}$; moreover,
by choosing the normalization for $F_{t}$ suitably, we can make
$\theta F_{t}$ to be $2\pi i\kappa \omega_{1}$, where
$\omega_{0},\omega_{1}$ are the periods of
$\mathcal{L}_{\textrm{elliptic}}$ given in (\ref{periodsofcurve}).
Now the parameter $\tau={1\over 2\pi i}\kappa^{-1}F_{tt}={1\over
2\pi i}\kappa^{-1}{\theta F_{t}\over \theta t}$ is equal to
${\omega_{1}\over \omega_{0}}$, and the parameter $\tau$ is the
transcendental modulus of the elliptic curve sitting inside the
Calabi-Yau threefold and lies in the upper half plane
$\mathcal{H}$.
 Therefore, in
these cases, one has $\mathcal{R}\cong \mathbb{C}[A^{\pm 1},B^{\pm
1},C^{\pm 1}]=M_{*}(\Gamma_{0}(N)),\widetilde{\mathcal{R}}\cong
\mathbb{C}[A^{\pm 1},B^{\pm 1},C^{\pm
1},E]=\widetilde{M}_{*}(\Gamma_{0}(N))$. See \cite{Alim:2013} and
references therein for details.

\subsubsection*{Gradings}There are two natural gradings, denoted by $(k,m)$ on the
ring $\tilde{\mathcal{R}}$. The grading $m$ indicates that the
element is a section of $\mathcal{L}^{m}$ and will be called the
degree. Recall that $X^{0}$ is a period of the form $\int_{C}\Omega$
and $C_{\alpha\alpha\alpha}=-\int_{X}\Omega\wedge
\partial_{\alpha}^{3}\Omega$, where $\Omega$ is a section of the Hodge
line bundle $\mathcal{L}\rightarrow \mathcal{M}$, we can easily
figure out the degree of the generators. The second grading, called
the weight $k$, is motivated by the studies of elliptic curve
families and non-compact Calabi-Yau threefolds discussed above, in
which $\tau$ is really parametrizing the upper half plane
$\mathcal{H}$. We then defines the degrees and weights for the
quantities $X^{0}, \theta t , B,C,\alpha, (\alpha^{3}
C_{\alpha\alpha\alpha})$ to be $(1,0),(0,1),(1,1),(1,1),(0,0),(0,2)$
respectively. Taking the derivative $D$ with respect to $\tau$ will
not change the degree, but raise the weight by $2$. Then we have the
decomposition $\mathcal{R}=\oplus_{(k,m)}\mathcal{R}_{k,m}$.
Similarly, there is a
such decomposition for the graded differential ring $(\widetilde{\mathcal{R}},D)$.\\

The above discussions suggests that the rings
$\mathcal{R}=\mathbb{C}[(X^{0})^{\pm 1},(\theta t)^{\pm 1},B^{\pm
1},C^{\pm 1}]$, $\widetilde{\mathcal{R}}=\mathcal{R}\otimes
\mathbb{C}[D^{i}X^{0},i=1,2,3; D^{j}\theta t,j=1,2]$, defined on the
deformation space $\mathcal{M}$, are the analogues of
$M_{*}(\Gamma),\widetilde{M}_{*}(\Gamma)$ defined on the modular
curve $X_{\Gamma}$, and the weight $k$ plays the role of modular
weight. The generators $D^{i}X^{0},i=1,2,3; D^{j}\theta t,j=1,2$
should be considered as the analogue of quasi-modular forms. We
shall give
more evidences for this later.\\

Similar to what was explained in Remark \ref{reducedring}, one can
get a smaller differential ring $\widetilde{\mathcal{R}}^{sub}$. It
turns out that using special K\"ahler geometry of the deformation
space $\mathcal{M}$, one may further reduce the number of generators
in $\widetilde{\mathcal{R}}-\mathcal{R}$. For example, for the
quintic mirror family case considered, the sequence $D^{i}\theta
t,i=0,1,2$ could be reduced to $D^{i}\theta t,i=0,1$ as discussed in
\cite{Lian:1994zv, Hosono:1996jv, Yamaguchi:2004bt}. This is proved
using the fact that $t$ is the canonical coordinate on the
deformation space $\mathcal{M}$(more than just being the ratio of
two periods), as we shall discuss in the next section.

%%%%%%%%%%%%sec 4%%%%%%%%

\section{\textup{Differential rings from special K\"ahler geometry}}\label{ringsfromgeometry}

In this section, we shall use properties of the special K\"ahler
geometry on $\mathcal{M}$ to reduce the number of generators in
$\widetilde{\mathcal{R}}$, and more importantly to define
$\widehat{\mathcal{R}}$ as the ``non-holomorphic completion" of
$\widetilde{\mathcal{R}}$.

We first start by reviewing some basic properties about the
canonical coordinates and holomorphic limits which will be important
later. The discussions on these concepts apply to multi-parameter
Calabi-Yau families.

\subsection{Canonical coordinates and holomorphic limits}

On a Kahler manifold $M$, according to \cite{Bershadsky:1993cx}, the
canonical coordinates $t=\{t^{i}\}_{ i=1,2,\cdots \dim M}$ around
the base point $p$ are defined to be the holomorphic coordinates
such that
\begin{equation}\label{canonicalcoordinates}
\partial_{t^{I}} K_{i}|_{p}=0=\partial_{t^{I}}\Gamma_{ij}^{k}|_{p}\,
\end{equation}
where $I$ is a multi-index and
$\partial_{t^{I}}=\partial_{t^{i_{1}}}\partial_{t^{i_{2}}}\cdots\partial_{t^{i_{m}}}
,\,m=|I|\geq 0.$ Note that the first equation is a condition on the
choice of the K\"ahler potential which transforms under the rule
$K\mapsto K+f+\bar{f}$, where $f$ is purely holomorphic.

These coordinates are studied elsewhere in different contexts, for
example \cite{Kapranov:1999, Higashijima:2001, Higashijima:2002,
Gerasimov:2004yx}. They are the normal coordinates for the K\"ahler
geometry \cite{Higashijima:2001, Higashijima:2002} and can be
constructed using the holomorphic exponential map
\cite{Kapranov:1999}.

\subsubsection*{Exponential map and Gaussian normal coordinates}

Now we shall recall some basic facts from Riemannian geometry. Given
a Riemannian manifold $M$ with the metric $G_{ij}$, the Gaussian
normal coordinates base at the point $p\in M$ could be obtained in
two ways: either as a coordinate system centered around $p$ such
that $\textrm{Sym}(\partial_{I}\Gamma_{ij}^{k})|_{p}=0\,, |I|\geq
0$, where $\textrm{Sym}(\partial_{I}\Gamma_{ij}^{k})$ means the
symmetrization of $\partial_{I}\Gamma_{ij}^{k}$ with respect to the
sub-indices $I\cup\{i,j\}$; or as linear coordinates on the tangent
vector space $T_{p}M$ defined by the exponential map $\exp_{p}:
T_{p}M\rightarrow M$. Using the second view point, we get the
following description: suppose a point $q$ in a small neighborhood
of $p$ on $M$ is on the geodesic $\gamma(s)=\exp_{p}(sv)$, where
$|v|=1$, and $s$ is the arc-length parameter. Assume
$q=\exp_{p}(sv)$ for some $s$ and fix a coordinate system
$x=\{x^{i}\}$ near $p$ on $M$, then the Gaussian normal coordinates
$\xi=\{\xi^{i}\}$ of $q=\exp_{p}(sv)$ are related to the coordinates
$x=\{x^{i}\}$ by using the equations for the geodesic:
\begin{equation}\label{Gaussiannormalcoordinates}
x^{i}(\exp_{p}(sv))=x^{i}(p)+s\xi^{i}-\sum_{N=2}^{\infty}{1\over
N!}\Gamma^{i}_{N}|_{p}s^{N}\xi^{N}\,,
\end{equation}
where
$\Gamma^{i}_{N}:=\nabla_{N-\{i_{1},i_{2}\}}\Gamma^{i}_{i_{1}i_{2}}$
are computed in $x=\{x^{i}\}$ coordinates, and $N$ is a multi-index
as before.
\subsubsection*{Holomorphic exponential map and canonical coordinates on K\"ahler manifolds}
Now assume $M$ is a K\"ahler manifold whose K\"ahler potential is
$K(z,\bar{z})$, where $z=\{z^{i}\}_{i=1,2\cdots \dim M}$ is a
complex coordinate system. Suppose the base point $p$ is taken to be
$(z_{*},\bar{z}_{*})$. From the second equation in
(\ref{canonicalcoordinates}), one can solve \cite{Higashijima:2001,
Higashijima:2002, Gerasimov:2004yx} for $t$ and get the following
expression similar to (\ref{Gaussiannormalcoordinates}):
\begin{equation}\label{solofcanonicalcoordinates}
t^{i}(z;z_{*},\bar{z}_{*})=K^{i\bar{\jmath}}(
z_{*},\bar{z}_{*};z_{*},\bar{z}_{*})(K_{\bar{\jmath}}
(z,\bar{z}_{*};z_{*},\bar{z}_{*})-K_{\bar{\jmath}}(z_{*},\bar{z}_{*};z_{*},\bar{z}_{*}))\,,
\end{equation}
where a function $f$ defined near the base point
$(z_{*},\bar{z}_{*})$ is denoted by
$f(z,\bar{z};z_{*},\bar{z}_{*})$. The holomorphic function
$f(z,\bar{z}_{*};z_{*},\bar{z}_{*})$ means the degree zero part in
the Taylor expansion of the function
$f(z,\bar{z};z_{*},\bar{z}_{*})$ in $\bar{z}$ centered at
$\bar{z}_{*}$, where one thinks of $(z,\bar{z})$ as independent
coordinates. This will be explained below using holomorphic
exponential map.

The canonical coordinates can not be defined in terms of geodesics in
the Riemannian geometry since the exponential map is in general not
holomorphic. However, there is \cite{Kapranov:1999} a nice
construction of holomorphic exponential map which gives rise to
these canonical coordinates. To define the holomorphic exponential
map, we first regard the complex manifold $M$ as a Riemannian
manifold and thus get the map $\exp^{\mathbb{R}}_{p}:
T^{\mathbb{R}}_{p}M\rightarrow M$. This also defines the Gaussian
normal coordinates $\xi$. Thinking of $T^{\mathbb{R}}_{p}M$ as a
complex vector space equipped with the complex structure induced by
the complex structure on $M$, then in general the map
$\exp_{p}^{\mathbb{R}}: (\xi,\bar{\xi})\mapsto
(z(\xi,\bar{\xi}),\bar{z}(\xi,\bar{\xi}))$ is not holomorphic. Now
with the assumption that the metric $G_{i\bar{\jmath}}(z,\bar{z})$
is analytic in $z,\bar{z}$, we can analytically continue the map
$\exp_{p}^{\mathbb{R}}$ to the corresponding complexifications
$T^{\mathbb{C}}_{p}M, M_{C}=M\times \overline{M}$, where
$\overline{M}$ is the complex manifold with opposite complex
structure as $M$.

The coordinates on the complexifications $T^{\mathbb{C}}_{p}M,
M_{C}=M\times \overline{M}$ are given by $(\xi,\eta)$ and $(z,w)$
respectively, they are the analytic continuation of the coordinates
$(\xi,\bar{\xi}), (z,\bar{z})$ from
$T^{\mathbb{R}}_{p}M\hookrightarrow T^{\mathbb{C}}_{p}M,
\Delta:M\hookrightarrow M_{C}=M\times \overline{M}$ respectively,
where $\Delta: M\rightarrow M\times \overline{M}, p\mapsto
(p,\bar{p})$ is the diagonal embedding. Here the underlying point of
$\bar{p}$ is really the same as $p$, but we have used the barred
notation to indicated that it is a point on the complex manifold
$\overline{M}$.

Since the Christoffel symbols $\Gamma_{ij}^{k}(z,\bar{z})$ are
analytic in $(z,\bar{z})$, we know that the map
$\exp_{p}^{\mathbb{C}}: (\xi,\eta)\mapsto (z(\xi,\eta),w(\xi,\eta))$
is analytic, that is, holomorphic in $(\xi,\eta)$. Moreover, the map
$\exp_{p}^{\mathbb{C}}$ defines a local bi-holomorphism from a small
neighborhood around the point $0$ inside $T_{p}^{\mathbb{C}}M$ to a
small neighborhood of the point $(p,\bar{p})$ inside
$M_{\mathbb{C}}$. One claims that
$\exp^{\mathbb{C}}_{p}|_{T^{1,0}M}$ gives a holomorphic map
$T^{1,0}_{p}M\rightarrow M$ which is a local bi-holomorphism from a
small neighborhood of $0\in T_{p}^{1,0}M$ to a small neighborhood of $p\in
M$. To show that it maps $T^{1,0}_{p}M$ to $M$, we only need to show
that $w\circ \exp^{hol}_{p}|_{T^{1,0}_{p}M}=w(\bar{p})$, that is,
$w(\xi,\eta)|_{\eta=0}=w(\bar{p})$. Recall that $\bar{z}$ and thus
$w$ satisfies the equation for the geodesic equation
\begin{equation*}
{d^{2}\over
ds^{2}}\bar{z}^{k}+\Gamma^{\bar{k}}_{\bar{i}\bar{j}}{d\bar{z}^{\bar{i}}\over
ds}{d\bar{z}^{\bar{j}} \over ds}=0,\, {d\bar{z}^{\bar{k}}\over
ds}(0)=\bar{\xi}^{\bar{k}}=0,\, \bar{z}(0)=\bar{z}(\bar{p})\,.
\end{equation*}
It is easy to see that $w(s)=w(\bar{p})$ is one and thus the unique
solution to the differential equation. Therefore, $w\circ
\exp^{\mathbb{C}}_{p}(\xi,\eta=0)=w(\bar{p})$ as desired. Since
$z(\xi,\eta)$ is holomorphic in both $\xi,\eta$, we know
$z(\xi,\eta=0)$ is holomorphic in $\xi$. The same reasoning for the
exponential map $\exp_{p}^{\mathbb{R}}$ shows that it is locally a
bi-holomorphism.

Hence one gets a holomorphic exponential map
$\exp^{\textrm{hol}}_{p}: T^{1,0}_{p}M\rightarrow M$. We now denote
the coordinate $\xi$ on $T^{1,0}_{p}M$ by $t$, this is then the
canonical coordinates desired since the equation satisfied by $t$
which is similar to (\ref{Gaussiannormalcoordinates}) implies the
second equation in ({\ref{canonicalcoordinates}}). This can be
checked by direct computations.

The exponential maps $\exp^{\mathbb{R}}_{p}$ and $\exp^{hol}_{p}$
are contrasted as follows:
\begin{eqnarray*}
\exp_{p}^{\mathbb{R}}&=&\exp^{\mathbb{C}}_{p}|_{T^{\mathbb{R}}_{p}M}=\exp^{\mathbb{C}}_{p}|_{T^{1,0}_{p}M\oplus
\overline{T_{p}^{1,0}M}}\,,\\
\exp_{p}^{\textrm{hol}}&=&\exp^{\mathbb{C}}_{p}|_{T^{1,0}_{p}M}=\exp^{\mathbb{C}}_{p}|_{j(T^{1,0}_{p}M)=T^{1,0}_{p}M\oplus
\{0\}}\,.
\end{eqnarray*}
where $T^{1,0}_{p}M\oplus \overline{T^{1,0}_{p}M}$ means the image
of the map $ T^{1,0}_{p}M\rightarrow T^{1,0}_{p}M\oplus
T^{0,1}_{p}M,\, v\mapsto (v,v^{*})$, where $v^{*}$ is the complex
conjugate of $v$;
 and $j(T^{1,0}_{p}M)$ is the
image of the map $j: T^{1,0}_{p}M\mapsto T^{1,0}_{p}M\oplus
T^{0,1}_{p}M,\, v\mapsto (v,0)$.

\subsubsection*{Holomorphic limit}

The holomorphic limit of any function $f(z,\bar{z})$ based at
$z_{*}$ is defined as follows. First one analytically continues the
map $f$ to a map defined on $M_{\mathbb{C}}$. Using the fact that
$\exp^{\mathbb{C}}_{p}$ is a local diffeomorphism from
$T_{p}^{\mathbb{C}}M$ to $M_{\mathbb{C}}$, we get $\hat{f}=f\circ
\exp_{p}^{\mathbb{C}}: T_{p}^{\mathbb{C}}M\rightarrow \mathbb{C}$.
The holomorphic limit of $f(z,\bar{z})$ is given by
$\hat{f}|_{j(T^{1,0})}:T^{1,0}_{p}M\rightarrow
T_{p}^{\mathbb{C}}M\rightarrow \mathbb{C}$.

From now on, to maintain consistency with the notations used in the
literature, we shall use $(z,\bar{z}),(t,\bar{t})$ for
$(z,w),(\xi,\eta)$ when considering holomorphic limits, if no
confusion arises. In the following, sometimes we shall drop the
notations $z_{*},\bar{z}_{*}$ for the base point if it is clear from
the context.

\begin{rem}
In the canonical coordinates $t$ on the K\"ahler manifold $M$, the
holomorphic limit is described by $ f\circ
\exp^{\textrm{hol}}=\hat{f}|_{j(T^{1,0})}: T^{1,0}\times
\{0\}\rightarrow \mathbb{C}, t\mapsto f\circ
\exp^{\textrm{hol}}(t)$. In terms of an arbitrary local coordinate
system $z$ on $M$, taking the holomorphic limit of the a function
$f(z,\bar{z})$ at the base point $z_{*}$ is the same as keeping the
degree zero part of the Taylor expansion of $f(z,\bar{z})$ with
respect to $\bar{z}$, where the center of the Taylor expansion is
$\bar{z}_{*}$. That is, it is the evaluation map $ev_{\bar{z}_{*}}:
f(\bullet,\bullet)\mapsto f(\bullet, \bar{z}_{*})$. This is the
limit that is used in the study of topological string theory in
\cite{Bershadsky:1993ta,Bershadsky:1993cx}.

 One thing that needs to
be taken extra care of is the holomorphic limit of $\det G$
appearing in computing the topological string partition functions.
One has
$G_{z^{i}\bar{z}^{\bar{\jmath}}}=G_{t^{a}\bar{t}^{\bar{b}}}{\partial
t^{a}\over
\partial z^{i}}{\partial \bar{t}^{\bar{b}}\over \partial \bar{z}^{\bar{\jmath}}},\,i,\bar{\jmath},a,\bar{b}=1,2,\cdots \dim M$ and $\log \det
G_{z^{i}\bar{z}^{\bar{\jmath}}}=\log \det
G_{t^{a}\bar{t}^{\bar{b}}}+\log \det{\partial t^{a}\over
\partial z^{i}}+\log \det{\partial \bar{t}^{\bar{b}}\over \partial \bar{z}^{\bar{\jmath}}} $.
Since only the holomorphic derivative of $\log \det
G_{z^{i}\bar{z}^{\bar{\jmath}}}$ will appear in the topological
string partition functions (and also in the ring
$\widehat{\mathcal{R}}$ we shall construct below), the purely
anti-holomorphic term will disappear. Moreover, from
(\ref{canonicalcoordinates}) one can see that $\log \det
G_{t^{a}\bar{t}^{\bar{b}}}(t,\bar{t})=\log \det
G_{t^{a}\bar{t}^{\bar{b}}}(t_{*},\bar{t}_{*})$ is independent of
$t$. Therefore, when computing $\log \det
G_{z^{i}\bar{z}^{\bar{\jmath}}}$ one can effectively extract the
purely anti-holomorphic term and the term $\log \det
G_{t^{a}\bar{t}^{\bar{b}}}(t,\bar{t})$, then one only needs to take
the holomorphic limit of the term $\log \det{\partial t^{a}\over
\partial z^{i}}$.
This could also be seen from (\ref{solofcanonicalcoordinates}),
which implies that
\begin{equation} {\partial t^{i}\over \partial
z^{k}}(z,\bar{z}_{*})=K^{i\bar{\jmath}}(z_{*},\bar{z}_{*})
K_{k\bar{\jmath}}(z,\bar{z}_{*})\,.
\end{equation}
Therefore, in the coordinate system $z$, the holomorphic limit of
the metric $G_{k\bar{\jmath}}$, denoted by $\lim G_{k\bar{\jmath}}$,
is given by
\begin{equation}\lim
G_{k\bar{\jmath}}(z,\bar{z})=G_{k\bar{\jmath}}(z,\bar{z}_{*})={\partial
t^{i}\over \partial z^{k}}(z)G_{i\bar{\jmath}}(z_{*},\bar{z}_{*})\,.
\end{equation}
\end{rem}

\subsubsection*{Variation of the holomorphic exponential map and canonical coordinates}

The holomorphic exponential map $\exp^{\textrm{hol}}_{p}$ does not
depend holomorphically on the base point $z_{*}$
\cite{Kapranov:1999}. The canonical coordinates thus also have
non-holomorphic dependence, as we shall also see below in some
examples. This is due to the fact that the space $T_{z_{*}}^{1,0}M$
changes non-holomorphically when $z_{*}$ moves in $M$: that is, $
{\partial\over \partial \bar{z_{*}}}\pi_{J_{z_{*}}}\neq 0\,, $ where
$\pi_{J_{z_{*}}}={1\over 2}(I-iJ_{z_{*}})$ is the projection from
$T^{\mathbb{C}}_{z_{*}}M$ to $T^{1,0}_{z_{*}}M$. For a more precise
discussion on this, see \cite{Kapranov:1999}.

Take $M$ to be the base $\mathcal{M}$ of the Calabi-Yau threefold
family $\pi:\mathcal{X}\rightarrow \mathcal{M}$ and think of
$T_{z_{*}}^{1,0}\mathcal{M}$ as a Lagrangian in
$T^{\mathbb{C}}_{z_{*}}\mathcal{M}$, this then fits in the frame
work of geometric quantization and is related to the basepoint
independence of the total free energy
$\mathcal{Z}=\sum_{g=0}^{\infty}\lambda^{2g-2}\mathcal{F}^{(g)}$ of
the topological string theory for the family, as studied in
\cite{Witten:1993ed}. The background (base point) independence of
$\mathcal{Z}$ tells that it satisfies some wave-like equations on
$\mathcal{M}$ arising from geometric quantization. These equations
are shown \cite{Witten:1993ed} to be equivalent to the master
anomaly equations for $\mathcal{Z}$ in \cite{Bershadsky:1993cx}
which are identical to the holomorphic anomaly equations for the
topological string partition functions $\mathcal{F}^{(g)}$.

\subsection{Examples of canonical coordinates}
In this section we shall compute the canonical coordinates for
some K\"ahler manifolds.
\begin{expl}[Fubini-Study metric]
Consider the Fubini-Study metric defined on $\mathbb{P}^{1}$
\begin{equation*}
\omega_{FS}={i\over
2}{1\over (1+|z|^{2})^{2}}dz\wedge d\bar{z}
\end{equation*}
with K\"ahler potential $K=\ln (1+|z|^{2})$. It follows then
\begin{equation*}
K_{z}={\bar{z}\over (1+|z|^{2})},\quad K_{z\bar{z}}={1\over
(1+|z|^{2})^{2}}, \quad\partial_{z}^{N}K_{\bar{z}}={(-1)^{N+1}N!
\bar{z}^{N-1}\over (1+|z|^{2})^{N+1}},\,N\geq 1\,.
\end{equation*}
At the point $p$ represented by $z_{*}=0$, we can see that
$\partial_{z}^{N}K|_{p}=0=\partial_{z}^{N}K_{z\bar{z}}|_{p}, N\geq
1$. Hence $z$ is the canonical coordinate based at $z_{*}=0$. To
find the canonical coordinate at a generic point $p$ represented by
$z_{*}$, we apply (\ref{solofcanonicalcoordinates}) and get
\begin{equation*}
t(z;z_{*},\bar{z}_{*})=(1+|z_{*}|^{2})^{2}\left({z\over
(1+z\bar{z}_{*})}-{z_{*}\over (1+z_{*}\bar{z}_{*})}\right)\,.
\end{equation*}
In particular, at $z_{*}=0$, this coincides with $z$. The
non-holomorphic dependence on the base point can be easily seen from
this formula.
\end{expl}

\begin{expl}[Poincare metric] Consider the $\textrm{SL}(2,\mathbb{Z})$
invariant metric
\begin{equation*}
\omega={i\over 2} K_{\tau\bar{\tau}}d\tau\wedge d\bar{\tau}={1\over
y^{2}} dx\wedge dy
\end{equation*}
on the Poincare upper half plane
$\mathcal{H}$, where $e^{-K}={\tau-\bar{\tau}\over i}$, $\tau=x+iy$.
Straightforward computations show that
\begin{equation*}
K_{\bar{\tau}}={1\over \tau-\bar{\tau}}, \,
K_{\tau\bar{\tau}}=-{1\over (\tau-\bar{\tau})^{2}}\,.
\end{equation*}
It follows that the canonical coordinate based at $p$ given by
$\tau_{*}$ is
\begin{equation*}
t(\tau;\tau_{*},\bar{\tau}_{*})=-(\tau_{*}-\bar{\tau}_{*})^{2}\left(
{1\over \tau-\bar{\tau_{*}}}-{1\over \tau_{*}-\bar{\tau}_{*}}\right)
\end{equation*}
In particular, if one takes the base point $\tau_{*}=i\infty$, then
the canonical coordinate $t$ coincides with the complex coordinate
on $\mathcal{H}$ from the embedding $\mathcal{H}\hookrightarrow
\mathbb{C}$.
\end{expl}

\begin{expl}[Weil-Petersson metric for elliptic curve
family]\label{WPforellipticcurve} Taking the elliptic curves
parametrized by $\mathcal{H}$. As in the proof of Prop.
\ref{Yukawaintau}, take the holomorphic top form
$\Omega_{\tau}=dz_{\tau}$ on $T_{\tau}$. Using the diffeomorphism
from the fiber $T_{\tau}$ to the fiber $T_{\tau_{*}}$
\begin{equation*}
z_{\tau}={\tau-\bar{\tau}_{*}\over
\tau_{*}-\bar{\tau}_{*}}z_{\tau_{*}}+{\tau_{*}-\tau\over
\tau_{*}-\bar{\tau}_{*}}\bar{z}_{\tau_{*}}\,,
\end{equation*}
one can compute the K\"ahler potential for the Weil-Peterson metric
from
\begin{equation*}
e^{-K(\tau,\bar{\tau};\tau_{*},\bar{\tau}_{*})}=i\int_{T_{\tau}}\Omega_{\tau}\wedge
\overline{\Omega}_{\tau}={\tau-\bar{\tau}\over i}.
\end{equation*}
This is then the Poincare metric on the upper half plane considered
in the above example.
\end{expl}
\begin{expl}
Suppose on the K\"ahler manifold $M$ there exists complex
coordinates $z=\{z^{i}\}$ and a holomorphic function $F(z)$, so that
the K\"ahler metric is given by
\begin{equation*}
\omega={i\over 2}\mathrm{Im}\tau~ dz\wedge d\bar{z}
=i\partial\bar{\partial}K\,,
\end{equation*}
where $K={1\over 2}~\mathrm{Im} w\bar{z},
w_{i}(z)={\partial_{z^{i}}F(z)},\,
\tau_{ij}(z)=\partial_{z^{i}}\partial_{z^{j}}F(z)$. Manifolds
satisfying these properties are studied in detail in
\cite{Freed:1999sm}. The canonical coordinates are then given by
\begin{equation*}
t^{i}(z;z_{*},\bar{z}_{*})={1\over
\tau_{ij}(z_{*})-\bar{\tau}_{ij}(\bar{z}_{*})}(w_{j}(z;z_{*},\bar{z}_{*})-w_{j}(z_{*};z_{*},\bar{z}_{*})-
\bar{\tau}_{jk}(\bar{z}_{*})(z^{k}-z^{k}_{*}))\,.
\end{equation*}
\end{expl}

\subsection{Special K\"ahler metric on deformation spaces} 

Now we take $M$ to be the base of the family $\pi:
\mathcal{X}\rightarrow \mathcal{M}$ of Calabi-Yau threefolds $X$.
Assume that $\dim \mathcal{M}=h(=h^{2,1}(X))$.

Fixing a section $\Omega(z)$ of the the Hodge line bundle
$\mathcal{L}\rightarrow \mathcal{M}$ and choosing a symplectic basis
$\{A^{I},B_{J}\}_{I,J=0,1,\cdots h}$ for $H_{3}(X,\mathbb{Z})$, then
the periods are given by
\begin{equation*}
(\int_{A^{0}}\Omega, \int_{A^{a}}\Omega,
\int_{B^{a}}\Omega,\int_{B_{0}}\Omega)=
(X^{0},X^{a},\mathcal{F}_{a},\mathcal{F}_{0})=X^{0}(1,t^{a},F_{t^{a}},
2F-t^{a}F_{t^{a}})\,,
\end{equation*}
 where $a=1,2,\cdots h$ and $\mathcal{F}(X^{I})$ is \cite{Strominger:1990pd, Bershadsky:1993cx}
 a holomorphic homogeneous function of $X$
of degree $2$. Here the function $F$ is defined by
$({X^{0}})^{-2}\mathcal{F}$ and the sub-indices mean derivatives
with respect to corresponding coordinates.

Now assume that $z_{*}$ is the large complex structure limit defined
by $z=0$ and $A^{0}$ is the vanishing cycle at this point. Then near
the base point $z_{*}$, the quantities
$t^{a}(z;z_{*},\bar{z}_{*})=X^{a}(z;z_{*},\bar{z}_{*})/X^{0}(z;z_{*},\bar{z}_{*})\sim
\ln z^{a}+\mathcal{O}(z^{0}),a=1,2,\cdots h$ gives a local
coordinate system on the manifold $\mathcal{M}$ due to local Torelli
theorem which says that the period map $\mathcal{P}:
\mathcal{M}\rightarrow \mathbb{P}H^{3}(X,\mathbb{C}),\, z\mapsto
[X^{I}(z),F_{J}(z)]$ is a local isomorphism. These coordinates, as
ratios of the periods, are called special coordinates in the
literature. Then the K\"ahler potential of the Weil-Petersson metric
is determined from
\begin{equation}
e^{-K}=iX^{0}\overline{X^{0}}\left(2\overline{F(t)}-2F(t)+(t^{a}-\bar{t}^{a})(F_{a}+\overline{F_{a}})\right),
\end{equation}
Using the fact that the prepotential $F(t)$ has the form
$F(t)={\kappa_{abc}\over 6!}t^{a}t^{b}t^{c}+Q(t)+\sum_{d}N_{d}e
^{dt}$, where $Q(t)$ is a quadratic polynomial of $\{t^{a}\}$, it
can be shown that the special coordinates
$t^{a}(z;z_{*},\bar{z}_{*}),a=1,2,\cdots h$ defined near the large
complex structure limit $z_{*}$ are the canonical coordinates based
at $z_{*}$, see \cite{Bershadsky:1993cx} for details. Moreover,
rewriting the above equation as
\begin{equation}\label{splitting}
e^{-K(z,\bar{z})}=X^{0}\bar{X^{0}}e^{-\mathrm{K}(t,\bar{t})},\quad
e^{-\mathrm{K}(t,\bar{t})}=i
\left(2\overline{F(t)}-2F(t)+(t^{a}-\bar{t}^{a})(F_{a}+\overline{F_{a}})\right)\,,
\end{equation}
one then gets \cite{Ferrara:1991aj}
\begin{equation}
K_{z^{i}}=-\partial_{z^{i}}\log X^{0}+\mathrm{K}_{t^{a}}{\partial
t^{a}\over
\partial z^{i}},\quad \Gamma_{z^{i}z^{j}}^{z^{k}}={\partial z^{k}\over \partial t^{a}}{\partial \over \partial z^{i}} {\partial t^{a}\over
\partial z^{j}}+{\partial z^{k}\over \partial t^{c}}\mathrm{\Gamma}_{t^{a}t^{b}}^{t^{c}}{\partial t^{a}\over \partial z^{j}}
{\partial t^{b}\over \partial z^{j}}\,,
\end{equation}
where $\mathrm{\Gamma}_{t^{a}t^{b}}^{t^{c}}$ is computed in the
metric given by the K\"ahler potential $\mathrm{K}(t,\bar{t})$. Then
at the large complex structure limit $z_{*}$, since $\{t^{a}\}$ are
the canonical coordinates, according to
(\ref{canonicalcoordinates}), one has the following holomorphic
limits:
\begin{equation}\label{hollimitsofconnections}
\lim K_{z^{i}}=-\partial_{z^{i}}\log X^{0},\quad \lim
\Gamma_{z^{i}z^{j}}^{z^{k}}={\partial z^{k}\over \partial
t^{a}}{\partial \over \partial z^{i}} {\partial t^{a}\over
\partial z^{j}}\,.
\end{equation}
In the remaining of this work we will only consider the holomorphic
limit based at the large complex structure $z_{*}=0$ which is given
by $\bar{t}=\overline{ i\infty}$, and simply denote this limit by
{\it lim} without specifying the base point. This limit is
interesting since it is in this particular limit that the
topological string partition functions on a Calabi-Yau threefold
$X$ are identical (under the mirror map) to the generating functions
of Gromov-Witten invariants of its mirror manifold $\check{X}$.

\subsection{Ring of Yamaguchi-Yau and the construction of the triple}

In this section, we shall construct the ring
$\widehat{\mathcal{R}}$. We shall review the construction of a ring
in \cite{Yamaguchi:2004bt} by Yamaguchi-Yau for the quintic mirror
family. The purpose is to reduce the number of generators for the
algebra $\widetilde{\mathcal{R}}$ defined above and also find its
non-holomorphic completion $\widehat{\mathcal{R}}$.

The construction of Yamaguchi-Yau says that the antiholomorphic
dependence of the normalized topological string partition functions
$F^{(g)}=(X^{0})^{2g-2}\mathcal{F}^{(g)}$ are encoded in the
generators
\begin{equation*}
\theta^{i}\log e^{-K},i=1,2,3,\quad \theta \log \det G\,,
\end{equation*}
while the coefficients are polynomials of
\begin{equation*}
\theta\log\alpha^{3}C_{\alpha\alpha\alpha}=\theta\log{\kappa\over\beta}={\alpha\over
\beta}\,.
\end{equation*}
More precisely, according to the Picard-Fuchs equation
(\ref{PFforcompactCY}) and the definition (\ref{Kahlerpotential}),
one has
\begin{equation}
\mathcal{L}~e^{-K}=\left(\theta^{4}-\alpha\prod_{i=1}^{4}
(\theta+c_{i})\right)e^{-K}=0\,,
\end{equation}
where $c_{i}=i/5,i=1,2,3,4$. This then implies that
$\theta^{4}e^{-K}$ is a polynomial of $\theta^{i}e^{-K},i=1,2,3$ and
${\alpha\over \beta}$. The special geometry relation
(\ref{specialgeometryrelation}) implies that
\begin{equation}\label{integratedspecialgeometry}
\partial_{\alpha}\bar{\partial}_{\bar{\alpha}}\Gamma_{\alpha\alpha}^{\alpha}=
\partial_{\alpha}G_{\alpha\bar{\alpha}}-\partial_{\alpha}(e^{2K}G^{\alpha\bar{\alpha}}G^{\alpha\bar{\alpha}}
\overline{C}_{\bar{\alpha}\bar{\alpha}\bar{\alpha}})\,.
\end{equation}
It follows then
\begin{eqnarray*}
&&\bar{\partial}_{\bar{\alpha}}[\partial_{\alpha}
\Gamma_{\alpha\alpha}^{\alpha}+(\Gamma_{\alpha\alpha}^{\alpha})^{2}-2\Gamma_{\alpha\alpha}^{\alpha}\partial_{\alpha}
K\\
&&-4\partial_{\alpha}K_{\alpha}+2(\partial_{\alpha}
K)^2+(\partial_{\alpha} \log
C_{\alpha\alpha\alpha})(2\partial_{\alpha}
K-\Gamma_{\alpha\alpha}^{\alpha}) ]=0\,.
\end{eqnarray*}
Hence we know
\begin{eqnarray*}
&&\partial_{\alpha}
\Gamma_{\alpha\alpha}^{\alpha}+(\Gamma_{\alpha\alpha}^{\alpha})^{2}-2\Gamma_{\alpha\alpha}^{\alpha}\partial_{\alpha}
K\\
&&-4\partial_{\alpha}K_{\alpha}+2(\partial_{\alpha}
K)^2+(\partial_{\alpha} \log
C_{\alpha\alpha\alpha})(2\partial_{\alpha}
K-\Gamma_{\alpha\alpha}^{\alpha}) =f_{\alpha}\,
\end{eqnarray*}
for some holomorphic function $f_{\alpha}$. Taking the holomorphic
limit of the left hand side, according to
(\ref{hollimitsofconnections}), we get
\begin{eqnarray*}
&&\partial_{\alpha}^{2} \log {\partial t\over
\partial \alpha}+(\partial_{\alpha} \log {\partial t\over
\partial \alpha})^{2}+2\partial_{\alpha} \log {\partial t\over
\partial \alpha}\partial_{\alpha}\log X^{0}
+4\partial_{\alpha}\partial_{\alpha}\log X^{0}\\
&&+2(\partial_{\alpha}\log X^{0})^2+(\partial_{\alpha} \log
C_{\alpha\alpha\alpha})(-2\partial_{\alpha} \partial_{\alpha}\log
X^{0}-\partial_{\alpha} \log {\partial t\over
\partial \alpha}) =f_{\alpha}\,.
\end{eqnarray*}
The holomorphic function was fixed in \cite{Lian:1994zv,
Yamaguchi:2004bt} (see also \cite{Hosono:2008ve}) to be ${1-{12\over
5}\alpha\over \alpha^{2}\beta}$.

One can also replace the coordinate $\alpha$ in
(\ref{integratedspecialgeometry}) by $x=\ln \alpha$ defined locally
on the punctured deformation space, then we get
\begin{eqnarray}\label{secondderivativeofgamma}
&&\theta^{2} \log G_{x\bar{x}}+(\theta\log G_{x\bar{x}})^{2}-2\theta
\log G_{x\bar{x}}\theta K\nonumber\\
&& -4\theta^{2}K+2(\theta K)^2+(\theta \log C_{xxx})(2\theta
K-\theta \log G_{x\bar{x}})=f_{x}\,,
\end{eqnarray}
where $\theta=\partial_{x}=\alpha{\partial\over \partial
\alpha},C_{xxx}=\alpha^{3}C_{\alpha\alpha\alpha}={\kappa\over
\beta},\theta\log C_{xxx}={\alpha\over \beta}$ and $f_{x}$ is
another holomorphic function. Now we take the holomorphic limit of
the above identity and get
\begin{eqnarray}\label{secondderivativeofholgamma}
&&\theta^{2} \log \theta t+(\theta\log \theta t)^{2}+2\theta \log
\theta t~\theta \log
X^{0}\nonumber\\
&&4\theta^{2}\log X^{0}+2(\theta \log X^{0})^2+(\theta \log
C_{xxx})(-2\theta \log X^{0}-\theta \log \theta t) =f_{x}\,,
\end{eqnarray}
with
\begin{equation*}
f_{x}={2\over 5}{\alpha\over \beta}\,.
\end{equation*}
Therefore, as shown in \cite{Yamaguchi:2004bt}, one gets the
following Yamaguchi-Yau ring
\begin{equation}
\mathcal{R}_{YY}=\mathbb{C}[\theta^{i}\log e^{-K}, i=1,2,3;
\Gamma_{xx}^{x}=\theta\log G_{x\bar{x}},\theta\log
C_{xxx}={\alpha\over\beta}]\,.
\end{equation}
Note that
\begin{equation}\label{generatorofholamb}
\theta \theta\log C_{xxx}=\theta {\alpha\over \beta}={\alpha\over
\beta^{2}}=\theta \log C_{xxx} (\theta \log C_{xxx}+1)\,,
\end{equation}
then the ring $\mathcal{R}_{YY}$ is closed under taking the
derivative $\theta$. The generators of this ring
$(\mathcal{R}_{YY},\theta)$ are essentially
$K_{x},K_{xx},K_{xx},\Gamma_{xx}^{x},\theta \log C_{xxx}$.

However, it is not convenient to directly interpret this as the
analogue of the ring of almost-holomorphic modular forms. For this
reason, we connect this ring $(\mathcal{R}_{YY},\theta)$ to
$(\widetilde{\mathcal{R}},D)$.

Due to (\ref{secondderivativeofholgamma}), and the relation between
the derivatives $\theta$ and $D$ given by $\theta= \beta^{-1}
(X^{0}\theta t)^{-2}D$, we know that the set of generators for
$\widetilde{\mathcal{R}}$ could be reduced to $D^{i}X^{0},i=0,1,2,3;
D^{j}\theta t,j=0,1; B,C$. Recall that
$\mathcal{R}=\mathbb{C}[(X^{0})^{\pm 1}, (\theta t)^{\pm 1}, B^{\pm
1},C^{\pm 1}]$, then one can see that
\begin{eqnarray*}
&\widetilde{\mathcal{R}}&=\mathbb{C}[D^{i}\log X^{0},i=1,2,3;
D^{j}\log \theta t,j=1; \alpha,\beta]\otimes \mathcal{R}\,.
\end{eqnarray*}
Recall (\ref{splitting}), we get the following
\begin{eqnarray*}
&&\theta\log e^{-K(x,\bar{x})}=\theta \log(
X^{0}\bar{X}^{0}e^{-\mathrm{K}(t,\bar{t})})=\theta\log
X^{0}+\theta\log
e^{-\mathrm{K}(t,\bar{t})}\,\\
&&\theta\log G_{x\bar{x}}=\theta \log( \theta t~\bar{\theta
t}~G_{t\bar{t}})=D\log \theta t+\theta\log G_{t\bar{t}}\,.
\end{eqnarray*}
their holomorphic limits are
\begin{equation*}
\lim ~\theta\log e^{-K(x,\bar{x})}=\theta \log X^{0},\quad \lim~
\theta \log G_{x,\bar{x}}=\theta\log \theta t\,,
\end{equation*}
Therefore, the holomorphic limit of the ring $\mathcal{R}_{YY}$ is
given by
\begin{equation*}
\lim \mathcal{R}_{YY}=\mathbb{C}[\theta^{i}\log X^{0}, i=1,2,3;
\theta\log \theta t,{\alpha\over\beta}=\theta\log C_{xxx}]
\end{equation*}
That is, the generators $D^{i}\log X^{0},i=1,2,3; D\log \theta t$ in
$\widetilde{\mathcal{R}}_{0,0}$ are equivalent to the holomorphic
limits of the non-holomorphic generators in $ \mathcal{R}_{YY}$. It
follows then that
\begin{equation}
\widetilde{\mathcal{R}}=\lim \mathcal{R}_{YY}\otimes
 \mathcal{R}\,.
\end{equation}
This motivates us to define the non-holomorphic completion
$\widehat{\mathcal{R}}$ of $\widetilde{\mathcal{R}}$ as
\begin{equation}
\widehat{\mathcal{R}}=\mathcal{R}_{YY}\otimes  \mathcal{R}\,.
\end{equation}
Moreover, $F^{(g)}\in \mathcal{R}_{YY}\subseteq
\widehat{\mathcal{R}}_{0,0}$, where $\mathcal{R}_{YY}$ and
$\widehat{\mathcal{R}}_{0,0}$ are only differed by the holomorphic
generators of degree and weight zero.

\subsection{Summary of results}

In summary, in section 3 we constructed
$(\widetilde{\mathcal{R}},D)$ as a graded differential ring which
is an analogue of the ring of quasi-modular forms. In the last
subsection we then used special K\"ahler geometry to refine the
generators of the ring to get
\begin{eqnarray*}
\mathcal{R}&=&\mathbb{C}[(X^{0})^{\pm 1},(\theta t)^{\pm 1}, B^{\pm 1},C^{\pm 1}]\,,\\
\widetilde{ \mathcal{R}}&=&\mathcal{R}\otimes
\mathbb{C}[D^{i}\log X^{0},i=1,2,3;  D\log\theta t]\,,\\
\widehat{ \mathcal{R}}&=&\mathcal{R}\otimes \mathbb{C}[D^{i}\log
e^{-K},i=1,2,3; D\log\det G_{x\bar{x}}]\,.
\end{eqnarray*}

Recall the structure of the graded rings
$(M_{*}(\Gamma),\widetilde{M}_{*}(\Gamma),\widehat{M}_{*}(\Gamma))$
defined for $\pi_{\Gamma}: \mathcal{E}_{\Gamma}\rightarrow
X_{\Gamma}$
\begin{eqnarray*}
&& \partial_{\tau}: M_{*}(\Gamma)\rightarrow \widetilde{M}_{*}(\Gamma)\,,\\
 &&\textrm{``modular completion"}
:~~\widetilde{M}_{*}(\Gamma)\rightarrow
\widehat{M}_{*}(\Gamma)\subseteq \widetilde{M}_{*}(\Gamma)[Y], \quad
\quad Y={1\over 12}{-3\over \mathrm{Im}\tau}\,,
\\
&&\textrm{``constant term map"}~~ Y\rightarrow
0:~\widehat{M}_{*}(\Gamma)\rightarrow \widetilde{M}_{*}(\Gamma)
\,,\\
&& \partial_{\tau}: \widetilde{M}_{k}(\Gamma)\rightarrow
\widetilde{M}_{k+2}(\Gamma)\,,\\
&& \hat{\partial_{\tau}}=\partial_{\tau}+kY:
\widehat{M}_{k}(\Gamma)\rightarrow \widehat{M}_{k+2}(\Gamma)\,.
\end{eqnarray*}
From (\ref{splitting}), we know
\begin{eqnarray}\label{nonholgeneratorsplitting}
&&D\log e^{-K(x,\bar{x})}=D \log(
X^{0}\bar{X}^{0}e^{-\mathrm{K}(t,\bar{t})})=D\log X^{0}+D\log
e^{-\mathrm{K}(t,\bar{t})}\\
&&D\log G_{x\bar{x}}=D \log( \theta t~\bar{\theta
t}~G_{t\bar{t}})=D\log \theta t+D\log G_{t\bar{t}}\,.
\end{eqnarray}
Define $Y_{1}=D\log G_{t\bar{t}},Y_{2}=-D\log
e^{-\mathrm{K}(t,\bar{t})}$, then we have the following analogue
between $(\mathcal{R},
\widetilde{\mathcal{R}},\widehat{\mathcal{R}})$ defined for $\pi:
\mathcal{X}\rightarrow \mathcal{M}$ and
$(M_{*}(\Gamma),\widetilde{M}_{*}(\Gamma),\widehat{M}_{*}(\Gamma))$
defined for $\pi_{\Gamma}: \mathcal{E}_{\Gamma}\rightarrow
X_{\Gamma}$:
\begin{eqnarray*}
&& D: \mathcal{R}\rightarrow \widetilde{\mathcal{R}}\,,\\
 &&\textrm{``non-holomorphic completion"}
:~~\widetilde{\mathcal{R}}\rightarrow \widehat{\mathcal{R}}\subseteq
\widetilde{\mathcal{R}}[Y_{1},Y_{2}]\,\\
&&\textrm{``holomorphic limit"}~~ Y_{1},Y_{2}\rightarrow
0:~\widehat{R}\rightarrow \widetilde{R}
\,,\\
&& D: \widetilde{R}_{k,m}(\Gamma)\rightarrow
\widetilde{R}_{k+2,m}(\Gamma)\,,\\
&& \hat{D}=D+kD\log G _{x\bar{x}}+m (-D\log e^{-K}):
\widehat{R}_{k,m}\rightarrow \widehat{R}_{k+2,m}\,.
\end{eqnarray*}
where the operator $\hat{D}$ comes from the covariant derivative
$\partial_{x}+k\Gamma_{xx}^{x}+mK_{x}$ on sections of
$\text{Sym}^{\otimes k}T\mathcal{M}\otimes \mathcal{L}^{m}$.

The above construction for $(\mathcal{R},
\widetilde{\mathcal{R}},\widehat{\mathcal{R}})$ could also be
formally applied to the elliptic curve families in
(\ref{periodsofcurve}), see \cite{Hosono:2008ve}. The Weil-Petersson
metric is determined from
$e^{-K(\alpha,\bar{\alpha})}=i\omega_{0}\bar{\omega_{0}}(\tau-\bar{\tau})
=i\omega_{0}\bar{\omega_{0}}e^{-\mathrm{K}(\tau,\bar{\tau})}$. The
quantities $Y_{1},Y_{2}$ are now computed to be ${-2 \over
12}{-3\over \pi\textrm{Im}~\tau}$ and ${-1 \over 12}{-3\over
\pi\textrm{Im}~\tau}$, respectively.
The triple
$(\mathcal{R},\widetilde{\mathcal{R}},\widehat{\mathcal{R}})$
coincides with the triple $(M_{*}(\Gamma),\widetilde{M}_{*}(\Gamma),
\widehat{M}_{*}(\Gamma))$, as well as the maps among the members in
the triple.

For the non-compact Calabi-Yau threefold families in
(\ref{PFfornoncompactCY}), one has $X^{0}=1$ and $\theta t=A$. The
rings $(\mathcal{R},\widetilde{\mathcal{R}})$ coincide with
$(M_{*}(\Gamma),\widetilde{M}_{*}(\Gamma))$, as mentioned earlier in
this paper. But the explicit forms for $Y_{1},Y_{2}$ are difficult
to compute in these cases. \footnote{This is because the
Picard-Fuchs equation for a non-compact Calabi-Yau threefold family
has only three periods, and the K\"ahler potential of the
Weil-Petersson metric cannot be computed as the compact cases. One
needs to compactify \cite{Chiang:1999tz} the non-compact Calabi-Yau
threefold to a compact Calabi-Yau geometry, and then do
computations there, after that one takes the decompactification
limit of corresponding quantities.}

It is easy to see that one should be able to apply the same
construction for the quintic mirror family to construct triples
$(\mathcal{R},\widetilde{\mathcal{R}},\widehat{\mathcal{R}})$ for
other one-parameter Calabi-Yau threefold families whose
Picard-Fuchs equation takes the form as
(\ref{PFforcompactCYgeneral}) with $\sum_{i=1}^{4}c_{i}=2$. The only
thing that needs to be checked is that the function $f_{x}$ in
(\ref{secondderivativeofholgamma}) is contained in
$\mathbb{C}[B^{\pm 1},C^{\pm 1}]$. In fact, for many Calabi-Yau
families \cite{Lian:1994zv, Hosono:1996jv, Yamaguchi:2004bt}, this
holomorphic function is a rational function\footnote{The author
thanks Prof. Shinobu Hosono for email correspondences and telling
him the references on this.}. We shall not discuss the details in
this work.

\subsection{Special geometry polynomial ring}

Most of the generators in $\widehat{\mathcal{R}}$ obtained from the
elements in $\mathcal{R}_{YY}$ have weight zero.
 In \cite{Alim:2013}, a set of the non-holomorphic, positive weight generators for
 $\widehat{\mathcal{R}}$
are chosen so that no negative powers of the generators appear upon
taking the derivative $D$. The particular form of the ring
$\widehat{\mathcal{R}}$ is termed the special polynomial ring in
\cite{Alim:2013}. For completeness, in the following we shall review
the construction of the generators therein.

First notice that the set of generators in $\widehat{\mathcal{R}}$
given by $ X^{0}D^{i}\log e^{-K}, i=1,2,3;$ $\theta t~ D\log \det
G_{x\bar{x}} $ is equivalent the set of generators
$S^{xx},S^{x},S,K_{x}$ in (\ref{polyring}). The reason is as
follows. Integrating the special geometry relation
(\ref{specialgeometryrelation}), we then get
\begin{equation}
\Gamma_{xx}^{x}=2K_{x}-C_{xxx}S^{xx}+s_{xx}^{x}
\end{equation}
then up to multiplication and addition by $K_{x}$ and holomorphic
quantities, $S^{xx}$ is essentially $\Gamma_{xx}^{x}=\theta \log
\det G_{x\bar{x}}$. The first or last equation in (\ref{polyring})
tells that $S^{x}$ is essentially $\partial_{x}K_{x}$, and the
seconde tells that $S$ is $\partial_{x}^{2}K_{x}$. Moreover, the
derivatives of the generators in $\mathcal{R}_{YY}$ coincide with
those for the generators $S^{xx},S^{x},S,K_{x}$ in
(\ref{polyring}). \\

Now a nice set of generators for the special geometry polynomial
ring $\widehat{\mathcal{R}}$ can be chosen as follows. First one
makes the following change of generators \cite{Alim:2007qj}
\begin{equation*}\label{shift}
\tilde{S}^{tt} = S^{tt},\, \tilde{S}^t = S^t - S^{tt} K_t,\,
\tilde{S} = S- S^t K_t + \frac{1}{2} S^{tt} K_t K_t ,\, \tilde{K}_t=
K_t
\end{equation*}
Then as before one defines $ \tau=\frac{1}{2\pi i}\kappa^{-1}
\partial_{t} F_{t} $ which gives $\frac{\partial \tau}{\partial
t}=\frac{1}{2\pi i}\kappa^{-1}C_{ttt}$.
 Then one forms the following quantities on the
deformation space $\mathcal{M}$:
  \begin{equation}
    \begin{aligned}
      K_0 &=  \kappa C_{ttt}^{-1} \, (\theta t)^{-3} \,,&  G_1&=\theta t\,,& K_2&=\kappa C_{ttt}^{-1}\tilde{K}_t\,,\nonumber\\
      T_2&=\tilde{S}^{tt}\,,& T_4&= C_{ttt}^{-1} \tilde{S}^t\,,&
      T_6&=C_{ttt}^{-2} \tilde{S}\,,
    \end{aligned}
\end{equation}
where the propagators $\tilde{S}^{tt},\tilde{S}^{t},\tilde{S}$ are
normalized by suitable powers of $X^{0}$ so that they are sections
of $\mathcal{L}^{0}$. That is, they have degree zero. The weights of
these generators are the sub-indices they carry. It follows that the
derivatives of the generators of $\widehat{\mathcal{R}}$ given in
(\ref{polyring}) now become ($\partial_{\tau}:={1\over 2\pi
i}{\partial\over \partial \tau}$)
\begin{equation}
\begin{aligned}
\partial_{\tau}K_0&=-2K_0\,K_2- K_0^2\, G_1^2\,(\tilde{h}^\alpha_{\alpha\alpha\alpha}+3(s_{\alpha\alpha}^\alpha+1))\,,\\
\partial_{\tau} G_1&= 2G_1\,K_2-  \kappa G_1\,T_2\,+K_0 G_1^3(s_{\alpha\alpha}^\alpha+1)\,,\\
\partial_{\tau} K_2&=3K_2^2-3 \kappa K_2\,T_2- \kappa^{2}T_4+K_0^2\,G_1^4 k_{\alpha\alpha}-K_0\,G_1^2\,K_2\,\tilde{h}^\alpha_{\alpha\alpha\alpha}\,,\\
\partial_{\tau} T_2&=2K_2\,T_2- \kappa T_2^2+2 \kappa T_4+\kappa^{-1}K_0^2 G_1^4 \tilde{h}^\alpha_{\alpha\alpha}\,,\\
\partial_{\tau} T_4&=4 K_2 T_4-3 \kappa T_2\,T_4+ 2\kappa  T_6-K_0\, G_1^2 \, T_4 \tilde{h}^\alpha_{\alpha\alpha\alpha}
- \kappa^{-1}K_0^2\, G_1^4 \,T_2 k_{\alpha\alpha}+\kappa^{-2}K_0^3\, G_1^6 \tilde{h}_{\alpha\alpha}\,,\\
\partial_{\tau} T_6&= 6 K_2\, T_6- 6 \kappa T_2 \,T_6+\frac{\kappa}{2} T_4^2- \kappa^{-1}K_0^2\, G_1^4 \,T_4\,k_{\alpha\alpha}
+\kappa^{-3}K_0^4\, G_1^8 \tilde{h}_\alpha-2 \, K_0\,G_1^2\,T_6
\tilde{h}^\alpha_{\alpha\alpha\alpha}\,.
\end{aligned}
\end{equation}
The quantities
$\tilde{h}^\alpha_{\alpha\alpha\alpha},s_{\alpha\alpha}^\alpha
,k_{\alpha\alpha},\tilde{h}^\alpha_{\alpha\alpha\alpha},\tilde{h}^\alpha_{\alpha\alpha}
,\tilde{h}_{\alpha\alpha},\tilde{h}_\alpha$ are holomorphic
functions. It turns out that they are polynomials of an additional
generator $C_{0}=\theta \log C_{xxx}={\alpha\over\beta}$ with
\begin{equation}
\partial_{\tau} C_{0}=C_{0}(C_{0}+1)G_{1}^{2}\,.
\end{equation}
These explicit polynomials for the quintic mirror family could be
found in \cite{Alim:2013} and are omitted here.

\subsection{Holomorphic anomaly equations}

As mentioned earlier in section 4.4, one has
$F^{(g)}:=(X^{0})^{2g-2}\mathcal{F}^{(g)}\in\mathcal{R}_{YY}\subseteq
\widehat{\mathcal{R}}_{0,0}$.

The holomorphic anomaly equations then become \cite{Alim:2013}
\begin{eqnarray*}
&&\frac{\partial F^{(g)}}{\partial T_2}-{1\over
\kappa}\frac{\partial F^{(g)}}{\partial T_4} \,K_2 + {1\over
\kappa^{2}}\frac{\partial F^{(g)}}{\partial T_6} K_2^2 =\frac{1}{2}
\sum_{r=1}^{g-1}
\partial_t F^{(g-r)}\,\partial_t\, F^{(r)} +\frac{1}{2} \partial_t^2
F^{(g-1)}\,,\\
&&\frac{\partial F^{(g)}}{\partial K_2}=0\,,
\end{eqnarray*}
where $\partial_{t}=(X^{0})^{-2}(C_{0}+1)(\theta
t)^{-3}\partial_{\tau}$.
\begin{expl}
Consider the Calabi-Yau threefold family
 $\pi: \mathcal{X}\rightarrow \mathcal{M}$ which is mirror to the $K_{\mathbb{P}^{2}}$
family (with a one--dimensional base parametrizing the complexified
K\"ahler structures of $K_{\mathbb{P}^{2}}$). It is proved in
\cite{Alim:2013} and also mentioned earlier in section 3.2 that
$\mathcal{M}\cong X_{0}(3), \widetilde{\mathcal{R}}\cong
\widetilde{M}_{*}(\Gamma_{0}(3))=\mathbb{C}[A^{\pm 1},B^{\pm
1},C^{\pm 1},E]$.
%and argued that $\widehat{\mathcal{R}}\cong
%\widehat{M}_{*}(\Gamma_{0}(3))=\mathbb{C}[A^{\pm 1},B^{\pm 1},C^{\pm
%1},\hat{E}]$.
In this case one can consistently choose the generators so that
$T_4=T_6=K_2=0,T_2=\frac{\hat{E}}{2}$ with
$\partial_{t}=\kappa^{-1}C_{ttt}\partial_{\tau}=
B^{-3}\partial_{\tau}$. Then the holomorphic anomaly equations
simplify greatly. In particular, the equation for the holomorphic
limits of $F^{(g)}=(X^{0})^{2g-2}\mathcal{F}^{(g)}$ at the large
complex structure, denoted by $F_{g}\in
\widetilde{\mathcal{R}}_{0,0}\subseteq\mathbb{C}[A^{\pm 1},B^{\pm
1},C^{\pm 1},E]$, becomes
\begin{equation*}
\partial_{E}F_{g}={1\over
4B^{6}}\left(\sum_{r=1}^{g-1}\partial_{\tau}F_{g-r}\partial_{\tau}F_{r}
-{E-A^{2}\over 2}
\partial_{\tau}F_{g-1}+\partial_{\tau}\partial_{\tau}F_{g-1}\right)\,.
\end{equation*}
The boundary conditions \cite{Bershadsky:1993ta, Bershadsky:1993cx}
at the large complex structure $\alpha=0$ limit and the gap
condition \cite{Ghoshal:1995wm, Huang:2006si} at the conifold point
$\alpha=1$ for the topological string partition functions now
translate to the regularity conditions for the quasi-modular form
$F_{g}$ at the two cusp classes $[i\infty],[0]$ on $X_{0}(3)$. The
Fricke involution $W_{N}: \tau\mapsto -{1\over 3\tau}$ translates
further these conditions to some conditions on the $q_{\tau}=\exp
2\pi i\tau$ expansion of $F_{g},F_{g}|_{W_{N}}$ at the infinity cusp
$[i\infty]$. This then allows one to solve $F_{g}$ and thus
$F^{(g)}$ genus by genus recursively. One can also prove
\cite{Zhou:2014thesis} the existence and uniqueness of the solutions to
the holomorphic anomaly equations with the provided boundary
conditions.
\end{expl}
%%%%%sec 5%%%%%%%%

\section{\textup{Conclusions and discussions}}
We constructed the graded rings
$(\mathcal{R},\widetilde{\mathcal{R}},\widehat{\mathcal{R}})$ on the
deformation space $\mathcal{M}$ from the periods of the Picard-Fuchs
equation and special K\"ahler geometry on the deformation space. A
parallelism between these rings and the rings
$M_{*}(\Gamma),\widetilde{M}(\Gamma),\widehat{M}(\Gamma)$ was made:
the way they were constructed; non-holomorphic completion and
modular completion; holomorphic limit and ``constant term map''. We
further showed that in some special cases the rings
$(\mathcal{R},\widetilde{\mathcal{R}})$ are equivalent to the rings
of modular quantities $(M_{*}(\Gamma),\widetilde{M}(\Gamma))$. These
give some evidences that indeed the graded rings
$(\mathcal{R},\widetilde{\mathcal{R}},\widehat{\mathcal{R}})$ are
analogues of the rings of modular objects
$M_{*}(\Gamma),\widetilde{M}(\Gamma),\widehat{M}(\Gamma)$. We also
discussed some of their applications in solving the holomorphic
anomaly equations.
\[
\xymatrix{
\hat{D}\circlearrowright\widehat{M}(\Gamma)\ar@{->}[d]_{Y\rightarrow
0}&&&& \widehat{\mathcal{R}}\ar@{->}[d]_{\textrm{holomorphic~limit}}
\circlearrowleft\hat{D}\\
D\circlearrowright\widetilde{M}(\Gamma)\ar@/_/[u]_{\textrm{modular~completion}}&&&&\widetilde{\mathcal{R}}
\ar@/_/[u]_{\textrm{non-holomorphic~completion}}\circlearrowleft
D\\
 {M}(\Gamma)\ar@{->}[u]_{D}&&&& \mathcal{R}\ar@{->}[u]_{D}\\
\pi_{\Gamma}: \mathcal{E}_{\Gamma} \rightarrow X_{\Gamma}&&&&\pi:
\mathcal{X}\rightarrow \mathcal{M}}
\]
In the above construction of the triple of graded rings
$(\mathcal{R},\widetilde{\mathcal{R}},\widehat{\mathcal{R}})$, the
parameter $\tau={1\over 2\pi i}\kappa^{-1}F_{tt}$ defined in
(\ref{defoftau}) on the deformation space $\mathcal{M}$ was
introduced to match the known modularity in the non-compact
examples. There are a number of interesting questions about this
quantity $\tau$ we would like to address here and wish to pursue in
the future.

\subsubsection*{Variation of Hodge structures}

For the family $\pi:\mathcal{X}\rightarrow \mathcal{M}$ of
non-compact Calabi-Yau threefolds discussed above, the parameter
$\tau$ is exactly the transcendental modulus for elliptic curve
$\mathcal{E}_{\alpha}$ sitting inside the non-compact Calabi-Yau
threefold $\mathcal{X}_{\alpha}$. It is the normalized period for
the elliptic curve and lies in the upper half plane.
 This results from the fact that the vector space of the periods
$(1,t,F_{t})$ of $\mathcal{X}_{\alpha}$ is closed under the
monodromy, and  upon taking derivatives these periods become
$(0,\omega_{0},\omega_{1})$, where the latter two are the two
periods of $\mathcal{E}_{\alpha}$. In other words, while the three
periods $(1,t, F_{t})$ characterizes the variation of complex
structure of the Calabi-Yau threefold, the quantities $(\theta t,
\theta F_{t})$ characterizes the variation of complex structure of
the elliptic curve sitting inside it.

However, for a general one-parameter compact Calabi-Yau threefold
family, e.g., the quintic mirror family, the vector space of periods
$(X^{0},X^{0}t,X^{0}F_{t})$ is not invariant under the monodromy
group. It is not clear what the geometric meaning of $\tau={1\over
2\pi i}\kappa^{-1}F_{tt}$ is.

\subsubsection*{Enumerative content of $\tau$ and integrality
}\footnote{The author thanks Murad Alim, Yaim Cooper and Shing-Tung
Yau for discussions on this.} For the particular non-compact
geometries (\ref{PFfornoncompactCY}), the $F_{g}$s solved
\cite{Alim:2013} from the holomorphic anomaly equations are explicit
quasi-modular functions in $\tau$ (see also \cite{Aganagic:2006wq}
for related work). Whether the $q_{\tau}$ expansions of the
topological string partition functions have any enumerative content
and how the $q_{t}$ and $q_{\tau}$ expansions are related beg an
explanation\footnote{See \cite{Mohri:2000kf, Stienstra:2005wy} and the more recent \cite{Zhou:2014thesis} for related discussions.}.

Now we briefly recall how the partition functions are related to the
generating functions Gromov-Witten variants under the mirror
symmetry conjecture.
 For the Calabi-Yau threefold family $\pi:
\mathcal{X}\rightarrow \mathcal{M}$ whose generic fiber is $X$,
suppose the mirror family is given by  $\check{\pi}:
\check{\mathcal{X}}\rightarrow \check{\mathcal{M}}$ whose generic
fiber is $\check{X}$. Mirror symmetry predicts the holomorphic limit
$F_{g}=\lim (X^{0})^{2g-2}\mathcal{F}^{(g)}$ at the large complex
structure limit is identical to the generating function of genus $g$
Gromov-Witten invariants of $\check{X}$, that is,
\begin{equation}\label{Fgqt}
F_{g}(t)=\sum_{d=0}^{\infty}N^{\textrm{GW}}_{g,d}q_{t}^{d},\quad
q_{t}=e^{t}\,.
\end{equation}
Recall the prepotential $F(t)$ is given by
\begin{equation*}
F(t)={\kappa\over
3!}t^{3}+\sum_{d=1}^{\infty}N^{\textrm{GW}}_{g=0,d}q_{t}^{d}\,,
\end{equation*}
then $\tau={1\over 2\pi i}\kappa^{-1}F_{tt}$ is the function
determined from
\begin{equation}\label{tauq}
2\pi i \tau= t+\kappa^{-1}\sum_{d=1}^{\infty}
N^{\textrm{GW}}_{g,d}d^{2}q_{t}^{d},\, q_{t}=e^{t}\,.
\end{equation}
This implies in particular that
\begin{equation}\label{qtauqt}
q_{\tau}=\exp 2\pi i \tau=q_{t}(1+\mathcal{O}(q_{t}))\,.
\end{equation}
It is natural to expect that there should be an enumerative problem
associated to $\tau$ in the sense
\begin{equation}\label{Fgqtau}
F_{g}(\tau)=\sum_{d=0}^{\infty}N^{\textrm{hyp}}_{g,d}q_{\tau}^{d}\,,
\end{equation}
where like the Gromov-Witten invariants $N^{\textrm{GW}}_{g,d}$, the
numbers $N^{\textrm{hyp}}_{g,d}$ may hypothetically counting certain
kind of invariants. Comparing the (\ref{Fgqtau}) with (\ref{Fgqt})
and using (\ref{qtauqt}), we can then find the ``multiple-cover
formula" relating $N^{\textrm{GW}}_{g,d}$ and
$N^{\textrm{hyp}}_{g,d}$.

For the cases \cite{Alim:2013} in which the topological string
partition functions have nice expressions in terms of quasi-modular
forms in $\tau$, the integrality with respect to $q_{\tau}$ is
almost automatic. One then hopes that according to (\ref{qtauqt}),
studying the enumerative meaning of $q_{\tau}$ expansion will help
understand the integrality in $q_{t}$ expansion as well.

We don't have answers to any of these questions, and shall only
display some examples below.

\begin{expl}[Resolved Conifold]
Consider the resolved conifold which is the total space of
$\mathcal{O}_{\mathbb{P}^{1}}(-1)\oplus
\mathcal{O}_{\mathbb{P}^{1}}(-1)\rightarrow \mathbb{P}^{1}$ and is a
Calabi-Yau 3--fold. The Picard-Fuchs equation of the mirror Calabi-Yau family is
given by, see e.g. \cite{Forbes:2005}
\begin{equation}
\mathcal{L}=\theta ({\alpha \over 1-\alpha})^{-1}\theta^{2}\,.
\end{equation}
Near the large complex structure limit given by $\alpha=0$, a basis
of the periods could be chosen to be
\begin{equation}
X^{0}=1,\quad t=\ln \alpha,\quad F_{t}\sim
(\ln\alpha)^{2}+\mathcal{O}(\alpha^{0})\,.
\end{equation}
Therefore near $\alpha=0$, one has $t=\ln \alpha$ and thus
$q_{t}=\alpha$. Moreover, the genus zero Gromov-Witten invariants
are \cite{Aspinwall:1993, Manin:1995, Voisin:1996}
\begin{equation}
N^{\textrm{GW}}_{0,d}={1\over d^{3}}\,,
\end{equation}
and the prepotential is
\begin{equation}
F(t)={\kappa\over
3!}t^{3}+\sum_{d=1}^{\infty}N^{\textrm{GW}}_{0,d}q_{t}^{d}={\kappa\over
3!}t^{3}+\sum_{d=1}^{\infty}{1\over d^{3}}q_{t}^{d}={\kappa\over
3!}t^{3}+\mathrm{Li}_{3}(q_{t})\,.
\end{equation}
This implies in particular that
\begin{equation}\label{ctttofconifold}
C_{ttt}=\kappa+\sum_{d=1}^{\infty}q_{t}^{d}=\kappa+{q_{t}\over
1-q_{t}}\,.
\end{equation}
The function $\tau$ then satisfies
\begin{equation}\label{tautconifold}
2\pi i\tau=\kappa^{-1}F_{tt}=
t+\kappa^{-1}\sum_{d=0}^{\infty}{1\over d}q_{t}^{d}=
t-\kappa^{-1}\ln (1-q_{t})\,.
\end{equation}
Note that $\kappa$ can not be determined by studying the periods and
is ambiguous. Consideration in physics \cite{Vafa:2001} tells that a
natural choice is $\kappa=1$. In the following, we shall take this
choice.
\begin{rem}
From (\ref{tautconifold}) one can see that $\tau$ is itself the
generating function of the sequence of numbers ${1\over
d}=d^{2}N^{\textrm{GW}}_{0,d},d=1,2,\cdots$. These numbers appear in
the study of the stable-quotient invariants defined in
\cite{Marian:2011} with
\begin{equation}
d^{2}N^{\textrm{GW}}_{0,d}=\int_{[Q_{0,2}(\mathbb{P}^{1},d)]^{vir}}e(Ob)\cup
ev^{*}_{1}H\cup ev^{*}_{2}H\,,
\end{equation}
where $Ob$ is the obstruction bundle in the construction of
stable-quotient invariants, and the two insertions which give rise
to $ev^{*}_{1}H\cup ev^{*}_{2}H$ are required for the stability in
genus $0$.
\end{rem}

For higher genus partition functions, it is well known that
\cite{Faber:1998}
\begin{eqnarray}
&N^{\textrm{GW}}_{g,d}&=d^{2g-3}N^{\textrm{GW}}_{g,1}={|B_{2g}|\over 2g(2g-2)!}d^{2g-3},\\
&F_{g}&={|B_{2g}|\over 2g(2g-2)!}\mathrm{Li}_{3-2g}(q_{t})\,\quad
\text{in particular,}\quad F_{1}=-{1\over 12}\log (1-q_{t})\,.
\end{eqnarray}
To extract the numbers $N^{\textrm{hyp}}_{g,d}$ associated to
$\tau$, we make use of (\ref{tautconifold}) which gives rise to
\begin{equation*}
2\pi i\tau=t-\ln (1-q_{t}),\quad q_{\tau}={q_{t}\over 1-q_{t}},\quad
q_{t}={q_{\tau}\over 1+q_{\tau}}\,.
\end{equation*}
Now from (\ref{ctttofconifold}), one gets
\begin{equation}
C_{ttt}=\kappa t+{q_{t}\over 1-q_{t}}=\kappa \ln {q_{\tau}\over
1+q_{\tau}}+q_{\tau}= 2\pi
i\kappa\tau+q_{\tau}-\kappa\sum_{k=1}^{\infty}(-1)^{k}q_{\tau}^{k}\,.
\end{equation}
It follows that
\begin{equation*}
N^{\textrm{hyp}}_{0,d}=\kappa,\,\quad d=0,\quad
N^{\textrm{hyp}}_{0,d}=1+\kappa,\,\quad d=1,\,\quad
N^{\textrm{hyp}}_{0,d}= (-1)^{d+1}\kappa,\,\quad d\geq 2\,.
\end{equation*}
 For the
generating function $\partial_{t}F_{1}$, we get
\begin{equation}
\partial_{t}F_{1}={1\over 12}{q_{t}\over 1-q_{t}}={1\over 12}q_{\tau}\,.
\end{equation}
This then tells that
\begin{equation*}
N^{\textrm{hyp}}_{1,d}={1\over 12},\, d=1,\quad
N^{\textrm{hyp}}_{1,d}=0,\,d\geq 2\,.
\end{equation*}
For higher genus partition functions, we have
\begin{equation}
\sum_{d=1}^{\infty}N^{\textrm{hyp}}_{g,d}q_{\tau}^{d}={|B_{2g}|\over
2g(2g-2)!}\mathrm{Li}_{3-2g}(q_{\tau}(1+q_{\tau})^{-1})={|B_{2g}|\over
2g(2g-2)!}\theta_{q_{t}}^{2g-3}q_{\tau}\,.
\end{equation}
Since $\theta_{q_{t}}:=q_{t}{\partial \over \partial
q_{t}}=(1+q_{\tau})\theta_{q_{\tau}}$, one can then find
$N^{\textrm{hyp}}_{g,d}$ by direct computations. For any $g\geq 2$,
the first few invariants with $d=1,2,3\cdots$ are listed as follows:
\begin{eqnarray*}
N^{\textrm{hyp}}_{g,d}: &&1,\, -2+4^{2-g}, \,6-3* 2^{5-2 g}+2* 9^{2-g}\,,\\
&&-24+3* 2^{9-4 g}-8*3^{5-2 g}+9* 4^{3-g},\\
&& 120 \left(1-2^{8-4 g}-2^{5-2 g}+5^{3-2 g}+2* 9^{2-g}\right)\cdots
\end{eqnarray*}
\end{expl}
\begin{expl} [Local $\mathbb{P}^{2}$] Now we consider the Calabi-Yau 3--fold $K_{\mathbb{P}^{2}}$. In \cite{Alim:2013},
the holomorphic limits of the first few topological string partition
functions are solved genus by genus in terms of quasi-modular forms
and have nice expansions in $q_{\tau}$. For example,
\begin{eqnarray*}
C_{ttt}&=&
-{1\over 3}{\eta(3\tau)^{3}\over \eta(\tau)^{9}}\\
&=&-{1\over 3}\left(1+9 q_{\tau}+54 q_{\tau}^2+252 q_{\tau}^3+1008 q_{\tau}^4+3591 q_{\tau}^{5}+\cdots\right)\,,\\
\partial_{t}F^{1}&=&-{1\over 12} DF^{1}\cdot \kappa^{-1}C_{ttt}
=-{1\over 12}{3E_{2}(3\tau)+E_{2}(\tau)\over 4}{\eta(3\tau)^{3}\over \eta(\tau)^{9}}\\
&=&-{1\over 48}\left(1+3 q_{\tau}-18 q_{\tau}^2-276 q_{\tau}^3-1896
q_{\tau}^4-9675 q_{\tau}^5+\cdots\right)\,.
\end{eqnarray*}
From these expansions one can immediately read off the numbers
$N^{\textrm{hyp}}_{g,d},g=0,1,\,d=1,2,\cdots$. The canonical
coordinate $t$ is the following function of $q_{\tau}$:
\begin{equation*}
{1\over 2\pi i}{\partial t\over
\partial\tau}=\kappa C_{ttt}^{-1}={ \eta(\tau)^{9}\over
\eta(3\tau)^{3}},\quad t=\int {dq_{\tau}\over q_{\tau}}~{
\eta(\tau)^{9}\over \eta(3\tau)^{3}}\,,
\end{equation*}
The constant from integration is fixed by comparing the asymptotic
behaviors of $t$ and $\tau$ as $\tau\rightarrow i\infty $. A numeric
experiment using Mathematica shows that
\begin{equation*}
q_{t}=q_{\tau}- 9 q_{\tau}^2 + 54 q_{\tau}^3 - 246 q_{\tau}^4 + 909
q_{\tau}^5 - 2808 q_{\tau}^6 + 7299 q_{\tau}^7 -
 15705 q_{\tau}^8 +\cdots
\end{equation*}
See \cite{Mohri:2000kf, Stienstra:2005wy} for more discussions on this.
\end{expl}

\bigskip{}

%%%%%%%%%bib%%%%%%%%%%%

\providecommand{\href}[2]{#2}\begingroup\raggedright\endgroup

\end{nouppercase}

\begin{thebibliography}{10}

\bibitem{Kaneko:1995}
M.~Kaneko and D.~Zagier, ``A generalized {J}acobi theta function and
  quasimodular forms,'' in {\em The moduli space of curves ({T}exel {I}sland,
  1994)}, vol.~129 of {\em Progr. Math.}, pp.~165--172.
\newblock Birkh\"auser Boston, Boston, MA, 1995.

\bibitem{Zagier:2008}
D.~Zagier, \href{http://dx.doi.org/10.1007/978-3-540-74119-0_1}{``Elliptic
  modular forms and their applications,''} in {\em The 1-2-3 of modular forms},
  Universitext, pp.~1--103.
\newblock Springer, Berlin, 2008.
\newblock \url{http://dx.doi.org/10.1007/978-3-540-74119-0_1}.

\bibitem{Strominger:1990pd}
A.~Strominger, ``{Special Geometry},''
\href{http://dx.doi.org/10.1007/BF02096559}{{\em Commun.Math.Phys.} {\bfseries
  133} (1990) 163--180}.
%%CITATION = CMPHA,133,163;%%.

\bibitem{Bershadsky:1993ta}
M.~Bershadsky, S.~Cecotti, H.~Ooguri, and C.~Vafa, ``{Holomorphic anomalies in
  topological field theories},''
  \href{http://dx.doi.org/10.1016/0550-3213(93)90548-4}{{\em Nucl.Phys.}
  {\bfseries B405} (1993) 279--304},
\href{http://arxiv.org/abs/hep-th/9302103}{{\ttfamily arXiv:hep-th/9302103
  [hep-th]}}.
%%CITATION = HEP-TH/9302103;%%.

\bibitem{Bershadsky:1993cx}
M.~Bershadsky, S.~Cecotti, H.~Ooguri, and C.~Vafa, ``{Kodaira-Spencer theory of
  gravity and exact results for quantum string amplitudes},''
  \href{http://dx.doi.org/10.1007/BF02099774}{{\em Commun.Math.Phys.}
  {\bfseries 165} (1994) 311--428},
\href{http://arxiv.org/abs/hep-th/9309140}{{\ttfamily arXiv:hep-th/9309140
  [hep-th]}}.
%%CITATION = HEP-TH/9309140;%%.

\bibitem{Yamaguchi:2004bt}
S.~Yamaguchi and S.-T. Yau, ``{Topological string partition functions as
  polynomials},'' \href{http://dx.doi.org/10.1088/1126-6708/2004/07/047}{{\em
  JHEP} {\bfseries 0407} (2004) 047},
\href{http://arxiv.org/abs/hep-th/0406078}{{\ttfamily arXiv:hep-th/0406078
  [hep-th]}}.
%%CITATION = HEP-TH/0406078;%%.

\bibitem{Alim:2007qj}
M.~Alim and J.~D. L{\"a}nge, ``{Polynomial Structure of the (Open) Topological
  String Partition Function},''
  \href{http://dx.doi.org/10.1088/1126-6708/2007/10/045}{{\em JHEP} {\bfseries
  0710} (2007) 045},
\href{http://arxiv.org/abs/0708.2886}{{\ttfamily arXiv:0708.2886 [hep-th]}}.
%%CITATION = ARXIV:0708.2886;%%.

\bibitem{Alim:2013}
M.~Alim, E.~Scheidegger, S.-T. Yau, and J.~Zhou, ``{Special Polynomial Rings,
  Quasi Modular Forms and Duality of Topological Strings},''
  \href{http://arxiv.org/abs/1306.0002}{{\ttfamily arXiv:1306.0002 [hep-th]}}.

\bibitem{Huang:2006si}
M.-x. Huang and A.~Klemm, ``{Holomorphic Anomaly in Gauge Theories and Matrix
  Models},'' \href{http://dx.doi.org/10.1088/1126-6708/2007/09/054}{{\em JHEP}
  {\bfseries 0709} (2007) 054},
\href{http://arxiv.org/abs/hep-th/0605195}{{\ttfamily arXiv:hep-th/0605195
  [hep-th]}}.
%%CITATION = HEP-TH/0605195;%%.

\bibitem{Aganagic:2006wq}
M.~Aganagic, V.~Bouchard, and A.~Klemm, ``{Topological Strings and (Almost)
  Modular Forms},'' \href{http://dx.doi.org/10.1007/s00220-007-0383-3}{{\em
  Commun.Math.Phys.} {\bfseries 277} (2008) 771--819},
\href{http://arxiv.org/abs/hep-th/0607100}{{\ttfamily arXiv:hep-th/0607100
  [hep-th]}}.
%%CITATION = HEP-TH/0607100;%%.

\bibitem{Grimm:2007tm}
T.~W. Grimm, A.~Klemm, M.~Marino, and M.~Weiss, ``{Direct Integration of the
  Topological String},''
  \href{http://dx.doi.org/10.1088/1126-6708/2007/08/058}{{\em JHEP} {\bfseries
  0708} (2007) 058},
\href{http://arxiv.org/abs/hep-th/0702187}{{\ttfamily arXiv:hep-th/0702187
  [HEP-TH]}}.
%%CITATION = HEP-TH/0702187;%%.

\bibitem{Hosono:2008ve}
S.~Hosono, ``{BCOV ring and holomorphic anomaly equation},''
\href{http://arxiv.org/abs/0810.4795}{{\ttfamily arXiv:0810.4795 [math.AG]}}.
%%CITATION = ARXIV:0810.4795;%%.

\bibitem{Movasati:2011zz}
H.~Movasati, ``{Eisenstein type series for Calabi-Yau varieties},''
\href{http://dx.doi.org/10.1016/j.nuclphysb.2011.01.028}{{\em Nucl.Phys.}
  {\bfseries B847} (2011) 460--484}.
%%CITATION = NUPHA,B847,460;%%.

\bibitem{Lian:1994zv}
B.~H. Lian and S.-T. Yau, ``{Arithmetic properties of mirror map and quantum
  coupling},'' \href{http://dx.doi.org/10.1007/BF02099367}{{\em
  Commun.Math.Phys.} {\bfseries 176} (1996) 163--192},
\href{http://arxiv.org/abs/hep-th/9411234}{{\ttfamily arXiv:hep-th/9411234
  [hep-th]}}.
%%CITATION = HEP-TH/9411234;%%.

\bibitem{Klemm:1996}
A.~Klemm, B.~H. Lian, S.~S. Roan, and S.~T. Yau, ``A note on {ODE}s from mirror
  symmetry,'' in {\em Functional analysis on the eve of the 21st century,
  {V}ol.\ {II} ({N}ew {B}runswick, {NJ}, 1993)}, vol.~132 of {\em Progr.
  Math.}, pp.~301--323.
\newblock Birkh\"auser Boston, Boston, MA, 1996.

\bibitem{Lerche:1996ni}
W.~Lerche, P.~Mayr, and N.~Warner, ``{Noncritical strings, Del Pezzo
  singularities and Seiberg-Witten curves},''
  \href{http://dx.doi.org/10.1016/S0550-3213(97)00312-X}{{\em Nucl.Phys.}
  {\bfseries B499} (1997) 125--148},
\href{http://arxiv.org/abs/hep-th/9612085}{{\ttfamily arXiv:hep-th/9612085
  [hep-th]}}.
%%CITATION = HEP-TH/9612085;%%.

\bibitem{Chiang:1999tz}
T.~Chiang, A.~Klemm, S.-T. Yau, and E.~Zaslow, ``{Local mirror symmetry:
  Calculations and interpretations},'' {\em Adv.Theor.Math.Phys.} {\bfseries 3}
  (1999) 495--565,
\href{http://arxiv.org/abs/hep-th/9903053}{{\ttfamily arXiv:hep-th/9903053
  [hep-th]}}.
%%CITATION = HEP-TH/9903053;%%.

\bibitem{Hori:2000kt}
K.~Hori and C.~Vafa, ``{Mirror symmetry},''
\href{http://arxiv.org/abs/hep-th/0002222}{{\ttfamily arXiv:hep-th/0002222
  [hep-th]}}.
%%CITATION = HEP-TH/0002222;%%.

\bibitem{Klemm:1999gm}
A.~Klemm and E.~Zaslow, ``{Local mirror symmetry at higher genus},''
\href{http://arxiv.org/abs/hep-th/9906046}{{\ttfamily arXiv:hep-th/9906046
  [hep-th]}}.
%%CITATION = HEP-TH/9906046;%%.

\bibitem{Haghighat:2008gw}
B.~Haghighat, A.~Klemm, and M.~Rauch, ``{Integrability of the holomorphic
  anomaly equations},''
  \href{http://dx.doi.org/10.1088/1126-6708/2008/10/097}{{\em JHEP} {\bfseries
  0810} (2008) 097},
\href{http://arxiv.org/abs/0809.1674}{{\ttfamily arXiv:0809.1674 [hep-th]}}.
%%CITATION = ARXIV:0809.1674;%%.

\bibitem{Huang:2006hq}
M.-x. Huang, A.~Klemm, and S.~Quackenbush, ``{Topological string theory on
  compact Calabi-Yau: Modularity and boundary conditions},'' {\em Lect.Notes
  Phys.} {\bfseries 757} (2009) 45--102,
\href{http://arxiv.org/abs/hep-th/0612125}{{\ttfamily arXiv:hep-th/0612125
  [hep-th]}}.
%%CITATION = HEP-TH/0612125;%%.

\bibitem{Ceresole:1992su}
A.~Ceresole, R.~D'Auria, S.~Ferrara, W.~Lerche, and J.~Louis, ``{Picard-Fuchs
  equations and special geometry},''
  \href{http://dx.doi.org/10.1142/S0217751X93000047}{{\em Int.J.Mod.Phys.}
  {\bfseries A8} (1993) 79--114},
\href{http://arxiv.org/abs/hep-th/9204035}{{\ttfamily arXiv:hep-th/9204035
  [hep-th]}}.
%%CITATION = HEP-TH/9204035;%%.

\bibitem{Ceresole:1993qq}
A.~Ceresole, R.~D'Auria, S.~Ferrara, W.~Lerche, J.~Louis, and T.~Regge,
  ``{Picard-Fuchs equations, special geometry and target space duality},''
Contribution to second volume of 'Essays on Mirror Manifolds'.
%%CITATION = CERN-TH-7055-93 ETC.;%%.

\bibitem{Rankin:1977ab}
R.~A. Rankin, {\em Modular forms and functions}.
\newblock Cambridge University Press, Cambridge, 1977.

\bibitem{Diamond:2005}
F.~Diamond and J.~Shurman, {\em A first course in modular forms}, vol.~228 of
  {\em Graduate Texts in Mathematics}.
\newblock Springer-Verlag, New York, 2005.

\bibitem{Lian:1995}
B.~H. Lian and S.-T. Yau, ``Mirror maps, modular relations and hypergeometric
  series. {I},'' \href{http://arxiv.org/abs/hep-th/9507151}{{\ttfamily
  arXiv:hep-th/9507151 [hep-th]}}.

\bibitem{Lian:1996}
B.~H. Lian and S.-T. Yau, ``Mirror maps, modular relations and hypergeometric
  series. {II},'' \href{http://dx.doi.org/10.1016/0920-5632(96)00026-6}{{\em
  Nuclear Phys. B Proc. Suppl.} {\bfseries 46} (1996) 248--262}.
  \url{http://dx.doi.org/10.1016/0920-5632(96)00026-6}. $S$-duality and mirror
  symmetry (Trieste, 1995).

\bibitem{Verrill:2000}
H.~Verrill and N.~Yui, ``Thompson series, and the mirror maps of pencils of
  {$K3$} surfaces,'' in {\em The arithmetic and geometry of algebraic cycles
  ({B}anff, {AB}, 1998)}, vol.~24 of {\em CRM Proc. Lecture Notes},
  pp.~399--432.
\newblock Amer. Math. Soc., Providence, RI, 2000.

\bibitem{van:2006}
C.~van Enckevort and D.~van Straten, ``Monodromy calculations of fourth order
  equations of {C}alabi-{Y}au type,'' in {\em Mirror symmetry. {V}}, vol.~38 of
  {\em AMS/IP Stud. Adv. Math.}, pp.~539--559.
\newblock Amer. Math. Soc., Providence, RI, 2006.

\bibitem{Yang:2007}
Y.~Yang and N.~Yui, ``Differential equations satisfied by modular forms and
  {$K3$} surfaces,'' {\em Illinois J. Math.} {\bfseries 51} no.~2, (2007)
  667--696 (electronic).
  \url{http://projecteuclid.org/getRecord?id=euclid.ijm/1258138437}.

\bibitem{Chen:2008}
Y.-H. Chen, Y.~Yang, and N.~Yui, ``Monodromy of {P}icard-{F}uchs differential
  equations for {C}alabi-{Y}au threefolds,''
  \href{http://dx.doi.org/10.1515/CRELLE.2008.021}{{\em J. Reine Angew. Math.}
  {\bfseries 616} (2008) 167--203}.
  \url{http://dx.doi.org/10.1515/CRELLE.2008.021}. With an appendix by Cord
  Erdenberger.

\bibitem{Kodaira:1963}
K.~Kodaira, ``On compact analytic surfaces. {II}, {III},'' {\em Ann. of Math.
  (2) 77 (1963), 563--626; ibid.} {\bfseries 78} (1963) 1--40.

\bibitem{Shioda:1972}
T.~Shioda, ``On elliptic modular surfaces,'' {\em J. Math. Soc. Japan}
  {\bfseries 24} (1972) 20--59.

\bibitem{Husemoller:2004}
D.~Husem{\"o}ller, {\em Elliptic curves}, vol.~111 of {\em Graduate Texts in
  Mathematics}.
\newblock Springer-Verlag, New York, second~ed., 2004.
\newblock With appendices by Otto Forster, Ruth Lawrence and Stefan Theisen.

\bibitem{Candelas:1990rm}
P.~Candelas, X.~C. de~La~Ossa, P.~S. Green, and L.~Parkes, ``{A Pair of
  Calabi-Yau manifolds as an exactly soluble superconformal theory},''
\href{http://dx.doi.org/10.1016/0550-3213(91)90292-6}{{\em Nucl.Phys.}
  {\bfseries B359} (1991) 21--74}.
%%CITATION = NUPHA,B359,21;%%.

\bibitem{Freed:1999sm}
D.~S. Freed, ``Special {K}\"ahler manifolds,''
  \href{http://dx.doi.org/10.1007/s002200050604}{{\em Comm. Math. Phys.}
  {\bfseries 203} no.~1, (1999) 31--52}.
  \url{http://dx.doi.org/10.1007/s002200050604}.

\bibitem{Alim:2008kp}
M.~Alim, J.~D. L{\"a}nge, and P.~Mayr, ``{Global Properties of Topological
  String Amplitudes and Orbifold Invariants},''
  \href{http://dx.doi.org/10.1007/JHEP03(2010)113}{{\em JHEP} {\bfseries 1003}
  (2010) 113},
\href{http://arxiv.org/abs/0809.4253}{{\ttfamily arXiv:0809.4253 [hep-th]}}.
%%CITATION = ARXIV:0809.4253;%%.

\bibitem{Ghoshal:1995wm}
D.~Ghoshal and C.~Vafa, ``{C = 1 string as the topological theory of the
  conifold},'' \href{http://dx.doi.org/10.1016/0550-3213(95)00408-K}{{\em
  Nucl.Phys.} {\bfseries B453} (1995) 121--128},
\href{http://arxiv.org/abs/hep-th/9506122}{{\ttfamily arXiv:hep-th/9506122
  [hep-th]}}.
%%CITATION = HEP-TH/9506122;%%.

\bibitem{Maier:2009}
R.~S. Maier, ``On rationally parametrized modular equations,'' {\em J.
  Ramanujan Math. Soc.} {\bfseries 24} no.~1, (2009) 1--73.

\bibitem{Borwein:1991}
J.~M. Borwein and P.~B. Borwein, ``A cubic counterpart of {J}acobi's identity
  and the {AGM},'' \href{http://dx.doi.org/10.2307/2001551}{{\em Trans. Amer.
  Math. Soc.} {\bfseries 323} no.~2, (1991) 691--701}.
  \url{http://dx.doi.org/10.2307/2001551}.

\bibitem{Berndt:1995}
B.~C. Berndt, S.~Bhargava, and F.~G. Garvan, ``Ramanujan's theories of elliptic
  functions to alternative bases,''
  \href{http://dx.doi.org/10.2307/2155035}{{\em Trans. Amer. Math. Soc.}
  {\bfseries 347} no.~11, (1995) 4163--4244}.
  \url{http://dx.doi.org/10.2307/2155035}.

\bibitem{Maier:2011}
R.~S. Maier, ``Nonlinear differential equations satisfied by certain classical
  modular forms,'' \href{http://dx.doi.org/10.1007/s00229-010-0378-9}{{\em
  Manuscripta Math.} {\bfseries 134} no.~1-2, (2011) 1--42}.
  \url{http://dx.doi.org/10.1007/s00229-010-0378-9}.

\bibitem{Hosono:1996jv}
S.~Hosono and B.~Lian, ``{GKZ hypergeometric systems and applications to mirror
  symmetry},''
\href{http://arxiv.org/abs/hep-th/9602147}{{\ttfamily arXiv:hep-th/9602147
  [hep-th]}}.
%%CITATION = HEP-TH/9602147;%%.

\bibitem{Kapranov:1999}
M.~Kapranov, ``Rozansky-{W}itten invariants via {A}tiyah classes,''
  \href{http://dx.doi.org/10.1023/A:1000664527238}{{\em Compositio Math.}
  {\bfseries 115} no.~1, (1999) 71--113}.
  \url{http://dx.doi.org/10.1023/A:1000664527238}.

\bibitem{Higashijima:2001}
K.~Higashijima and M.~Nitta, ``K\"ahler normal coordinate expansion in
  supersymmetric theories,'' \href{http://dx.doi.org/10.1143/PTP.105.243}{{\em
  Progr. Theoret. Phys.} {\bfseries 105} no.~2, (2001) 243--260}.
  \url{http://dx.doi.org/10.1143/PTP.105.243}.

\bibitem{Higashijima:2002}
K.~Higashijima, E.~Itou, and M.~Nitta, ``Normal coordinates in {K}\"ahler
  manifolds and the background field method,''
  \href{http://dx.doi.org/10.1143/PTP.108.185}{{\em Progr. Theoret. Phys.}
  {\bfseries 108} no.~1, (2002) 185--202}.
  \url{http://dx.doi.org/10.1143/PTP.108.185}.

\bibitem{Gerasimov:2004yx}
A.~A. Gerasimov and S.~L. Shatashvili, ``{Towards integrability of topological
  strings. I. Three-forms on Calabi-Yau manifolds},''
  \href{http://dx.doi.org/10.1088/1126-6708/2004/11/074}{{\em JHEP} {\bfseries
  0411} (2004) 074},
\href{http://arxiv.org/abs/hep-th/0409238}{{\ttfamily arXiv:hep-th/0409238
  [hep-th]}}.
%%CITATION = HEP-TH/0409238;%%.

\bibitem{Witten:1993ed}
E.~Witten, ``{Quantum background independence in string theory},''
\href{http://arxiv.org/abs/hep-th/9306122}{{\ttfamily arXiv:hep-th/9306122
  [hep-th]}}.
%%CITATION = HEP-TH/9306122;%%.

\bibitem{Ferrara:1991aj}
S.~Ferrara and J.~Louis, ``{Flat holomorphic connections and Picard-Fuchs
  identities from N=2 supergravity},''
  \href{http://dx.doi.org/10.1016/0370-2693(92)90188-A}{{\em Phys.Lett.}
  {\bfseries B278} (1992) 240--245},
\href{http://arxiv.org/abs/hep-th/9112049}{{\ttfamily arXiv:hep-th/9112049
  [hep-th]}}.
%%CITATION = HEP-TH/9112049;%%.

\bibitem{Zhou:2014thesis}
J.~Zhou, ``{Arithmetic Properties of Moduli Spaces and Topological String
  Partition Functions of Some Calabi-Yau Threefolds, Harvard Ph. D. Thesis},''.
  \url{http://nrs.harvard.edu/urn-3:HUL.InstRepos:12274581}.

\bibitem{Mohri:2000kf}
K.~Mohri, Y.~Onjo, and S.-K. Yang, ``{Closed submonodromy problems, local
  mirror symmetry and branes on orbifolds},''
  \href{http://dx.doi.org/10.1142/S0129055X01000867}{{\em Rev.Math.Phys.}
  {\bfseries 13} (2001) 675--715},
\href{http://arxiv.org/abs/hep-th/0009072}{{\ttfamily arXiv:hep-th/0009072
  [hep-th]}}.
%%CITATION = HEP-TH/0009072;%%.

\bibitem{Stienstra:2005wy}
J.~Stienstra, ``{Mahler measure variations, Eisenstein series and instanton
  expansions},''
\href{http://arxiv.org/abs/math/0502193}{{\ttfamily arXiv:math/0502193
  [math-nt]}}.
%%CITATION = MATH/0502193;%%.

\bibitem{Forbes:2005}
B.~Forbes and M.~Jinzenji, ``Extending the {P}icard-{F}uchs system of local
  mirror symmetry,'' \href{http://dx.doi.org/10.1063/1.1996441}{{\em J. Math.
  Phys.} {\bfseries 46} no.~8, (2005) 082302, 39}.
  \url{http://dx.doi.org/10.1063/1.1996441}.

\bibitem{Aspinwall:1993}
P.~S. Aspinwall and D.~R. Morrison, ``Topological field theory and rational
  curves,'' {\em Comm. Math. Phys.} {\bfseries 151} no.~2, (1993) 245--262.
  \url{http://projecteuclid.org/getRecord?id=euclid.cmp/1104252136}.

\bibitem{Manin:1995}
Y.~I. Manin, ``Generating functions in algebraic geometry and sums over
  trees,'' in {\em The moduli space of curves ({T}exel {I}sland, 1994)},
  vol.~129 of {\em Progr. Math.}, pp.~401--417.
\newblock Birkh\"auser Boston, Boston, MA, 1995.

\bibitem{Voisin:1996}
C.~Voisin, ``A mathematical proof of a formula of {A}spinwall and {M}orrison,''
  {\em Compositio Math.} {\bfseries 104} no.~2, (1996) 135--151.
  \url{http://www.numdam.org/item?id=CM_1996__104_2_135_0}.

\bibitem{Vafa:2001}
C.~Vafa, ``Superstrings and topological strings at large {$N$},''
  \href{http://dx.doi.org/10.1063/1.1376161}{{\em J. Math. Phys.} {\bfseries
  42} no.~7, (2001) 2798--2817}. \url{http://dx.doi.org/10.1063/1.1376161}.
  Strings, branes, and M-theory.

\bibitem{Marian:2011}
A.~Marian, D.~Oprea, and R.~Pandharipande, ``The moduli space of stable
  quotients,'' \href{http://dx.doi.org/10.2140/gt.2011.15.1651}{{\em Geom.
  Topol.} {\bfseries 15} no.~3, (2011) 1651--1706}.
  \url{http://dx.doi.org/10.2140/gt.2011.15.1651}.

\bibitem{Faber:1998}
C.~Faber and R.~Pandharipande, ``{Hodge integrals and Gromov-Witten theory},''
  \href{http://dx.doi.org/10.1007/s002229900028}{{\em Inventiones Mathematicae}
  {\bfseries 139} (1999) 173--199},
  \href{http://arxiv.org/abs/98101173}{{\ttfamily arXiv:98101173 [math.AG]}}.

\end{thebibliography}
\end{document}